\newtheorem{theorem}{Theorem}
\newtheorem{proposition}[theorem]{Proposition}
\newtheorem{lemma}[theorem]{Lemma}
\theoremstyle{definition}
\theoremstyle{remark}
\def\bR{\mathbb{R}}
\def\bZ{\mathbb{Z}}
\def\cF{\mathcal{F}}
\def\cB{\mathcal{B}}
\def\cN{\mathcal{N}}
\newcommand{\E}{\mathcal{E}}
\newcommand{\N}{\mathcal{N}}
\newcommand{\R}{\mathbb{R}}
\newcommand\1{{\ensuremath {\mathds 1} }}
\newcommand{\dt}{{\rm d}t}
\newcommand{\ds}{{\rm d}s}
\newcommand{\dx}{{\rm d}x}
\newcommand{\dy}{{\rm d}y}
\newcommand{\dz}{{\rm d}z}
\newcommand{\du}{{\rm d}u}
\newcommand{\dv}{{\rm d}v}
\newcommand{\B}{{\mathcal{B}}}
\newcommand{\vphi}{{\varphi}}
\newcommand{\eps}{\varepsilon}
\newcommand{\ca}{{\check a}}
\newcommand{\aN}{\mathfrak{a}_N}
\DeclareMathOperator{\tr}{Tr}
\author[C. Hainzl]{Christian Hainzl}
\address{Department of Mathematics, LMU Munich, Theresienstrasse 39, 80333 Munich, and Munich Center for Quantum Science and Technology, Schellingstr. 4, 80799 Munich, Germany}  
\email{hainzl@math.lmu.de}
\author[B. Schlein]{Benjamin Schlein}
\address{Institute of Mathematics, University of Zurich, Winterthurerstrasse 190, 8057 Zurich, Switzerland}  
\email{benjamin.schlein@math.uzh.ch}
\author[A. Triay]{Arnaud Triay}
\address{Department of Mathematics, LMU Munich, Theresienstrasse 39, 80333 Munich, and Munich Center for Quantum Science and Technology, Schellingstr. 4, 80799 Munich, Germany}  
\email{triay@math.lmu.de}
\begin{document}
\title[Bogoliubov theory]{Bogoliubov Theory in the Gross-Pitaevskii Limit: \\ a Simplified Approach}

\begin{abstract}
We show that Bogoliubov theory correctly predicts the low-energy spectral properties of Bose gases in the Gross-Pitaevskii regime. We recover recent results from \cite{BBCS,BBCS4}. While our main strategy is similar to the one developed in  \cite{BBCS,BBCS4}, we combine it with new ideas, taken in part from \cite{Hainzl,NT}; this makes our proof substantially simpler and shorter. As an important step towards the proof of Bogoliubov theory, we show that low-energy states exhibit complete Bose-Einstein condensation with optimal control on the number of orthogonal excitations.  
\end{abstract}
\maketitle
\section{Introduction}

We consider a Bose gas consisting of $N$ particles moving in the box $\Lambda = [-1/2 ; 1/2]^3$, with periodic boundary conditions. In the Gross-Pitaevskii regime, particles interact through a potential with scattering length of the order $1/N$. The Hamilton operator acts on the Hilbert space $L^2_s (\Lambda^N)$ of permutation symmetric complex valued square integrable functions on $\Lambda^N$ and it has the form 
\begin{equation} \label{eq:ham} 
H_N = \sum_{i=1}^N -\Delta_i +  \sum_{i<j}^N V_N(x_i-x_j) 
\end{equation} 
where 
 $$V_N(x) :=  N^{2} V(N x),$$ for a $V\in L^{2}(\R^3)$ non-negative, radial and compactly supported. We denote the scattering length of $V$ by $\mathfrak{a} > 0$. Following \cite{Hainzl,HS} we define it through the formula 
 \begin{equation}\label{sclBorn} 4\pi \mathfrak{a} = \frac{1}{2} \int_{\R^3} V(x) \dx  - \left \langle \frac  12 V, \frac{1} {-\Delta + \frac 1 2 V} \frac 1 2 V \right \rangle.
\end{equation}

As first proven in \cite{LY,LSY}, the ground state energy $E_N$ of (\ref{eq:ham}) satisfies 
\begin{equation}\label{eq:LY} E_N / N \to 4\pi \mathfrak{a} \end{equation} 
in the limit $N \to \infty$. In particular, to leading order, the ground state energy only depends on the interaction potential through its scattering length $\mathfrak{a}$. In \cite{LieSei-02,LS2,NamRouSei-15}, it was also shown that the ground state of (\ref{eq:ham}) and, in fact, every normalized sequence $\psi_N \in L_s^2 (\Lambda^N)$ of approximate ground states, with  
\[ \frac{1}{N} \, \langle \psi_N , H_N \psi_N \rangle \to 4\pi \mathfrak{a} , \]
exhibits complete Bose-Einstein condensation in the zero-momentum state $\phi_0 (x) = 1$, for all $x \in \Lambda$, in the sense that the corresponding one-particle reduced density matrix $\gamma_N$ (normalized so, that $\tr \, \gamma_N = 1$) satisfies 
\[ \lim_{N \to \infty}  \langle \phi_0, \gamma_N \phi_0 \rangle = 1 \, . \]

Recently, a rigorous version of Bogoliubov theory \cite{bog} has been developed in \cite{BBCS0,BBCS1,BBCS,BBCS4} to provide more precise information on the low-energy spectrum of (\ref{eq:ham}), resolving the ground state energy and the low-lying excitations up to errors that vanish in the limit $N \to \infty$, and on the corresponding eigenvectors, showing Bose-Einstein condensation with optimal control on the number of orthogonal excitations. Analogous results have been established also for Bose gases trapped by external potentials in the Gross-Pitaevskii regime \cite{NNRT,BSS1,NT,BSS2} and for Bose gases in scaling limits interpolating between the Gross-Pitaevskii regime and the thermodynamic limit \cite{ABS,BCaS}. Very recently, the upper bound for the ground state energy has been also extended to the case of hard-sphere interaction, as announced in \cite{BCOPS}. 

In this paper, we propose a new and substantially simpler proof of the results established in \cite{BBCS,BBCS4}. Our approach follows some of the ideas in the proof of Bose-Einstein condensation with optimal bounds on the number of excitations obtained in \cite{Hainzl}. Moreover, it makes use of some ideas introduced in \cite{NT}, for the case of particles trapped by an external potential. The next theorem is our main result, it describes the low-energy spectrum of (\ref{eq:ham}). 
\begin{theorem} \label{thm:main} 
Let $V \in L^{2} (\bR^3)$ be non-negative, radial and compactly supported, and let $E_N$ denote the ground state energy of (\ref{eq:ham}). 
Then, the spectrum of $H_N - E_N$ below a threshold $\Theta \leq N^{1/17}$ consists of eigenvalues having the form 
\begin{equation}\label{eq:excit} \sum_{p \in 2\pi \bZ^3 \backslash \{ 0 \}} n_p \sqrt{|p|^4 + 16 \pi \mathfrak{a} p^2} + \mathcal O (N^{-1/17} \Theta) \end{equation} 
with $n_p \in \mathbb{N}$, for all $p \in 2\pi \mathbb{Z}^3 \backslash \{ 0 \}$. 
\end{theorem}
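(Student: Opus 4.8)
The plan is to follow the by-now-standard strategy of excitation Hamiltonians combined with cubic and quadratic renormalizations, but to streamline it using the parametrix identity \eqref{sclBorn} for the scattering length, as in \cite{Hainzl,NT}. First I would pass to the \emph{excitation Fock space} $\mathcal{F}_{\perp\phi_0}^{\leq N}$ over $\{\phi_0\}^\perp$ by means of the unitary $U_N$ of Lewin--Nam--Serfaty--Solovej, which removes the condensate and produces an operator $\mathcal{L}_N = U_N H_N U_N^*$ whose terms are organized by the number of factors orthogonal to $\phi_0$. The quadratic part already contains the kinetic term $\mathcal{K} = \sum_{p\neq 0} p^2 a_p^* a_p$ and a potential term proportional to $\widehat V(p/N)$; the cubic and quartic terms are the ones that must be renormalized. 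Rather than the full Bogoliubov transformation with the explicit scattering solution $f_N$, I would use a \emph{quadratic transformation} $e^{B}$ with kernel built from the low-momentum part of the solution to $(-\Delta + \tfrac12 V)f = \tfrac12 V$ on $\mathbb{R}^3$, exactly so that formula \eqref{sclBorn} makes the emerging constant $4\pi\mathfrak{a}$ appear without having to track the precise zero-energy scattering equation on the torus. A cubic transformation $e^{A}$ — here the ideas from \cite{Hainzl} enter — is applied first (or in combination) to cancel the term that is cubic in the excitation operators and linear in the interaction; this is what ultimately produces the correct second-order constant and the gap in the effective Hamiltonian.

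The key intermediate step, stated as a separate proposition, is the \emph{a priori estimate}: any normalized $\psi_N$ with $\langle\psi_N, (H_N-E_N)\psi_N\rangle \leq \Theta$ satisfies $\langle\psi_N, (\mathcal{N}_+ + \mathcal{N}_+^2/N + \mathcal{K})\psi_N\rangle \lesssim \Theta + 1$, i.e. complete condensation with \emph{optimal} control on the number $\mathcal{N}_+$ of excitations and on their kinetic energy. I would prove this by a bootstrap: the Lieb--Seiringer--Yngvason-type bound gives $\langle \mathcal{N}_+\rangle \lesssim N^{\alpha}$ for some $\alpha<1$ to start, and then one applies the renormalized Hamiltonian identity, using that after conjugation $\mathcal{L}_N \geq c\,\mathcal{N}_+ + c\,\mathcal{K} + E_N - C$ up to error terms that are controlled by the current (weaker) bound on $\mathcal{N}_+$; iterating drives $\alpha$ down to the optimal $\langle\mathcal{N}_+\rangle \lesssim \Theta+1$. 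The numerology $\Theta \leq N^{1/17}$ and the error $\mathcal O(N^{-1/17}\Theta)$ will come out of balancing the various powers of $\mathcal{N}_+/N$ and momentum cutoffs $\ell$ that appear when commuting $e^A$ and $e^B$ through the cubic and quartic terms.

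Once the effective Hamiltonian is shown to be, on states of energy $\leq \Theta$,
\[
\mathcal{M}_N = E_N + \sum_{p\in 2\pi\mathbb{Z}^3\setminus\{0\}} \sqrt{|p|^4 + 16\pi\mathfrak{a} p^2}\; b_p^* b_p + \mathcal{E}_N,
\]
with $\pm\mathcal{E}_N \lesssim N^{-1/17}(\mathcal{N}_+ + \mathcal{K} + 1)$ and $b_p$ the (approximately bosonic) excitation operators, the conclusion \eqref{eq:excit} follows from a min-max argument: the a priori estimate confines low-energy eigenvectors to the sector where $b_p$ behaves canonically and $\mathcal{E}_N$ is a genuine perturbation of size $N^{-1/17}\Theta$, so the eigenvalues of $H_N - E_N$ below $\Theta$ match those of the free bosonic field $\sum_p e(p)\, b_p^* b_p$, namely $\sum_p n_p e(p)$ with $e(p) = \sqrt{|p|^4+16\pi\mathfrak{a}p^2}$, up to the stated error.

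I expect the \textbf{main obstacle} to be the a priori estimate and, within it, the control of the cubic term: the naive bound on $\langle \mathcal{N}_+^2/N\rangle$ is not good enough to close the bootstrap, and one needs the commutator of the cubic transformation $e^A$ with $\mathcal{K}$ to produce a term that absorbs the dangerous contributions, which forces a careful choice of the momentum cutoff in the kernel of $A$ and a delicate splitting of the potential energy into its high- and low-momentum parts. The transference of these bounds from the ground state to \emph{all} approximate low-energy states — needed for the spectral statement rather than just the energy — is where the argument is most technical, but conceptually it is the same estimate run with $\Theta$ in place of $O(1)$.
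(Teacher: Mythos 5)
In outline you follow the same route as the paper: renormalization by a quadratic and a cubic conjugation built from a zero-energy scattering solution, an a priori bound giving optimal condensation for all low-energy states, diagonalization of the residual quadratic Hamiltonian, and a min--max comparison with $\sum_p \sqrt{|p|^4+16\pi\mathfrak a p^2}\,a_p^\dagger a_p$. Two steps, however, are genuinely problematic as you state them. First, your quadratic kernel is cut off on the wrong side: you propose to build $e^{B}$ from the \emph{low-momentum} part of the scattering solution, whereas the whole point of this conjugation is to extract the short-distance correlations, i.e.\ the \emph{high-momentum} part. In the paper, $\cB_2$ in \eqref{defB} uses $\tilde\vphi_p=\vphi_p\chi_{|p|>N^\alpha}$ from \eqref{eq:tvphi}, with $\vphi$ solving the box scattering equation \eqref{scattequ}; it is precisely the sum $N\sum_p\hat V_N(p)\vphi_p$ over momenta up to $|p|\sim N$ that resums the Born series and turns $\hat V(0)$ into $8\pi\aN$ in \eqref{eq:aNdef}. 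If the kernel contains only low momenta, the commutator $[H_1+Q_4,B]$ cannot cancel the high-momentum part of $Q_2$, no resummation takes place, and the subsequent diagonalization produces the Born approximation of the scattering length instead of $\mathfrak a$. (The low-momentum region $|p|\leq N^\alpha$ is handled only at the very end, by the second quadratic transformation $\cB_4$.)

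Second, the a priori estimate. The input you invoke, $\langle\cN_+\rangle\lesssim N^{\alpha}$ with $\alpha<1$ for all approximate low-energy states, is not an off-the-shelf consequence of Lieb--Seiringer--Yngvason; what is available without further work is only $\langle\cN_+\rangle=o(N)$. More importantly, the real obstruction is not improving $\langle\cN_+\rangle$ but controlling $\langle\cN_+^2\rangle$, which enters the renormalized Hamiltonian through error terms of the type $N^{-1+\alpha}\cN_+^2$ and is not bounded by $\langle\cN_+\rangle^2$ on a general state; a bootstrap on the first moment alone, even helped by the cubic commutator, does not close, and your own remark that the naive bound on $\langle\cN_+^2/N\rangle$ is insufficient is exactly the point left unresolved. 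The paper circumvents this by localization in the number of excitations, the identity \eqref{eq:loc_identity} with $M_0=\eps N$ (respectively $M_0=N^{1/2+1/34}$ in the spectral argument), so that on the localized sector $\cN_+^2\leq M_0\,\cN_+$, while the complementary sector is excluded using only qualitative BEC; see Prop.~\ref{prop:BEC}. You would need either this localization or propagated bounds on higher moments of $\cN_+$ through $e^{B}$ and $e^{A}$; as written, this step is a gap. A smaller omission: your final effective Hamiltonian is quadratic in the non-canonical operators $b_p$, so its diagonalization itself requires an argument --- the paper interpolates between the generalized and the standard Bogoliubov rotation in Prop.~\ref{prop:B4} --- but that point is fixable along known lines.
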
 

{\it Remark.} Our analysis also provides a precise estimate for the ground state energy $E_N$ of (\ref{eq:ham}), showing that 
\begin{equation}\label{eq:ENbd} E_N = 4\pi \mathfrak{a}_N (N-1) + \frac{1}{2} \sum_{p \in 2\pi\mathbb{Z}^3\backslash \{ 0 \}} \Big[ \sqrt{|p|^4 + 16 \pi \mathfrak{a} p^2} - p^2 - 8\pi \mathfrak{a} + \frac{(8\pi \mathfrak{a})^2}{2p^2} \Big] + \mathcal O (N^{-1/17}) \end{equation} 
with a ``box scattering length'' $\mathfrak{a}_N$, which will be defined in the next section, satisfying $|\mathfrak{a}_N - \mathfrak{a}| \lesssim N^{-1}$. This immediately implies that $E_N$ is given by (\ref{eq:ENbd}), with $\mathfrak{a}_N$ replaced by the true scattering length $\mathfrak{a}$, up to an error that remains bounded, as $N \to \infty$. In \cite{BBCS}, also the order one correction arising from $N (\mathfrak{a}_N - \mathfrak{a})$ was computed. Here we skip this step, to keep our presentation as simple as possible. Note that an estimate similar to (\ref{eq:ENbd}) has been recently shown to hold in the thermodynamics limit; see \cite{FS,FS2} for the lower bound and \cite{YY,BCS} for the upper bound. 

The main strategy we use to prove Theorem \ref{thm:main} is similar to the one developed in \cite{BBCS}. First of all, we switch to the formalism of second quantization, expressing the Hamilton operator in momentum space, in terms of creation and annihilation operators. Afterwards, we renormalize the Hamilton operator, conjugating it first with a generalized Bogoliubov transformation (the exponential of a quadratic expression in the modified creation and annihilation operators $b_p^\dagger = a_p^\dagger a_0 / \sqrt{N}$, $b_p = a_0^\dagger a_p / \sqrt{N}$) and afterwards with the exponential of a cubic expression in (modified) creation and annihilation operators. Effectively, these conjugations regularize the interaction potential. At the end, we diagonalize the resulting quadratic Hamiltonian; this allows us to establish Bose-Einstein condensation with optimal bounds on the number of excitations and to compute the low-energy spectrum, proving (\ref{eq:excit}). 

Compared with \cite{BBCS}, our approach has the following advantages. First of all, we make a different choice for the coefficients $\vphi_p$ of the quadratic and cubic transformations that are used to renormalize the Hamiltonian and that should model correlations among particles.  Instead of the ground state of a Neumann problem on a ball of radius $\ell >0$, we consider here the solution of an appropriate zero-energy scattering equation, describing scattering processes inside the box $\Lambda$. This simplifies the proof of important properties of $\vphi$ and improves cancellations between different terms arising in the many-body analysis. Second, we restrict the quadratic conjugation to momenta $|p| > N^\alpha$, for some $0 < \alpha <1$. As a consequence, it is enough to expand its action to first or, in few cases, to second order; higher order contributions are negligible. This is a substantial advantage, compared with \cite{BBCS}, where no cutoff was imposed and all contributions had to be computed precisely (in contrast to standard Bogoliubov transformations, the action of generalized Bogoliubov transformations is not explicit). The presence of the cutoff means that the interaction is regularized only up to length scales $\ell \leq N^{-\alpha}$; this needs to be compensated at the end, when we diagonalize the quadratic Hamiltonian resulting from the renormalization procedure. Another important simplification of the analysis concerns the final diagonalization. As in \cite{BBCS}, we implement it through a generalized Bogoliubov transformation, defined (like the first quadratic transformation) in terms of the modified creation and annihilation operators $b_p^\dagger, b_p$. Here, however, instead of expanding the action of the generalized Bogoliubov transformation to all orders, we compare it directly with the explicit action of the corresponding standard Bogoliubov transformation, making use of an appropriate interpolation. Finally, we use the tool of localization in the number of particles, not only to show Bose-Einstein condensation (similarly as in \cite{BBCS4}), but also to compute the spectrum and prove Theorem \ref{thm:main}, This makes the analysis substantially simpler (but it provides a worse estimate on the error).

\section{Fock space formalism} 

We introduce the bosonic Fock space 
\[ \cF = \bigoplus_{n \geq 0} L^2_s (\Lambda^n). \]
For a momentum $p \in \Lambda^\ast = 2\pi \mathbb{Z}^3$ and denoting $u_p(x) = e^{i p \cdot x}$ we define $a_p^\dagger = a^\dagger(u_p)$ and $a_p = a(u_p)$, where $a^\dagger$ and $a$ are the usual creation and annihilation operators. They satisfy the canonical commutation relations 
\begin{equation}\label{eq:CCR} \big[ a_p , a_q^\dagger \big] = \delta_{p,q}, \qquad \big[ a_p, a_q \big] = \big[ a_p^\dagger , a_q^\dagger \big] = 0. \end{equation} 
We denote, in configuration space, the creation and annihilation operator valued distributions by $\ca^\dagger_x, \ca_x$, they satisfy $a^\dagger_p = \int e^{ip\cdot x} \ca^\dagger_x \dx$, $a_p = \int e^{-ip\cdot x} \ca_x \dx$. The number of particles operator $\cN$ on $\cF$ is given by 
\[ \cN = \sum_{p \in \Lambda^\ast} a_p^\dagger a_p. \]

In the formalism of second quantization, the Hamilton operator (\ref{eq:ham}) takes the form 
\begin{equation}\label{eq:HN-fock}
H_N = \sum_{p \in \Lambda^\ast} p^2  a_p^\dagger a_p + \frac{1}{2}\sum_{r,p,q \in \Lambda^\ast} \hat{V}_N (r) a_{p+r}^\dagger a_q^\dagger a_p a_{q+r},
\end{equation} 
with 
\begin{equation} \label{eq:hatVN} \hat{V}_N (r) = \frac 1{N} \hat V(r/N).\end{equation}  
To recover (\ref{eq:ham}), we have to restrict (\ref{eq:HN-fock}) on the sector with $\cN = N$. 

Because of the presence of Bose-Einstein condensation, the mode with $p=0$ plays a special role when considering states with low energy. We introduce the notation $\cN_0 = a_0^\dagger a_0$  and $\cN_+ = \cN - \cN_0$ for the operators measuring the number of particles in the condensate and the number of excitations, respectively. Following Bogoliubov \cite{bog}, we decompose (\ref{eq:HN-fock}) according to the number of $a_0, a_0^\dagger$ operators. 
Since (on $\{ \cN = N \}$) 
%
%
 $$ a^\dagger_0 a^\dagger_0 a_0 a_0 = \N_0(\N_0 - 1) = (N - \N_+)(N - \N_+ -1) = N(N - 1) - \N_+(2N - 1) + \N_+^2,$$
we can rewrite (\ref{eq:HN-fock}) as 
\begin{equation}
	\label{eq:decompo_HN}
H_N = H_0+ H_1 + H_2 + Q_2 + Q_3 + Q_4,
\end{equation}
where
\[ \begin{split} H_0 &= \frac{\hat{V}_N(0) }{2}  N (N - 1 ) , \qquad H_1 = \sum_{p \neq 0} p^2  a_p^\dagger a_p, \\ H_2 &= \sum_{p\neq 0} \hat V_N(p) a^\dagger_p a_p(N - \N_+)  - \frac {\hat V_N(0) }{2} \N_+(\N_+ -1). 
\end{split} \]
and 
\begin{equation}
\label{eq:Qs} 
\begin{split} 
Q_2 = & \frac{1}{2}\sum_{p \neq 0} \hat V_N(p) [a_p^\dagger a_{-p}^\dagger a_0 a_0 + \text{h.c.}], \\
Q_3 = &  \sum_{q,r,q+r\neq 0} \hat V_N(r)\left[a_{q+r}^\dagger a_{-r}^\dagger a_{q}a_0 + \text{h.c.}  \right],   \\
Q_4 = & \frac{1}{2}\sum_{p,q \neq 0, r \neq -p, r\neq -q} \hat V_N(r) a_{p+r}^\dagger a_q^\dagger a_p a_{q+r}.
\end{split}
\end{equation}

Since we isolated the contributions of the zero modes, we follow from now on the convention that the indices appearing in creation and annihilation operators are always non-zero,  except stated otherwise.

Naive power counting, based on the fact that $a_0 , a_0^\dagger \simeq \sqrt{N}$ due to the presence of Bose-Einstein condensation and on the scaling (\ref{eq:hatVN}) of the interaction, suggests that the terms $Q_3$ and $Q_4$ are small. For this reason, Bogoliubov neglected these contributions, and diagonalized the remaining quadratic terms. This led to expressions similar to (\ref{eq:excit}), (\ref{eq:ENbd}) for the low-energy spectrum of (\ref{eq:ham}), but with the scattering length replaced by its first and second Born approximations. In fact, because of the slow decay of the potential in Fourier space, the operators $Q_3$ and $Q_4$ are not small. They contain instead important terms, which effectively renormalize the interaction and produce the scattering length appearing in the formulas (\ref{eq:excit}), (\ref{eq:ENbd}). To obtain a rigorous proof of Theorem 1, it is therefore crucial that we first extract the large contributions to the energy that are hidden in the cubic and quartic operators $Q_3$, $Q_4$; only afterwards we can diagonalize the remaining quadratic terms.  

Let us give a little more detail about the main ideas of the proof. Following the strategy of \cite{BBCS}, we will first conjugate (\ref{eq:decompo_HN}) with a unitary operator of the form $e^{\cB_2}$, where $\cB_2$ is a quadratic expression in creation and annihilation operators $a_p, a_p^\dagger$, associated with momenta $p \not = 0$. The goal of this conjugation is to extract contributions that regularize the off-diagonal term $Q_2$ and, at the same time, reconstruct the leading order ground state energy $4 \pi {\mathfrak a}_N N$, when combined with $H_0$. Roughly speaking, neglecting several error terms, we will find  
\begin{equation}\label{eq:eBHeB}
\begin{split}  e^{-\B_2} H_N e^{\B_2}  \simeq  & \, \, 4 \pi \aN (N-1) + \sum_{|p|\leq N^{\alpha}} \frac{(4\pi \aN)^2}{p^2}  \\ &+ \sum_{p \in \Lambda^*} (p^2 + 2\hat V(0) - 8\pi \aN) a^\dagger_p a_p  + \sum_{|p| \leq N^\alpha} 4\pi \mathfrak{a}_N[ a_p^\dagger a_{-p}^\dagger + \text{h.c.} ] + Q_3 + Q_4.
\end{split}
\end{equation} 
As explained in the introduction, an important difference, compared with \cite{BBCS}, is that here we impose an infrared cutoff in $\cB_2$, defined in (\ref{defB}), letting it act only on momenta $|p| > N^\alpha$. On the one hand, this choice simplifies the computation of the action of $\cB_2$ (it allows us to expand it; important contributions arise only from the first and second commutator). On the other hand, it produces terms, like the sum on the first line and the regularized off-diagonal quadratic term on the second line of (\ref{eq:eBHeB}), which contribute to the energy to order $N^\alpha$; these terms are larger than the precision we are looking for, and will need to be compensated for, with the second quadratic transformation. Notice that the idea of using an infrared cutoff in the quadratic conjugation already appeared in the proof of complete Bose-Einstein condensation given in \cite{BBCS0} and, more recently, in the proof of the validity of Bogoliubov theory for Bose gases trapped by an external potential obtained in \cite{NT}. 

Observing (\ref{eq:eBHeB}), it is clear that we still have to renormalize the diagonal quadratic term (proportional to $\hat{V} (0)$) and the cubic term $Q_3$. To this end, we will introduce a unitary transformation $e^{\cB_3}$, with $\cB_3$, defined in (\ref{defB3}), cubic in the operator $a_p, a_p^\dagger$, with $p \not = 0$. Up to several negligible errors, conjugation with $e^{\cB_3}$ will lead us to 
\begin{equation}\label{eq:eB3HeB3} \begin{split} e^{-\cB_3} e^{-\B_2} H_N e^{\B_2}e^{\cB_3} \simeq \, & \, \, 4 \pi \aN (N-1) + \sum_{|p|\leq N^{\alpha}} \frac{(4\pi \aN)^2}{p^2}.
 \\ &+ \sum_{p \in \Lambda^*} (p^2 +  8\pi \aN) a^\dagger_p a_p  + \sum_{|p| \leq N^\alpha} 4\pi \mathfrak{a}_N[ a_p^\dagger a_{-p}^\dagger + \text{h.c.} ] + Q_4 \, . 
 \end{split}
 \end{equation} 
The only term on the r.h.s. of the last equation, where we still have the original, singular, potential $\hat{V}_N$ is $Q_4$; all other terms have been renormalized and are now expressed in terms of the scattering length $\frak{a}_N$. Fortunately, $Q_4$ is positive; for this reason, we do not need to renormalize it (for lower bounds, it can be neglected; for upper bounds, it only needs to be controlled on special trial states). Finally, in section \ref{sec:B4},  we will apply a second quadratic transformation $e^{\cB_4}$ to diagonalize the remaining quadratic Hamiltonian on the r.h.s. of (\ref{eq:eB3HeB3}). This will lead us to 
\begin{align*} 
e^{-\cB_4}e^{-\cB_3} e^{-\B_2} H_N e^{\B_2}e^{\cB_3} e^{\cB_4} \simeq 
  &\; 4 \pi \aN (N-1)  + \frac{1}{2}\sum_{p} \left[ \sqrt{p^4 + 16 \pi \aN p^2} - p^2 - 8 \pi \aN + \frac{(8\pi \aN)^2}{2p^2}\right] \\ &+ \sum_{p} \sqrt{p^4 + p^2 16 \pi \aN } \, a^\dagger_p a_p + Q_4
\end{align*}
which will allow us to show Theorem \ref{thm:main}. To control error terms, we use the tool of localization in the number of particles to show Bose-Einstein condensation (similarly  to \cite{BBCS4,NT}).

\section{Quadratic renormalization} 
\label{sec:quadra}

Starting with the quadratic transformation we conjugate (\ref{eq:HN-fock}) with the unitary $e^{\cB_2}$, where 
\begin{equation}\label{defB}
 \B_2 = \frac 1{2} \sum_{p} \tilde \vphi_p [ a^\dagger_p a^\dagger_{-p} a_0 a_0 - \rm{h.c.}].
\end{equation}
We are going to fix the coefficients $\tilde\vphi_p$ so that the commutator $[ H_1 + Q_4 , \cB_2 ]$ arising from the action of (\ref{defB}) renormalizes the off-diagonal quadratic term $Q_2$ (effectively replacing the singular potential $V_N$ with a regularized interaction having the same scattering length). To this end, we choose $\vphi_p$ satisfying the relations
\begin{equation} \label{scattequ}
p^2 \vphi_p + \frac{1}{2}\sum_{q\neq 0} \hat{V}_N \left((p-q)\right)  \vphi_q = -\frac{1}{2} \hat{V}_N(p) \, ,
\end{equation}
for all $p \in \Lambda^\ast_+ = \Lambda^\ast \backslash \{ 0 \}$. Eq. (\ref{scattequ}) is a truncated version of the zero-energy scattering equation for the potential $V_N$ on the whole space $\mathbb{R}^{3}$.

To prove the existence of a solution of (\ref{scattequ}), we consider the operator \[ \mathfrak{h} = -\Delta + \frac{1}{2} V_N \]
acting on the one-particle space $L^2 (\Lambda)$ (for $N$ large enough, $V_N$ is supported in $[-1/2 ; 1/2]^3$ and can be periodically extended to define a function on the torus). Denoting by $P_0^\perp$ the orthogonal projection onto the orthogonal complement 
of the zero-momentum mode $\vphi_0$ in $L^2 (\Lambda)$, we find (since $V_N \geq 0$), that $P_0^\perp \mathfrak{h} P_0^\perp \geq C > 0$ and therefore that $P_0^\perp \mathfrak{h} P_0^\perp$ is invertible. Thus, we can define $\check\vphi \in L^2 (\Lambda)$ through 
\begin{equation}\label{eq:defphi}  \check\vphi = -\frac{1}{2} P_0^\perp \left[   P_0^\perp \Big(-\Delta + \frac{1}{2} V_N \Big) P_0^\perp \right]^{-1} P_0^\perp V_N\, . \end{equation} 
It is then easy to check that the Fourier coefficients of $\check{\vphi}$ satisfy the relations (\ref{scattequ}). 

Using the sequence $\{ \vphi_p \}_{p \in 2\pi \mathbb{Z}^3 \backslash \{ 0 \}}$, we can define the ``box scattering length'' of $V_N$ by 
\begin{equation}\label{eq:aNdef}
8 \pi \aN := N \Big[ \hat V_N (0) + \sum_p \hat{V}_N (p) \vphi_p \Big] = \hat{V} (0) + N \sum_p \hat{V}_N (p) \vphi_p \, .
\end{equation}
As proven in \cite{Hainzl}, we have that $|\aN - \mathfrak{a}| \lesssim N^{-1}$.

As explained earlier, we renormalize first the high-momenta part of $Q_2$, for this reason we use a cutoff version of $\vphi_p$ to momenta $|p| > N^\alpha$, for some $0< \alpha < 1$. We define therefore 
\begin{equation}\label{eq:tvphi} \tilde \vphi_p = \vphi_p \chi_{|p| > N^\alpha} \, .\end{equation} 
The next lemma lists some important properties of the sequences $\vphi, \tilde{\vphi}$ and of the scattering length $\mathfrak{a}_N$, which will be useful for our analysis.
\begin{lemma}
\label{Lemmavphi} 
Let $V \in L^{2} (\mathbb{R}^3)$ be non-negative and compactly supported. Define $\check\vphi$ as in (\ref{eq:defphi}) and denote by $\vphi_p$ the corresponding Fourier coefficients. Then $\vphi_p \in \bR$, $\vphi_{-p} = \vphi_p$ and
\begin{equation}\label{eq:point} |\vphi_p| \lesssim \frac{1}{N p^2} \end{equation}  
for all $p \in 2\pi \mathbb{Z} \backslash \{ 0 \}$. Moreover, with (\ref{eq:tvphi}), we have 
\begin{align*}
& \|\tilde \vphi \|_2  \lesssim  N^{ - 1 - \alpha/2},\qquad \|\tilde \vphi\|_\infty  \lesssim N^{- 1 - 2\alpha},  \qquad  \|\tilde \vphi \|_1  \lesssim  1,
\end{align*}
and 
\begin{align}
N \sum_{p} \hat{V}_N(p) \tilde \vphi_p &= 8\pi \aN - \hat V(0) + \mathcal O(N^{\alpha-1}). \label{eq:approx_aN}
\end{align}
\end{lemma}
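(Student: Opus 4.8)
The plan is to prove Lemma~\ref{Lemmavphi} in three stages: first the basic properties of $\vphi$ from the operator-theoretic definition \eqref{eq:defphi}, then the $L^p$-norm bounds on the cutoff $\tilde\vphi$, and finally the near-identity \eqref{eq:approx_aN} relating $\tilde\vphi$ to $\mathfrak a_N$.

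\emph{Step 1: elementary properties and the pointwise bound.} Since $V_N$ is real and even, the operator $\mathfrak h = -\Delta + \tfrac12 V_N$ commutes with complex conjugation and with the parity $x\mapsto -x$; together with the fact that $P_0^\perp$ shares these symmetries, this forces $\check\vphi$ to be real and even, hence $\vphi_p\in\bR$ and $\vphi_{-p}=\vphi_p$. For the pointwise bound \eqref{eq:point}, I would rewrite the scattering equation \eqref{scattequ} as $p^2\vphi_p = -\tfrac12\hat V_N(p) - \tfrac12(\widehat{V_N}*\vphi)(p)$ and note $(\widehat{V_N}*\vphi)(p) = \widehat{(\check{V_N \vphi})}(p)$ up to normalization, so that $p^2|\vphi_p| \lesssim \|\hat V_N\|_\infty + \|\widehat{V_N\check\vphi}\|_\infty \lesssim \|V_N\|_1 + \|V_N\check\vphi\|_1 \lesssim N^{-1}(1 + \|\check\vphi\|_2\|V\|_2 \cdot N^{-1/2})$, using $\|V_N\|_1 = \|V\|_1 \sim N^{-1}$ wait --- more carefully $\|\hat V_N\|_\infty \le \|V_N\|_1 = N^{-1}\|V\|_1$ and $\|V_N\check\vphi\|_1 \le \|V_N\|_2\|\check\vphi\|_2 = N^{1/2}\|V\|_2\|\check\vphi\|_2$, so I first need $\|\check\vphi\|_2 \lesssim N^{-1}$. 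That last bound follows from \eqref{eq:defphi}: since $P_0^\perp\mathfrak h P_0^\perp \ge C>0$, we get $\|\check\vphi\|_2 \le \tfrac{1}{2C}\|P_0^\perp V_N\|_2 \le \tfrac{1}{2C}\|V_N\|_2 \lesssim N^{1/2}$. That only gives $p^2|\vphi_p|\lesssim 1$; to reach $N^{-1}$ one must exploit the factor structure more carefully, e.g.\ by testing \eqref{eq:defphi} against $V_N$ itself (as in the definition \eqref{sclBorn}) and bootstrapping, or by invoking the analogous estimate from \cite{Hainzl} as the text suggests. The cleanest route is to compare $\check\vphi$ with the whole-space zero-energy scattering solution $f_N$ for $\tfrac12 V_N$, for which the decay $|1-f_N(x)| \lesssim N^{-1}(1+|x|)^{-1}$ is classical, and control the (exponentially small in $N^{1-\alpha}$ difference, periodization being harmless on the support of $V_N$).

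\emph{Step 2: norm bounds on $\tilde\vphi$.} Given \eqref{eq:point}, these are pure summation estimates. We have $\|\tilde\vphi\|_\infty = \sup_{|p|>N^\alpha}|\vphi_p| \lesssim \sup_{|p|>N^\alpha} (Np^2)^{-1} \lesssim N^{-1-2\alpha}$; then $\|\tilde\vphi\|_2^2 = \sum_{|p|>N^\alpha}|\vphi_p|^2 \lesssim N^{-2}\sum_{|p|>N^\alpha}|p|^{-4} \lesssim N^{-2}\cdot N^{-\alpha} = N^{-2-\alpha}$ by comparison with $\int_{|p|>N^\alpha}|p|^{-4}\,dp$; and $\|\tilde\vphi\|_1 = \sum_{|p|>N^\alpha}|\vphi_p| \lesssim N^{-1}\sum_{p\ne0}|p|^{-2}$, which diverges logarithmically --- so here one must be slightly more careful and use $\|\tilde\vphi\|_1 \le \|\tilde\vphi\|_2 \cdot \#\{p : N^\alpha < |p| \le R\}^{1/2} + (\text{tail})$, or better, note that $\vphi_p$ actually decays faster than $p^{-2}$ for large $p$: from the scattering equation $p^2\vphi_p = -\tfrac12\widehat{(V_N(1+\check\vphi\cdot?))}$, wait, the right statement is $p^2 \vphi_p = -\tfrac12 \widehat{V_N f}(p)$ where $f = 1 + \check\vphi$ is the scattering solution, and since $V_N f \in L^2$ with $\|V_N f\|_2 \lesssim \|V_N\|_2 \lesssim N^{1/2}$, Cauchy–Schwarz gives $\sum_{|p|>N^\alpha}|p|^{-2}|\widehat{V_N f}(p)| \le \|V_N f\|_2 (\sum_{|p|>N^\alpha}|p|^{-4})^{1/2} \lesssim N^{1/2}\cdot N^{-1/2-\alpha/2} = N^{-\alpha/2}$, wait that's still not $O(1)$ --- let me instead just bound $\|\tilde\vphi\|_1 \le \|\vphi\|_1$ and show $\|\vphi\|_1 \lesssim 1$ by splitting at $|p|=1$: for $|p|\le 1$ use $|\vphi_p|\lesssim (Np^2)^{-1}$ giving a finite sum $\lesssim N^{-1}$, and for $|p|\ge 1$ use the $L^2$ bound $\sum_{|p|\ge1}|\vphi_p| \le (\sum|p|^{-4})^{1/2}(\sum |p|^4|\vphi_p|^2)^{1/2} = (\sum|p|^{-4})^{1/2}\cdot\tfrac12\|V_N f\|_2 \lesssim N^{1/2}$, which is too big. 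The honest fix: $\|\vphi\|_1 \lesssim 1$ requires $\|V_N f\|_{L^1} \lesssim 1$-type input or the explicit form; I would again defer to the scattering-solution comparison, where $|\check\vphi(x)| \lesssim N^{-1}|x|^{-1}\chi_{|x|\lesssim N^{-1}} + N^{-1}(1+|x|)^{-1}$ gives $\|\vphi\|_1 = \|\widehat{\check\vphi}\|_{\ell^1}$ controlled via Poisson summation by $\sum_{x\in\bZ^3}|\check\vphi(x)| \lesssim N^{-1} + N^{-1} \sum_{x\ne0}|x|^{-1}\cdot(\text{decay})$... this needs the genuine $(1+|x|)^{-1}$ decay which holds only up to the box size, making the sum $\lesssim \log N$; so the claimed $\|\tilde\vphi\|_1 \lesssim 1$ most plausibly uses a slightly better decay or a different norm. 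I would follow \cite{Hainzl} verbatim here rather than reconstruct it.

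\emph{Step 3: the identity \eqref{eq:approx_aN}.} Write $N\sum_p \hat V_N(p)\tilde\vphi_p = N\sum_p \hat V_N(p)\vphi_p - N\sum_{|p|\le N^\alpha}\hat V_N(p)\vphi_p$. The first sum equals $8\pi\mathfrak a_N - \hat V(0)$ exactly, by the definition \eqref{eq:aNdef}. For the second, the cutoff piece, bound $N\sum_{|p|\le N^\alpha}|\hat V_N(p)||\vphi_p| \le N\|\hat V_N\|_\infty \sum_{|p|\le N^\alpha}|\vphi_p| \le N\cdot N^{-1}\|V\|_1 \cdot \sum_{|p|\le N^\alpha}(Np^2)^{-1} \lesssim N^{-1}\sum_{|p|\le N^\alpha}|p|^{-2} \lesssim N^{-1}\cdot N^\alpha = N^{\alpha-1}$, using $\#\{|p|\le N^\alpha\} \sim N^{3\alpha}$ together with the integral comparison $\sum_{|p|\le R}|p|^{-2} \sim R$. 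This gives exactly the error $\mathcal O(N^{\alpha-1})$ claimed.

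\emph{Main obstacle.} The crux is the sharp pointwise bound \eqref{eq:point} and, relatedly, the $\ell^1$ bound $\|\tilde\vphi\|_1\lesssim 1$; the naive estimates from coercivity of $P_0^\perp\mathfrak h P_0^\perp$ only yield $p^2|\vphi_p|\lesssim1$ and a logarithmically divergent $\ell^1$-sum. Getting the extra factor $N^{-1}$ genuinely requires using that $V_N$ carries an $N^{-1}$ of mass \emph{after} it multiplies the bounded scattering solution $f=1+\check\vphi$, i.e.\ exploiting $\|V_N f\|_{L^1}\lesssim N^{-1}$ (not merely $\|V_N f\|_{L^2}$), which in turn rests on the uniform bound $\|f\|_\infty\lesssim1$ for the zero-energy scattering solution and the scaling $\int V_N = N^{-1}\int V$. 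This is precisely the analysis carried out in \cite{Hainzl}, and the efficient strategy is to compare $\check\vphi$ on the torus with the whole-space solution $f_N-1$, the periodization error being exponentially small on the support of $V_N$ since that support shrinks like $N^{-1}$.
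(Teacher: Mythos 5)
There is a genuine gap: the two estimates that carry all the content of the lemma --- the factor $N^{-1}$ in \eqref{eq:point} and the bound $\|\tilde\vphi\|_1\lesssim 1$ --- are not actually proved in your proposal. You correctly observe that coercivity of $P_0^\perp\mathfrak h P_0^\perp$ alone only yields $p^2|\vphi_p|\lesssim 1$, but then you defer to a comparison with the whole-space zero-energy scattering solution ``as in \cite{Hainzl}'' without carrying it out; note that the object defined in \eqref{eq:defphi} solves the \emph{truncated} equation \eqref{scattequ} (the $q=0$ mode is projected out), so a comparison with the whole-space solution is itself a nontrivial estimate that you would have to supply, not a citation you can lean on. The missing idea, which makes the paper's proof short and self-contained, is an energy estimate obtained by testing \eqref{scattequ} against $\vphi$ itself: multiplying by $\vphi_p$ and summing, the positivity of $V$ lets you drop $\sum_{p,q}\hat V_N(p-q)\vphi_p\vphi_q=\int V_N|\check\vphi|^2\geq 0$, so that $2\|p\vphi\|_2^2\leq -\sum_p\hat V_N(p)\vphi_p\leq \| \hat V_N/|p|\|_2\,\|p\vphi\|_2$, and the splitting $|p|\lessgtr N$ (using that $\hat V_N(p)=\hat V(p/N)/N$ lives on scale $N$) gives $\sum_p|\hat V_N(p)|^2/p^2\lesssim N^{-1}$, hence $\|p\vphi\|_2^2\lesssim N^{-1}$. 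Feeding this back into \eqref{scattequ} pointwise, with Cauchy--Schwarz on the convolution, $|(\hat V_N*\vphi)(p)|\leq\big[\sum_q|\hat V_N(p-q)|^2/q^2\big]^{1/2}\|q\vphi\|_2\lesssim N^{-1}$, yields exactly \eqref{eq:point}. This is the step your proposal circles around but never lands.

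The $\ell^1$ bound has the same status: your own attempts (counting lattice points against the pointwise bound, or position-space decay up to the box scale) correctly fail --- incidentally the divergence of $\sum_{|p|\leq R}|p|^{-2}$ is linear in $R$, not logarithmic --- and you again defer to \cite{Hainzl}. The paper instead divides \eqref{scattequ} by $p^2$, notes $\sum_p|\hat V_N(p)|/p^2\lesssim 1$ by the same $|p|\lessgtr N$ splitting, and controls $\||p|^{-2}(\hat V_N*\vphi)\|_1$ by iterating the equation with a regularizing Young-type estimate to get $\|\vphi\|_1<\infty$, followed by a bootstrap of the form $\|\vphi\|_1\lesssim 1+A^{-1/2}\|\vphi\|_1+A$ with $A$ large but fixed. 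Your Steps on the symmetries, on $\|\tilde\vphi\|_2,\|\tilde\vphi\|_\infty$ given \eqref{eq:point}, and on \eqref{eq:approx_aN} (splitting off $|p|\leq N^\alpha$ and using $\|\hat V_N\|_\infty\lesssim N^{-1}$ together with $\sum_{|p|\leq N^\alpha}|p|^{-2}\lesssim N^\alpha$) agree with the paper and are fine; but as the proposal stands, the lemma's core is asserted rather than proved.
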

\begin{proof}
Multiplying equation \eqref{scattequ} by $\vphi_p$, summing over $p$ and using that $V \geq 0$, we obtain
\begin{equation}\label{eq:pvphi}
2 \|p\vphi\|_2^2  = -  \sum_{p} \hat V_N(p) \vphi_p  -  \sum_{p,q} \hat V_N(p-q) \vphi_p \vphi_q \leq  -  \sum_{p} \hat V_N(p) \vphi_p \, .
\end{equation}
On the one hand, this implies that $\| p \vphi \|_2 \lesssim \| V_N \|_2 \|  \vphi \|_2 < \infty$ (the last bound is not uniform in $N$; it follows from Eq. (\ref{eq:defphi})). On the other hand, (\ref{eq:pvphi}) leads to 
\[ 2 \| p \vphi \|^2_2 \leq \| V_N / |p| \|_2 \| p \vphi \|_2 \, .\]
Dividing by $\| p \vphi \|_2$ and squaring, we obtain  
\begin{equation}\label{eq:pvh-bd}
\|p\vphi\|_2^2  \lesssim \sum_p \frac{|\hat V_N(p)|^2}{p^2} \lesssim \| \hat{V}_N \|_\infty^2  \| |p|^{-2} \chi_{|p| < N}  \|_1 +  \| \hat{V}_N \|_\infty \| \hat{V}_N \|_2 \| |p|^{-2} \chi_{|p| > N} \|_2 \lesssim N^{-1}.
\end{equation}
Using again \eqref{scattequ} we obtain the pointwise bound
\begin{equation}\label{maxnormvphi}
 | p^2 \vphi_p| \leq  |\hat{V}_N (p)| + \Big[ \sum_q \frac{|\hat{V}_N (p-q)|^2}{q^2} \Big]^{1/2} \| q \vphi \|_2 \lesssim N^{-1} 
 \end{equation}
where we proceeded as in (\ref{eq:pvh-bd}) to bound $\| |\hat{V}_N|^2 * |q|^{-2} \|_\infty$. This proves (\ref{eq:point}) and immediately implies the bounds for $\| \tilde \vphi \|_2 , \| \tilde \vphi \|_\infty$. To obtain the bound on $ \| \tilde \vphi \|_1$ we divide \eqref{scattequ} by $|p|^2$. Proceeding as in (\ref{eq:pvh-bd}), we obtain 
\begin{align*}
\sum_{p\neq 0} \frac{|\hat{V}_N (p)|}{|p|^2} \lesssim 1 \, ,
\end{align*}
hence we only have to bound $\||p|^{-2} (\hat{V}_N \ast \vphi) \|_{1}$. Iterating \eqref{scattequ} and using the regularizing estimate $\||p|^{-2} \hat{V}_N \ast g\|_{6p/(6+p) + \varepsilon} \leq C_{\varepsilon} \|\hat{V}_N\|_2 \|g\|_{p}$ for all $\varepsilon >0$, $p \geq 6/5$, $g \in \ell^{p}(\Lambda^*)$ and some $C_{\varepsilon} > 0$, we obtain that $\|\vphi\|_1 < \infty$. Separating high and low momenta, we obtain for $A \geq 1$ and $\varepsilon >0$,
\begin{align*}
\|\varphi\|_{1} 
	&\lesssim 1+ \| \chi_{|p| > A N} |p|^{-2} \|_{2} \|V_N \check \vphi\|_{2} + \|\chi_{|p| \leq A N}\|_{1} \| \hat{V}_N \ast \vphi \|_{\infty} \\
	&\lesssim 1 + A^{-\tfrac{1}{2}} \|\vphi\|_{1} + A \, ,
\end{align*}
where we used that $\|\check \vphi\|_{\infty} \leq \|\vphi\|_1$ and the H\"older inequality as in (\ref{maxnormvphi}) to estimate $\| \hat{V}_N \ast \vphi \|_{\infty}$. Taking $A$ sufficiently large but fixed we obtain $\|\varphi\|_{1} \lesssim 1$. 

Eq. (\ref{eq:approx_aN}) follows by noticing that, from the definition (\ref{eq:aNdef}), 
\[ \Big| 8\pi \mathfrak{a}_N - \hat{V} (0) - N \sum_p V_N (p) \tilde\vphi_p \Big| \leq N \sum_{|p| \leq N^\alpha} |\hat{V}_N (0)| |\vphi_p| \lesssim \frac{1}{N} \sum_{|p| \leq N^\alpha}\frac{1}{|p|^2} \lesssim N^{-1+\alpha} \]
where we used (\ref{eq:point}) and $\| \hat{V}_N \|_\infty \lesssim N^{-1}$. 
\end{proof}

Using the bounds in Lemma \ref{Lemmavphi} we can control the growth of the number of excitations w.r.t. the action of $\cB_2$; the proof of the next lemma can be found, for example, in \cite[Lemma 3.1]{BS}. 
\begin{lemma}\label{Gron}
For every $n \in \mathbb{N}$ and $|s| \leq 1$, we have  
	\begin{align*}
	\pm (e^{-s\B_2}\N_+ e^{s\B_2} - \N_+) 
		&\lesssim C N^{-\alpha/2} (\N_++1), \\
	e^{-s\B_2}(\N_++1)^n e^{s\B_2} 
		&\lesssim (\N_++1)^n.
	\end{align*}
\end{lemma}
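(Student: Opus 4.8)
\textbf{Proof strategy for Lemma \ref{Gron}.}

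The plan is to prove both bounds by a Gr\"onwall-type argument in the parameter $s$, which is the standard tool for controlling how a quadratic generator acts on powers of the number operator. For the second inequality, I would set $f(s) := \langle \xi, e^{-s\cB_2}(\cN_+ +1)^n e^{s\cB_2} \xi\rangle$ for a fixed normalized $\xi \in \cF$, differentiate in $s$, and obtain
\[
f'(s) = \langle \xi, e^{-s\cB_2}\big[(\cN_+ +1)^n, \cB_2\big] e^{s\cB_2}\xi\rangle.
\]
The commutator $[(\cN_+ +1)^n,\cB_2]$ has to be estimated: since $\cB_2 = \tfrac12\sum_p \tilde\vphi_p[a_p^\dagger a_{-p}^\dagger a_0 a_0 - \mathrm{h.c.}]$ changes the number of excitations by $\pm 2$, a telescoping/interpolation identity $[(\cN_++1)^n, a_p^\dagger a_{-p}^\dagger a_0 a_0] = \big((\cN_++1)^n - (\cN_+-1)^n\big) a_p^\dagger a_{-p}^\dagger a_0 a_0$ (and the adjoint term) lets us write the commutator as a polynomial of degree $n-1$ in $\cN_+$ times the off-diagonal pieces. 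Then I would bound the off-diagonal terms using Cauchy--Schwarz and $a_0, a_0^\dagger \le (\cN_+ +1)^{1/2}\,$-type estimates on the $N$-particle sector, paying with $\sum_p |\tilde\vphi_p| = \|\tilde\vphi\|_1 \lesssim 1$ from Lemma \ref{Lemmavphi}; this produces $|f'(s)| \lesssim f(s)$, and Gr\"onwall together with $|s|\le 1$ yields $e^{-s\cB_2}(\cN_++1)^n e^{s\cB_2} \lesssim (\cN_++1)^n$ as an operator inequality on states with $\cN = N$.

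For the first inequality I would argue similarly but track the constant more carefully. Writing $g(s) := e^{-s\cB_2}\cN_+ e^{s\cB_2} - \cN_+$, we have $g(0)=0$ and
\[
\frac{d}{ds}\, e^{-s\cB_2}\cN_+ e^{s\cB_2} = e^{-s\cB_2}[\cN_+,\cB_2] e^{s\cB_2},
\]
where $[\cN_+,\cB_2] = \sum_p \tilde\vphi_p[a_p^\dagger a_{-p}^\dagger a_0 a_0 + \mathrm{h.c.}]$. The crucial point is that this commutator is now genuinely \emph{off-diagonal} and quadratic in the excitation operators, so it can be bounded by
\[
\pm\,[\cN_+,\cB_2] \;\lesssim\; \Big(\sum_p |\tilde\vphi_p|^2\, p^2\Big)^{1/2}\!\!\big/\,(\text{something})\;\; \text{or more simply}\;\; \pm\,[\cN_+,\cB_2]\lesssim \|\tilde\vphi\|_2\, (N+1)\,(\cN_++1)^{1/2},
\]
but the sharp route is to use $\|\tilde\vphi\|_2 \lesssim N^{-1-\alpha/2}$ and the bound $\|a_0 a_0\| \lesssim N$ on the $N$-sector, giving $\pm[\cN_+,\cB_2] \lesssim N^{-\alpha/2}(\cN_++1)$. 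Integrating from $0$ to $s$ and using the already-established second bound with $n=1$ to control $e^{-s'\cB_2}(\cN_++1)e^{s'\cB_2} \lesssim (\cN_++1)$ inside the integral, we get $\pm\, g(s) \lesssim N^{-\alpha/2}(\cN_++1)$ uniformly for $|s|\le 1$.

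The main obstacle is getting the \emph{quantitative} factor $N^{-\alpha/2}$ rather than merely a bound of order one: this forces one to use the $\ell^2$-smallness $\|\tilde\vphi\|_2 \lesssim N^{-1-\alpha/2}$ (which in turn relies on the infrared cutoff $|p|>N^\alpha$ in \eqref{eq:tvphi}) together with the correct power of $N$ coming from the two factors of $a_0$, and to be careful that the Cauchy--Schwarz splitting in the commutator estimate does not waste a factor of $N^{\alpha/2}$. Since this is a standard computation, I would simply cite \cite[Lemma 3.1]{BS} for the details, as the authors do, after indicating that all required inputs $\|\tilde\vphi\|_1 \lesssim 1$ and $\|\tilde\vphi\|_2 \lesssim N^{-1-\alpha/2}$ are supplied by Lemma \ref{Lemmavphi}.
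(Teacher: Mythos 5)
Your overall strategy (Gr\"onwall in $s$ after computing the commutator with $\cB_2$) is the standard one, and it is also all the paper does here: the paper gives no proof and simply cites \cite[Lemma 3.1]{BS}, so citing is legitimate. Your treatment of the first bound is essentially correct: $[\cN_+,\cB_2]=\sum_p\tilde\vphi_p\,(a_p^\dagger a_{-p}^\dagger a_0a_0+\mathrm{h.c.})$, and the factor $N$ coming from $a_0a_0$ is beaten by Cauchy--Schwarz in the momentum sum with $\|\tilde\vphi\|_2\lesssim N^{-1-\alpha/2}$, giving $\pm[\cN_+,\cB_2]\lesssim N^{-\alpha/2}(\cN_++1)$ (the garbled intermediate display notwithstanding, the final route you state is the right one).

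There is, however, a genuine flaw in your sketch for general $n$: you propose to estimate the commutator $[(\cN_++1)^n,\cB_2]$ by ``paying with $\sum_p|\tilde\vphi_p|=\|\tilde\vphi\|_1\lesssim 1$'' after bounding the zero modes crudely. Term by term this gives at best $|f'(s)|\lesssim N\,\|\tilde\vphi\|_1\, f(s)\sim N f(s)$, because the two operators $a_0$ contribute a factor $N$ which the $\ell^1$-norm of $\tilde\vphi$ does not compensate; Gr\"onwall then yields a constant of order $e^{CN}$, which is useless. The correct mechanism is exactly the one you invoked for $n=1$, applied for every $n$: pair one factor $a_0/\sqrt N$ with each excitation operator (i.e.\ write $a_p^\dagger a_{-p}^\dagger a_0a_0=N\,b_p^\dagger b_{-p}^\dagger$ with $b_p^\dagger=a_p^\dagger a_0/\sqrt N$), use the shift identity $(\cN_++1)^n b_p^\dagger b_{-p}^\dagger=b_p^\dagger b_{-p}^\dagger(\cN_++3)^n$ to reduce to a weight $\lesssim_n(\cN_++3)^{n-1}$ distributed symmetrically, bound $\|b_{-p}^\dagger\eta\|\le\|(\cN_++1)^{1/2}\eta\|$, and then apply Cauchy--Schwarz in the $p$-sum so that one factor is $\bigl(\sum_p\|a_p(\cN_++1)^{(n-1)/2}\xi_s\|^2\bigr)^{1/2}\le\|(\cN_++1)^{n/2}\xi_s\|$ while the coefficients enter only through $\|\tilde\vphi\|_2$. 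This yields $|f'(s)|\lesssim N\|\tilde\vphi\|_2\, f(s)\lesssim N^{-\alpha/2}f(s)$ uniformly in $n$ (for the second inequality even $N\|\tilde\vphi\|_2\lesssim 1$ would suffice; the extra $N^{-\alpha/2}$ is what produces the first inequality after integrating and using the $n=1$ case). With this replacement your argument matches the cited \cite[Lemma 3.1]{BS}; as written, the $\ell^1$-based step would fail.
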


In the next Proposition, we describe the action of the operator $\cB_2$, defined as in (\ref{defB}), on the Hamilton operator (\ref{eq:decompo_HN}). 
\begin{proposition}\label{propB2}
We have 
\begin{equation} \label{eq:B2} e^{-\B_2} H_N e^{\B_2} = 4 \pi \aN (N-1) + \sum_{|p|\leq N^{\alpha}} \frac{(4\pi \aN)^2}{p^2} + H_1 + \tilde H_2 + \tilde Q_2 + Q_3 + Q_4 +  \E_{\B_2},\end{equation} 
with 
\begin{align}
\tilde H_2 &= (2\hat V(0) - 8\pi \aN) \N_+, \label{tH2} \\
\tilde Q_2 &= \sum_{|p| \leq N^\alpha} 4\pi \mathfrak{a}_N \Big[ a_p^\dagger a_{-p}^\dagger \frac{a_0 a_0}{N} + {\rm h.c.} \Big] , \label{tQ2}
\end{align}
and
\[
\pm\E_{\B_2} \lesssim N^{-\alpha/2} Q_4 + \big[N^{-\alpha/2} + N^{-1+5\alpha/2} \big] (\cN_+ + 1)  + N^{-1+\alpha} \cN_+^2  + N^{-2} H_1. \]
\end{proposition}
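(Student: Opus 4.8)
The plan is to expand the conjugation $e^{-\B_2} H_N e^{\B_2}$ term by term against the decomposition \eqref{eq:decompo_HN}, using the identity $e^{-\B_2} A e^{\B_2} = A + \int_0^1 e^{-s\B_2}[A,\B_2]e^{s\B_2}\,\ds$ and iterating it once more on the commutators that do not yet produce a closed (fully renormalized) expression. The key point is that, because of the infrared cutoff $|p|>N^\alpha$ built into $\tilde\vphi$, each additional commutator with $\B_2$ costs a factor $\|\tilde\vphi\|$ in the relevant norm (cf.\ \Cref{Lemmavphi}), so that after one or at most two commutators the remainder is already of the order claimed for $\E_{\B_2}$ and can be dumped into the error. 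So I would organize the computation as: (i) $[H_0,\B_2]$ together with part of $[Q_4,\B_2]$; (ii) $[Q_2,\B_2]$ and the rest of $[Q_4,\B_2]$; (iii) $[H_1,\B_2]$; (iv) $[H_2,\B_2]$ and $[Q_3,\B_2]$, checking these contribute only to the error.

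\textbf{Main computations.} The heart of the matter is the interplay of $Q_2$, $H_1$ and $Q_4$ through the scattering equation \eqref{scattequ}. Commuting $\B_2$ through $H_1 = \sum_{p\neq 0} p^2 a_p^\dagger a_p$ produces a term $\sum_p p^2 \tilde\vphi_p [a_p^\dagger a_{-p}^\dagger a_0 a_0 + \text{h.c.}]$; commuting it through $Q_4$ produces (to leading order, after normal ordering and using $a_0 a_0^\dagger \simeq N$) a term $\tfrac12 \sum_p (\hat V_N * \tilde\vphi)_p [a_p^\dagger a_{-p}^\dagger a_0 a_0 + \text{h.c.}]$; and $Q_2$ itself is $\tfrac12\sum_p \hat V_N(p)[a_p^\dagger a_{-p}^\dagger a_0 a_0 + \text{h.c.}]$. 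By \eqref{scattequ}, for $|p|>N^\alpha$ these three combine to cancel exactly the high-momentum part of $Q_2$; what survives is precisely the low-momentum piece $\tilde Q_2$ in \eqref{tQ2} (the cutoff in $\tilde\vphi$ is why the surviving sum runs over $|p|\leq N^\alpha$, with $\hat V_N(p)\simeq \hat V(0)/N \simeq 8\pi\aN/N$ there up to $\mathcal O(N^{\alpha-1})$ by \eqref{eq:approx_aN}). The constant and diagonal terms: $\tfrac12[Q_2,\B_2]$ acting on the condensate produces $\tfrac N2\sum_p \hat V_N(p)\tilde\vphi_p \cdot a_0^\dagger a_0^\dagger a_0 a_0/N \simeq \tfrac12(8\pi\aN - \hat V(0))(N-1)$ on $\{\cN=N\}$, which combines with $H_0 = \tfrac{\hat V_N(0)}{2}N(N-1) = \tfrac{\hat V(0)}{2}(N-1)$ to give $4\pi\aN(N-1)$ as in \eqref{eq:B2}; the diagonal quadratic correction $(2\hat V(0)-8\pi\aN)\cN_+$ of \eqref{tH2} arises from the same commutator when one of the four $a_0$'s is instead an excitation operator; and the second-order term $\tfrac12[[H_1+Q_4,\B_2],\B_2]$ contributes $\tfrac12\sum_p (p^2\tilde\vphi_p + \tfrac12(\hat V_N*\tilde\vphi)_p)\tilde\vphi_p \cdot N \simeq -\tfrac14\sum_p\hat V_N(p)\tilde\vphi_p N$ on the condensate, producing (after a short computation using $\hat V_N(p)\tilde\vphi_p \simeq -2p^2\tilde\vphi_p^2 \simeq -(8\pi\aN/p^2)\cdot(\text{stuff})$, more carefully $\hat V_N(p)\vphi_p = -2p^2\vphi_p^2 - (\hat V_N*\vphi)_p\vphi_p$ with $\vphi_p \simeq -\hat V_N(p)/(2p^2) \simeq -4\pi\aN/(Np^2)$) the sum $\sum_{|p|\leq N^\alpha}(4\pi\aN)^2/p^2$ appearing in \eqref{eq:B2}.

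\textbf{Error bookkeeping.} For the error bound I would use throughout the a priori estimates of \Cref{Gron} (so that conjugating $(\cN_++1)^n$ by $e^{s\B_2}$ is harmless) together with standard Cauchy--Schwarz estimates of the form $\|a_p\xi\|\,\|a_q^\dagger\xi\|$, bounding condensate operators $a_0, a_0^\dagger$ by $N^{1/2}$ and counting factors of $\tilde\vphi$ via \Cref{Lemmavphi}: $\|\tilde\vphi\|_2\lesssim N^{-1-\alpha/2}$, $\|\tilde\vphi\|_\infty\lesssim N^{-1-2\alpha}$, $\|\tilde\vphi\|_1\lesssim 1$. The terms to watch are: the remainder of $[H_1,\B_2]$ (where one factor $p^2$ is not absorbed by the scattering equation and one estimates $\sum_p p^2|\tilde\vphi_p|^2 \lesssim N^{-2}\sum_{|p|>N^\alpha}|p|^{-2}$, contributing the $N^{-2}H_1$-type and $(\cN_++1)$-type pieces); the $[Q_4,\B_2]$ remainder, which after Cauchy--Schwarz yields $N^{-\alpha/2}Q_4 + N^{-\alpha/2}(\cN_++1)$ (one factor $\|\tilde\vphi\|_2 N^{1/2}\lesssim N^{-\alpha/2}$, the other half going into $Q_4$); the error in replacing $a_0 a_0^\dagger$, $a_0^\dagger a_0$ by $N$ inside $\tilde Q_2$ and the constant/diagonal terms, which costs $\cN_+/N$ per insertion and gives the $N^{-1+\alpha}\cN_+^2$ and $N^{-1+5\alpha/2}(\cN_++1)$ contributions; and $[Q_3,\B_2]$ and $[H_2,\B_2]$, which are cubic$\times$quadratic and quadratic$\times$quadratic and are straightforwardly $\mathcal O(N^{-\alpha/2}(\cN_++1))$ by power counting. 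The \textbf{main obstacle} is organizing this bookkeeping so that no term is double-counted or dropped: in particular one must be careful that the ``leftover'' low-momentum part of $[Q_2+H_1+Q_4,\B_2]$ is exactly what is labelled $\tilde Q_2$, and that the second-order commutator $\tfrac12[[\,\cdot\,,\B_2],\B_2]$ is genuinely needed only for the scalar term $\sum_{|p|\leq N^\alpha}(4\pi\aN)^2/p^2$ while all its operator-valued pieces are small --- this is where the cutoff $|p|>N^\alpha$ pays off, since it makes the third and higher commutators (which would require the full implicit expansion of the generalized Bogoliubov transformation, as in \cite{BBCS}) negligible of size $\mathcal O(N^{-\alpha/2})$ relative to what we keep.
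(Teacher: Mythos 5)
Your overall scheme (Duhamel/commutator expansion, cancellation of the high-momentum part of $Q_2$ through the scattering equation \eqref{scattequ}, the cutoff in $\tilde\vphi$ making third and higher commutators negligible, error control via Lemma \ref{Gron} and Cauchy--Schwarz) is indeed the strategy of the paper. However, your identification of the surviving main terms contains genuine errors, exactly at the renormalization points the proposition is about. The coefficient of the low-momentum off-diagonal term is not obtained from ``$\hat V_N(p)\simeq \hat V(0)/N\simeq 8\pi\aN/N$'': by \eqref{eq:aNdef}, $8\pi\aN-\hat V(0)=N\sum_p\hat V_N(p)\vphi_p$ is of order one (this is the difference between the scattering length and its first Born approximation), and \eqref{eq:approx_aN} says nothing of the sort. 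What produces the coefficient $4\pi\aN/N$ in \eqref{tQ2} is the combination $\tfrac12\big[\hat V_N(p)+(\hat V_N\ast\vphi)(p)\big]$ at $|p|\le N^\alpha$, i.e.\ you must retain the low-momentum part of the convolution term generated by $[Q_4,\B_2]$: even though $\tilde\vphi$ is supported on $|p|>N^\alpha$, the convolution $\hat V_N\ast\tilde\vphi$ is not. As written, your cancellation leaves $\tfrac12\hat V_N(p)\chi_{|p|\le N^\alpha}$, i.e.\ the Born coefficient $\hat V(0)/(2N)$ instead of $4\pi\aN/N$, and the difference is an operator that cannot be absorbed into $\E_{\B_2}$.

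The derivation of the order-$N^\alpha$ constant is also incoherent. The quantity you propose, $-\tfrac N4\sum_p\hat V_N(p)\tilde\vphi_p$, is $O(1)$ and supported on $|p|>N^\alpha$, so it cannot equal $\sum_{|p|\le N^\alpha}(4\pi\aN)^2/p^2\sim N^\alpha$. In fact you discard this constant one step earlier: replacing $\tfrac{N(N-1)}2\sum_p\hat V_N(p)\tilde\vphi_p$ by $\tfrac{N-1}2(8\pi\aN-\hat V(0))$ via \eqref{eq:approx_aN} drops exactly $-\tfrac{N(N-1)}2\sum_{|p|\le N^\alpha}\hat V_N(p)\vphi_p$, which is of size $N^\alpha$, far larger than the error allowed in the statement. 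The constant in \eqref{eq:B2} is the sum of this low-momentum remainder and of the cross term $-\tfrac{N(N-1)}2\sum_{|p|>N^\alpha,\,|q|\le N^\alpha}\hat V_N(p-q)\vphi_p\vphi_q$ produced by commuting the renormalized off-diagonal term once more with $\B_2$ (the paper's Lemma \ref{F2}); only after combining both, and using $\vphi_p\approx-4\pi\aN/(Np^2)$ together with $(\hat V_N\ast\hat f)(p)\approx 8\pi\aN/N$ for $f=1+\check\vphi$, does one obtain $\sum_{|p|\le N^\alpha}(4\pi\aN)^2/p^2+\mathcal O(N^{2\alpha-1})$ (the two pieces carry weights $\hat V(0)$ and $8\pi\aN-\hat V(0)$ respectively, so neither alone suffices). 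A smaller misattribution in the same spirit: the commutator $[Q_2,\B_2]$ yields only $(\hat V(0)-8\pi\aN)\N_+$; the remaining $\hat V(0)\N_+$ in \eqref{tH2} is the surviving diagonal part of $H_2$ itself (Lemma \ref{H2}). Until these low-momentum remainders are tracked exactly, the proposal has a genuine gap both in the $N^\alpha$-sized constant and in the coefficient of $\tilde Q_2$.
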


In order to show Prop. \ref{propB2}, we define
\begin{align}\label{realscattequ}
\Gamma_2 := [H_1 + Q_4,\B_2] + Q_2 - \tilde Q'_2 
\end{align}
with 
\begin{equation} \label{eq:tQ'2} \tilde Q'_2 = \sum_p \hat W (p) a^\dagger_p a^\dagger_{-p} a_0 a_0 +  {\rm{h.c.}}, \end{equation} 
and 
\[ \hat{W} (p) = \frac{1}{2} \chi_{|p| \leq N^\alpha} \Big[ \sum_q \hat{V}_N (p-q) \vphi_q + \hat{V}_N (p) \Big] - \frac{1}{2} \sum_{|q| \leq N^\alpha} \hat{V}_N (p-q) \vphi_q. \]
We observe that 
\begin{equation} 
\begin{split} 
e^{-\B_2} &H_N e^{\B_2}  \\ &= H_0+  H_1 + Q_4 + \int_0^1 e^{-t\B_2} [H_1+Q_4,\B_2] e^{t\B_2} \dt +  e^{-\B_2} Q_2 e^{\B_2} + e^{-\B_2}(H_2 + Q_3)e^{\B_2}  
\\ &= H_0+  H_1 + Q_4 + \int_0^1 e^{-t\B_2} (-Q_2 + \tilde Q'_2 + \Gamma_2) e^{t\B_2} \dt +  e^{-\B_2} Q_2 e^{\B_2} + e^{-\B_2}(H_2 + Q_3)e^{\B_2}  \\ 
& = H_0+  H_1 + \tilde{Q}'_2 + Q_4 + \int_0^1 \int_s^1 e^{-t{\B_2}}[Q_2,{\B_2}]e^{t{\B_2}} \dt \ds+ \int_0^1 \int_0^s e^{-t \cB_2} \tilde{Q}'_2 e^{t \cB_2} \dt \ds \\ &\qquad + \int_0^1 e^{-t\B_2}  \Gamma_2 e^{t\B_2} \dt + e^{-\B_2}(H_2 + Q_3)e^{\B_2} \label{BHm2}
\end{split}
\end{equation} 
where in the last step we used 
\[ \begin{split} 
\int_0^1 e^{-t \cB_2} \tilde Q'_2 e^{t \cB_2} \dt &= \tilde Q'_2 + \int_0^1 \int_0^s e^{-t \cB_2} [\tilde{Q}'_2 , \cB_2 ] e^{t \cB_2} \dt \ds \\
e^{-\B_2} Q_2 e^{\B_2} - \int_0^1 e^{-t\B_2} Q_2 e^{t\B_2} \dt &=  \int_0^1 \int_s^1 e^{-t{\B_2}}[Q_2,{\B_2}]e^{t{\B_2}} \dt \ds \, . 
\end{split} \]
The proof of Prop. \ref{propB2} now follows by controlling the terms on the r.h.s. of \eqref{BHm2}. This is accomplished through a series of lemmas. We start by controlling the contribution arising from $H_2$. 

\begin{lemma}
\label{H2} 
On $\{\N = N\}$, we have
\begin{equation}
e^{-\B_2} H_2 e^{\B_2} =  \hat V (0) \N_+ + \E_{H_2} , 
\end{equation}
with
$$ \pm \E_{H_2} \lesssim N^{-\alpha/2} (\N_++1) + \N_+^2/N + N^{-2} H_1.$$
\end{lemma}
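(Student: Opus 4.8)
\emph{Plan.} I will expand the conjugation to first order in $\B_2$ and show that the only macroscopic contribution is hidden in $H_2$ itself. Since $\tilde\vphi_p\in\R$ (Lemma~\ref{Lemmavphi}), $\B_2$ is anti-self-adjoint, and since each of its monomials creates two excitations and removes two condensate particles we have $[\cN,\B_2]=0$; thus $e^{s\B_2}$ is unitary, preserves $\{\N=N\}$, and Duhamel's formula gives
\[ e^{-\B_2}H_2\,e^{\B_2}=H_2+\int_0^1 e^{-s\B_2}\,[H_2,\B_2]\,e^{s\B_2}\,\ds . \]
The three remaining tasks are: (i) show $H_2=\hat V(0)\N_+$ up to an error bounded as in the statement; (ii) show $[H_2,\B_2]$ is small; (iii) transfer the bound on $[H_2,\B_2]$ through the conjugation and the integration.

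\emph{Task (i).} Using $N\hat V_N(p)=\hat V(p/N)$ and $\sum_{p\neq0}a_p^\dagger a_p=\N_+$, on $\{\N=N\}$ one rewrites
\[ H_2=\hat V(0)\,\N_+ +\sum_{p\neq0}\big(\hat V(p/N)-\hat V(0)\big)a_p^\dagger a_p -\Big(\sum_{p\neq0}\hat V_N(p)\,a_p^\dagger a_p\Big)\N_+ -\tfrac12\hat V_N(0)\,\N_+(\N_+-1). \]
Because $V$ is radial and compactly supported, $\hat V$ is smooth with $\nabla\hat V(0)=0$, so $|\hat V(k)-\hat V(0)|\lesssim\min(|k|^2,1)\le|k|^2$; hence the second term is, in absolute value, $\lesssim N^{-2}\sum_p p^2a_p^\dagger a_p=N^{-2}H_1$. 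The last two terms carry $\|\hat V_N\|_\infty\lesssim N^{-1}$, respectively $\hat V_N(0)=\hat V(0)/N\lesssim N^{-1}$, and are therefore each $\lesssim N^{-1}\N_+^2$ as forms on $\{\N=N\}$. Thus $H_2=\hat V(0)\,\N_+ +\E'$ with $\pm\E'\lesssim N^{-1}\N_+^2+N^{-2}H_1$.

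\emph{Task (ii), and the main obstacle.} Writing $\B_2=\tfrac12\sum_q\tilde\vphi_q(\tau_q-\tau_q^\dagger)$ with $\tau_q:=a_q^\dagger a_{-q}^\dagger a_0 a_0$, we have $[a_p^\dagger a_p,\tau_q]=(\delta_{pq}+\delta_{p,-q})\tau_q$ and $[\N_+,\tau_q]=2\tau_q$, so each contribution to $[H_2,\B_2]$ is, up to h.c., of the form $\sum_q c_q\,\tau_q\,\N_+^{\,j}$ with $j\in\{0,1\}$ and $c_q$ a product of $\tilde\vphi_q$ with one of $\hat V(0)$, $\hat V(q/N)-\hat V(0)$, $\hat V_N(q)$, $\hat V_N(0)$. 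The largest piece is $[\hat V(0)\N_+,\B_2]=\hat V(0)\sum_q\tilde\vphi_q(\tau_q+\tau_q^\dagger)$; to bound it I will use the estimate (on $\{\N=N\}$, for even $\eta\in\ell^2(\Lambda^*_+)$)
\[ \pm\sum_q\eta_q\,(\tau_q+\tau_q^\dagger)\ \lesssim\ N\,\|\eta\|_2\,(\N_++1), \]
which with $\|\tilde\vphi\|_2\lesssim N^{-1-\alpha/2}$ (Lemma~\ref{Lemmavphi}) gives $\pm[\hat V(0)\N_+,\B_2]\lesssim N^{-\alpha/2}(\N_++1)$. Every other contribution to $[H_2,\B_2]$ carries an additional factor $\|\hat V_N\|_\infty\lesssim N^{-1}$, or $\hat V_N(0)\lesssim N^{-1}$, or a bounded $\hat V(q/N)-\hat V(0)$, together with at most one extra $\N_+\le N$; applying the displayed estimate, and its square to absorb the $(\N_++1)^2$ arising when $j=1$, bounds all of them by $N^{-\alpha/2}(\N_++1)+N^{-1}\N_+^2$. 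Hence $\pm[H_2,\B_2]\lesssim N^{-\alpha/2}(\N_++1)+N^{-1}\N_+^2$. The one genuinely delicate step is the displayed $\ell^2$ bound: for $\psi\in\{\N=N\}$ a direct Cauchy--Schwarz only gives $|\langle\psi,\sum_q\eta_q\tau_q\psi\rangle|\lesssim N\|\eta\|_2\,\|\N_+\psi\|\,\|\psi\|$, which (the factor $\|\N_+\psi\|$ being far larger than $\langle\psi,\N_+\psi\rangle$ in general) is too weak once multiplied by $N$; the sharp $(\N_++1)$ — exactly what cancels the $N$ against the $N^{-1}$ in $\|\tilde\vphi\|_2$ — is obtained by decomposing $\psi$ into sectors of fixed $\N_+$, using $\|\sum_q\eta_q\tau_q\,\psi_n\|\lesssim N\|\eta\|_2(n+2)\|\psi_n\|$ on each, and resumming with the Cauchy--Schwarz inequality applied to the shift $n\mapsto n+2$.

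\emph{Task (iii).} By Lemma~\ref{Gron}, $e^{-s\B_2}(\N_++1)^k e^{s\B_2}\lesssim(\N_++1)^k$ for $k=1,2$ and $|s|\le1$, so the integrated, conjugated commutator is still $\lesssim N^{-\alpha/2}(\N_++1)+N^{-1}\N_+^2$; adding $\E'$ from Task (i) gives $e^{-\B_2}H_2e^{\B_2}=\hat V(0)\N_+ +\E_{H_2}$ with $\pm\E_{H_2}\lesssim N^{-\alpha/2}(\N_++1)+N^{-1}\N_+^2+N^{-2}H_1$, which is the claim.
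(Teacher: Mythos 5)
Your proof is correct and follows essentially the paper's argument: a first-order Duhamel expansion, the pair-operator bound $\pm\sum_q\eta_q\,(a_q^\dagger a_{-q}^\dagger a_0a_0+\mathrm{h.c.})\lesssim N\|\eta\|_2(\N_++1)$ combined with $\|\tilde\vphi\|_2\lesssim N^{-1-\alpha/2}$, Lemma \ref{Gron} to transfer the bounds through the conjugation, and the Taylor estimate $N|\hat V_N(p)-\hat V_N(0)|\lesssim N^{-2}p^2$. The only (harmless) organizational difference is that the paper applies Duhamel only to the large piece $N\sum_p\hat V_N(p)a_p^\dagger a_p$ and disposes of the $\N_+$-dependent remainder by conjugating it directly and invoking Lemma \ref{Gron}, which spares the extra commutator bookkeeping your route entails (e.g.\ the contribution with the operator coefficient $\sum_p\hat V_N(p)a_p^\dagger a_p$ multiplying $\tau_q$, which is not literally of your form $c_q\tau_q\N_+^j$ but is bounded by the same Cauchy--Schwarz argument within the allowed error).
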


\begin{proof}
We have
\begin{multline}
    e^{-\B_2}H_2e^{\B_2}=N \sum_{p} \hat V_N(p) \left(a^\dagger_p a_p + \int_0^1 e^{-s\B_2} [ a^\dagger_p a_p,\B_2] e^{s\B_2} \ds\right)
    \\- e^{-\B_2} \left( \sum_{p} \hat V_N(p) a^\dagger_p a_p\N_+ + \frac {\hat V_N(0) }{2} \N_+(\N_+ -1) \right) e^{\B_2}.
\end{multline}
The term on the second line is controlled using Lemma \ref{Gron} by $N^{-1} \N_+^2$. As for the second term in the parenthesis in the first line, we use Lemma \ref{Gron} to estimate 
\begin{align*}
    \pm \sum_{p} N \hat V_N(p)  [ a^\dagger_p a_p,\B_2]   
    &= \pm \sum_{p} N \hat V_N(p) \tilde \vphi_p a^\dagger_p a^\dagger_{-p} a_0 a_0 +\text{h.c.} \\
    &\leq C \|\tilde \vphi \|_{2} (\N_0+1) (\N_++1) \leq N^{- \alpha/2} (\N_+ + 1),
\end{align*}
where we used that $\N_0\lesssim N$ and Lemma \ref{Lemmavphi}. Finally, since $V_N$ is even, we obtain 
\begin{align*}
N |\hat V_N (p) - \hat V_N(0)| \leq \|x^2V\|_{1} N^{-2} p^2 \leq C N^{-2} p^2 
\end{align*}
which gives
\begin{align*}
\pm \left(N \sum_{p} \hat V_N(p) a^\dagger_p a_p - \hat V(0) \N_+\right) \leq N^{-2} H_1 \, .
\end{align*}
\end{proof}

 The estimate of the term involving $Q_3$ is obtained analogously to \cite{BBCS4,Hainzl}. We repeat the proof for the sake of completeness. 
\begin{lemma}\label{Q3}
We have 
$$e^{-\B_2} Q_3 e^{\B_2}  =   Q_3 + \E_{Q_3} ,    $$
with
$$\pm \E_{Q_3} \lesssim N^{-\alpha/2} (Q_4 + \N_+ +1) .$$
\end{lemma}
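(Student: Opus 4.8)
The plan is to expand $e^{-\cB_2} Q_3 e^{\cB_2}$ via the fundamental theorem of calculus, writing
\[
e^{-\cB_2} Q_3 e^{\cB_2} = Q_3 + \int_0^1 e^{-s\cB_2} [Q_3, \cB_2] e^{s\cB_2} \, \ds,
\]
so that it suffices to control the commutator $[Q_3, \cB_2]$ and then transport the resulting error through the conjugation using Lemma~\ref{Gron}. Both $Q_3$ and $\cB_2$ are odd in the number of excitations (cubic, resp. quadratic plus the two $a_0$'s), so the commutator is again a polynomial of bounded degree in creation and annihilation operators, with coefficients controlled by $\tilde\vphi$. The key input is the smallness of the various norms of $\tilde\vphi$ from Lemma~\ref{Lemmavphi}, in particular $\|\tilde\vphi\|_2 \lesssim N^{-1-\alpha/2}$ and $\|\tilde\vphi\|_\infty \lesssim N^{-1-2\alpha}$, together with $a_0, a_0^\dagger \simeq \sqrt N$ on the sector $\{\cN = N\}$.

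First I would compute $[Q_3, \cB_2]$ explicitly. Recall $Q_3 = \sum_{q,r,q+r\neq 0} \hat V_N(r)[a_{q+r}^\dagger a_{-r}^\dagger a_q a_0 + \mathrm{h.c.}]$ and $\cB_2 = \tfrac12 \sum_p \tilde\vphi_p [a_p^\dagger a_{-p}^\dagger a_0 a_0 - \mathrm{h.c.}]$. The commutator produces, after using the CCR \eqref{eq:CCR}, terms of two types: (i) terms where one annihilation operator from $Q_3$ contracts with a creation operator from $\cB_2$ (or vice versa), yielding expressions with five creation/annihilation operators on excited modes times powers of $a_0, a_0^\dagger$; and (ii) terms where an $a_0$ from $Q_3$ meets an $a_0^\dagger$ from $\cB_2$, lowering the number of excited operators. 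Each such term I would bound by a standard Cauchy--Schwarz estimate, splitting the momentum sum so that the $\hat V_N$ factor is paired either with $|r|^{-1}$ (absorbing $\|\hat V_N/|r|\|_2 \lesssim N^{-1/2}$, as already used in \eqref{eq:pvh-bd}) or kept in $\|\hat V_N\|_\infty \lesssim N^{-1}$, and the $\tilde\vphi_p$ factor is estimated in $\ell^2$. The typical outcome of such a bound is a term like $N \cdot \|\hat V_N/|r|\|_2 \|\tilde\vphi\|_2 \, (\cN_+ + 1)^{1/2}(\cN_+ + 1)^{1/2}$ together with a piece that can be absorbed into $Q_4$ after completing a square (this is where the $N^{-\alpha/2} Q_4$ contribution comes from: the quartic structure $a^\dagger a^\dagger a a$ with the singular $\hat V_N$ is exactly $Q_4$, and the prefactor carries $\|\tilde\vphi\|_1 \lesssim 1$ or $\|\tilde\vphi\|_2 N^{1/2}$ depending on grouping, producing the $N^{-\alpha/2}$ gain). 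Collecting all pieces gives $\pm[Q_3,\cB_2] \lesssim N^{-\alpha/2}(Q_4 + \cN_+ + 1)$.

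Second, I would transport this bound through the conjugation: by the second estimate in Lemma~\ref{Gron}, $e^{-s\cB_2}(\cN_+ + 1) e^{s\cB_2} \lesssim \cN_+ + 1$ uniformly in $|s|\le 1$, and similarly $e^{-s\cB_2} Q_4 e^{s\cB_2} \lesssim Q_4 + (\text{lower order})$ — the latter requires a short separate argument, commuting $\cB_2$ through $Q_4$ once more, but the extra commutator is of the same type as above and contributes only lower-order terms already present in the final error bound. Integrating over $s \in [0,1]$ then yields $\pm\E_{Q_3} \lesssim N^{-\alpha/2}(Q_4 + \cN_+ + 1)$, as claimed.

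The main obstacle I anticipate is the bookkeeping in the commutator $[Q_3, \cB_2]$: there are many terms (each of $Q_3$'s three excited operators and its $a_0$ can contract against each of $\cB_2$'s four operators), and one must be careful to identify which groupings genuinely reproduce a $Q_4$-type quartic (so that positivity of $Q_4$ can be exploited) versus which are harmless lower-order polynomials in $\cN_+$. The delicate point is ensuring the prefactor of the $Q_4$ piece is $O(N^{-\alpha/2})$ and not $O(1)$: this hinges on using the cutoff $|p| > N^\alpha$ in $\tilde\vphi$, which forces the relevant momentum sums to be supported away from the origin and thus gains the factor $N^{-\alpha/2}$ when an $\ell^2$ norm of $\tilde\vphi$ replaces an $\ell^1$-type bound. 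Since the analogous computation was carried out in \cite{BBCS4,Hainzl}, I would largely follow that template, adapting the norms to the present choice of $\tilde\vphi$.
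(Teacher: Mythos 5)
Your overall architecture (Duhamel for the full commutator, then transport through $e^{s\cB_2}$ via Lemma \ref{Gron}) differs from the paper's proof, and it contains a genuine gap at the step you wave off: the transport of the $Q_4$ part of your commutator bound. After establishing $\pm[Q_3,\cB_2]\lesssim N^{-\alpha/2}(Q_4+\cN_++1)$ (which is plausible, and is essentially the content of the paper's Steps 1--3), you need an a priori bound of the form $e^{-s\cB_2}Q_4\,e^{s\cB_2}\lesssim Q_4+(\text{lower order})$ with ``lower order'' small enough that $N^{-\alpha/2}\times(\text{lower order})$ stays inside $N^{-\alpha/2}(Q_4+\cN_++1)$. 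No such bound holds. Indeed, $[Q_4,\cB_2]$ contains the term $\tfrac12\sum_{p,q}\hat V_N(p-q)\,\tilde\vphi_q\,a_p^\dagger a_{-p}^\dagger a_0a_0+\mathrm{h.c.}$ (see \eqref{eq:Q4B2}), which is precisely the order-$N$ renormalization term that the scattering equation is designed to cancel: its absorption into $Q_4$ costs an additive constant of size $N^2\|V_N^{1/2}\check{\tilde\vphi}\|_2^2\sim N$. Correspondingly, on the pure condensate $\phi_0^{\otimes N}$ one has $\langle e^{-s\cB_2}Q_4\,e^{s\cB_2}\rangle\sim s^2N$ (the pair correlations created by $e^{s\cB_2}$ carry interaction energy of order $N$), while $\langle Q_4+\cN_++1\rangle=O(1)$; hence any inequality $e^{-s\cB_2}Q_4e^{s\cB_2}\lesssim Q_4+C(\cN_++1)^k$ with $N$-independent constants is false, and the best true statement carries an additive $+CN$, which after multiplication by $N^{-\alpha/2}$ leaves an error $\sim N^{1-\alpha/2}$, far beyond the claimed $N^{-\alpha/2}(Q_4+\cN_++1)$.

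The paper avoids this trap by never conjugating the operators that generate the $Q_4^{1/2}$ factor. It writes $e^{-\cB_2}Q_3e^{\cB_2}=\sum_{q,r}\hat V_N(r)\,[\,e^{-\cB_2}a^\dagger_{q+r}a^\dagger_{-r}e^{\cB_2}\,]\,[\,e^{-\cB_2}a_qa_0e^{\cB_2}\,]+\mathrm{h.c.}$ and applies Duhamel to each factor separately (see \eqref{43}--\eqref{Duham}). In every resulting error term the creation pair $a^\dagger_{q+r}a^\dagger_{-r}$ (i.e.\ $\ca_x^\dagger\ca_y^\dagger$ in position space) acts directly on the original vector $\xi$, so Cauchy--Schwarz produces $\|Q_4^{1/2}\xi\|$ with the \emph{unconjugated} $\xi$; only quantities controllable by powers of $\cN_+$ are transported through $e^{s\cB_2}$, which is exactly what Lemma \ref{Gron} provides. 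If you want to keep your plan, you must restructure it along these lines (or otherwise arrange that $Q_4$ is never sandwiched between $e^{\mp s\cB_2}$); as written, the ``short separate argument'' you invoke cannot exist.
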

\begin{proof}
We can rewrite 
\begin{equation}\label{43}
e^{-\B_2} Q_3 e^{\B_2} =  \sum_{q,r} \hat{V}_N(r)\left[e^{-\B_2}a_{q+r}^\dagger a_{-r}^\dagger e^{\B_2} e^{-\B_2}  a_{q}  a_0 e^{\B_2}  + \text{h.c.}  \right].
\end{equation}
 Via Duhamel's formula we have
\begin{equation}\label{Duham}
 e^{-\B_2}a_{q+r}^\dagger a_{-r}^\dagger e^{\B_2} = a^\dagger_{q+r} a^\dagger_{-r} + \int_0^1 e^{-s\B_2} [a^\dagger_{q+r} a^\dagger_{-r}, \B_2] e^{s\B_2} \ds
 \end{equation}
 and
 $$ e^{-\B_2}  a_{q}  a_0 e^{\B_2} = a_q a_0 + \int_0^1 e^{-s\B_2} (\tilde \vphi_q a^\dagger_{-q} a_0 a_0 a_0 - a_q \sum_l \tilde \vphi_l a_{-l} a_l a_0^\dagger) e^{s\B_2} \ds.$$
 The product of the first terms, combined with its hermitian conjugate, corresponds to $Q_3$ in the statement of the lemma. All other terms will be estimated in three steps. 
 
 \textit{Step 1.}  Passing to $x$-space, we find 
 \begin{equation}\label{firstterm}
 \begin{split} 
\Big| \sum_{q,r}& \tilde \vphi_q  \hat{V}_N(r)  \int_0^1  \big\langle \xi , a_{q+r}^\dagger a_{-r}^\dagger  e^{-s\B_2} a^\dagger_{-q} a_0 a_0 a_0 e^{s\B_2} \xi \rangle \, \ds \Big|  \\ &=  \int_0^1  \int_{\Lambda^2} \dx  \dy   V_N(x-y) \big\langle \xi, \ca^\dagger_x \ca_y^{\dagger} e^{-s\B_2}  \ca^\dagger(\check{\tilde\vphi}_x)  a_0 a_0 a_0 e^{s\B_2} \xi \big\rangle \, \ds \\ 
&\leq  \Big( \int_{\Lambda^2} \dx  \dy   V_N(x-y) \| \ca_x \ca_y \xi \|^2 \Big)^{1/2}  \\ &\hspace{2cm} \times \Big( \int_{\Lambda^2} \dx  \dy  V_N (x-y) \int_0^1 \| \ca^\dagger(\check{\tilde\vphi}_x)  a_0 a_0 a_0 e^{s\B_2} \xi \|^2 \ds \Big)^{1/2}  \\
&\leq C N^{3/2} \| V_N \|_1^{1/2} \| \tilde\vphi \|_2 \| Q_4^{1/2} \xi \| \| (\cN_+ + 1)^{1/2} \xi \| 
\\ &\leq C N^{-\alpha/2} \langle \xi , (Q_4 + \cN_+ + 1) \xi \rangle 
\end{split} 
 \end{equation}
where we used Lemma \ref{Lemmavphi}, Lemma \ref{Gron} and the bound $\cN_0 \leq N$. 

\textit{Step 2.} Similarly, we have 
\begin{equation}\label{secterm}
\begin{split} 
 \Big|   \sum_{q,r} &\hat{V}_N(r)  \sum_l  \tilde \vphi_l  \int_0^1 \ds \, \big\langle \xi, a_{q+r}^\dagger a_{-r}^\dagger e^{-s\B_2} a_{q} a_{-l} a_l a^\dagger_0 e^{s\B_2} \xi \big\rangle  \Big| \\ &= \Big| \sum_l  \tilde \vphi_l \int_0^1 \ds \int_{\Lambda^2} \dx \dy \, V_N(x-y) \langle \xi ,   a^\dagger_x a^\dagger_y e^{-s\B_2} a_x a_{-l} a_l a^\dagger_0 e^{s\B_2} \xi \rangle 
 \Big|  \\ &\leq C \| \tilde \vphi \|_2 \|V_N \|^{1/2}_1 \| Q_4^{1/2} \xi \|  \| (\cN_++1)^{2} \xi \| \leq C N^{-\alpha/2} \langle \xi , Q_4 + \cN_+ + 1) \xi \rangle  \, .
\end{split} 
\end{equation}

\textit{Step 3.} The remaining term has the form
$$  \sum_{q,r} \hat{V}_N(r)\int_0^1e^{-s\B_2}[a_{q+r}^\dagger a_{-r}^\dagger,\B_2] e^{(s-1)\B_2}  a_{q}  a_0  e^{\B_2} \ds $$
A straightforward computation gives
$$ [a_{q+r}^\dagger a_{-r}^\dagger,\B_2] = -  \left( \tilde\vphi_r a^\dagger_{q+r} a_r + \tilde\vphi_{-r -q} a^\dagger_{-r} a_{-q-r} \right)a^\dagger_0a^\dagger_0.$$
The contributions of the two terms in the parenthesis can be handled similarly. Let us consider, for example, the expectation 
\begin{equation}\label{re1}
\begin{split} 
\Big| \sum_{q,r} \hat{V}_N(r) \tilde \vphi_r \int_0^1& \langle \xi, e^{-s\B_2}a_{q+r}^\dagger a_{r} a^\dagger_0 a^\dagger_0e^{(s-1)\B_2}  a_{q}  a_0  e^{\B_2} \xi \rangle  \,  \ds \Big|  \\
&\lesssim \frac{1}{N} \sum_{r,q} |\tilde\vphi_r| \int_0^1 \| a_{q+r} a_0 a_0 e^{(s-1)\B_2} \xi\| \| a_r e^{s \B_2} a_q a_0  e^{\B_2} \xi \| \ds \\
&\lesssim \| \tilde \vphi \|_2  \| \cN_+^{1/2} \xi \|  \int_0^1 \Big( \sum_{r,q} \| a_r e^{(s-1) \B_2} a_q a_0 e^{\B_2} \xi \|^2 \Big)^{1/2} \ds \\
 &\lesssim \| \tilde \vphi\|_2 \| \cN_+^{1/2} \xi \| \| (\cN_+ +1)^{3/2} \xi \|  \leq N^{-\alpha/2} \langle \xi , (\cN_+ + 1) \xi \rangle 
\end{split} 
\end{equation}
where we used Lemma \ref{Lemmavphi} and Lemma \ref{Gron}.
\end{proof}

Next, we recall the definition (\ref{realscattequ}) and we consider the term containing $\Gamma_2$, appearing on the r.h.s. of (\ref{BHm2}). 
\begin{lemma}\label{lemma:Gamma2}
We have
\begin{equation} \label{eq:G2} \Gamma_2 =   \sum_{r,p,q} \hat{V}_N(r) \tilde \vphi_p[a_{p+r}^\dagger a_q^\dagger a_{-p}^\dagger a_{q+r}a_0a_0 + \text{h.c.}] \end{equation} 
and 
\begin{equation} \label{eq:G2s} 
\int_0^1 e^{-t\B_2} \Gamma_2 e^{t\B_2} \dt \lesssim N^{-\alpha/2} (Q_4 + \cN_+ + 1).
\end{equation} 
\end{lemma}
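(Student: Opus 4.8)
The plan is to prove the two assertions of Lemma \ref{lemma:Gamma2} separately: first establish the algebraic identity \eqref{eq:G2} for $\Gamma_2$, and then bound the operator on the right-hand side after conjugation with $e^{t\cB_2}$.

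\medskip

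\textbf{Step 1: the identity for $\Gamma_2$.} By definition \eqref{realscattequ}, $\Gamma_2 = [H_1 + Q_4, \cB_2] + Q_2 - \tilde Q_2'$, so I would compute the two commutators $[H_1, \cB_2]$ and $[Q_4, \cB_2]$ directly. The commutator $[H_1, \cB_2]$ produces a term $\tfrac12\sum_p (2p^2)\tilde\vphi_p\, a_p^\dagger a_{-p}^\dagger a_0 a_0 = \sum_p p^2 \tilde\vphi_p\, a_p^\dagger a_{-p}^\dagger a_0 a_0$ (plus its hermitian conjugate, absorbed by the antisymmetric structure of $\cB_2$). The commutator $[Q_4, \cB_2]$ splits into two pieces: one in which the two $a_0$ operators from $\cB_2$ are left untouched, producing (after relabelling momenta) a term of the form $\tfrac12\sum_{p,q}\big[\sum_{q'}\hat V_N(p-q')\vphi_{q'}\big]\chi_{|p|\le N^\alpha}a_p^\dagger a_{-p}^\dagger a_0 a_0$ together with terms where a $\hat V_N(p-q)$ contracts a momentum $q$ with $|q|\le N^\alpha$ against $\tilde\vphi$ (giving the second part of $\hat W$), and a genuinely quartic-in-excitations remainder, which is exactly the right-hand side of \eqref{eq:G2}. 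The point of the scattering equation \eqref{scattequ} — more precisely its truncated form encoded in $\tilde\vphi_p=\vphi_p\chi_{|p|>N^\alpha}$ — is that the quadratic-in-excitations pieces coming from $[H_1+Q_4,\cB_2]$ combine with $Q_2$ to produce exactly $\tilde Q_2'$ with the coefficient $\hat W(p)$ as defined. Carefully tracking which momenta are cut off (the sum in $\cB_2$ runs over $|p|>N^\alpha$, while $\tilde Q_2'$ and $\hat W$ involve the complementary cutoff $\chi_{|p|\le N^\alpha}$) is what makes the bookkeeping nontrivial but ultimately forces the cancellation, leaving only \eqref{eq:G2}.

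\medskip

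\textbf{Step 2: the bound \eqref{eq:G2s}.} Having \eqref{eq:G2}, I would estimate the expectation $\langle\xi, e^{-t\cB_2}\Gamma_2 e^{t\cB_2}\xi\rangle$ for a normalized $\xi$ in a suitable dense domain. Pass to configuration space: $\Gamma_2$ becomes, up to h.c.,
\[
\int_{\Lambda^2}\dx\,\dy\; V_N(x-y)\;\ca_x^\dagger \ca_y^\dagger\, \ca^\dagger(\check{\tilde\vphi}_x)\, a_0 a_0\, \ca_y,
\]
after resolving the three creation operators and one annihilation carrying the momenta $p+r, q, -p$ and $q+r$; the factor $\tilde\vphi_p$ becomes the shifted kernel $\check{\tilde\vphi}(x-\cdot)$. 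Then apply Cauchy–Schwarz in the $(x,y)$ integral, splitting the integrand as $\big(V_N^{1/2}(x-y)\,\ca_x\ca_y\,\cdot\big)$ against $\big(V_N^{1/2}(x-y)\,\ca^\dagger(\check{\tilde\vphi}_x)a_0a_0\,\cdot\big)$, exactly mirroring the argument in \eqref{firstterm} of the proof of Lemma \ref{Q3}. The first factor is controlled by $\|Q_4^{1/2} e^{t\cB_2}\xi\|$; the second, after pulling out $\|\check{\tilde\vphi}_x\|_2 = \|\tilde\vphi\|_2$ and using $\cN_0\le N$, by $N \|\tilde\vphi\|_2 \|\,V_N^{1/2}(x-y)\,\ca_y e^{t\cB_2}\xi\|$ integrated, i.e.\ another factor of $\|Q_4^{1/2} e^{t\cB_2}\xi\|$ or $\|\cN_+^{1/2} e^{t\cB_2}\xi\|$ times $\|V_N\|_1^{1/2}$. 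Collecting, one gets a bound of order $N \|\tilde\vphi\|_2 \|V_N\|_1^{1/2}\lesssim N\cdot N^{-1-\alpha/2}\cdot 1 = N^{-\alpha/2}$ times $\langle\xi, (Q_4+\cN_++1)\xi\rangle$, using $\|\tilde\vphi\|_2\lesssim N^{-1-\alpha/2}$ from Lemma \ref{Lemmavphi}. Finally, Lemma \ref{Gron} removes the conjugations, replacing $e^{t\cB_2}\xi$ by $\xi$ at the cost of harmless constants, and integrating over $t\in[0,1]$ gives \eqref{eq:G2s}.

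\medskip

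\textbf{Main obstacle.} The routine part is Step 2 — it is essentially a rerun of the Cauchy–Schwarz scheme already used for $Q_3$. The delicate point is Step 1: one must organize the commutators $[H_1,\cB_2]$ and especially $[Q_4,\cB_2]$ so that precisely the quadratic-in-excitation terms are isolated and seen to match $Q_2 - \tilde Q_2'$ with the specific coefficient $\hat W(p)$, keeping scrupulous track of the interplay between the cutoff $\chi_{|p|>N^\alpha}$ hidden in $\tilde\vphi$ and the complementary cutoff appearing in $\hat W$ and $\tilde Q_2'$. Getting the momentum relabellings and the two cutoff regions exactly right — so that the leftover is cleanly the quartic expression \eqref{eq:G2} and nothing else — is where the real care is needed.
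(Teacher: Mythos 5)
Your Step 1 is essentially the paper's argument: compute $[H_1,\cB_2]$ and $[Q_4,\cB_2]$, add $Q_2$, split the quadratic-in-excitation terms according to the cutoff regions, and use the scattering equation \eqref{scattequ} to see that the coefficient $p^2\vphi_p+\tfrac12(\hat V_N*\vphi)(p)+\tfrac12\hat V_N(p)$ vanishes for $|p|>N^\alpha$, so that the surviving quadratic pieces are exactly $\tilde Q_2'$ with the coefficient $\hat W(p)$, leaving the quartic remainder \eqref{eq:G2}. One small inaccuracy: the quadratic part of $[Q_4,\cB_2]$ is $\tfrac12\sum_{p,q}\hat V_N(p-q)\tilde\vphi_q\,a_p^\dagger a_{-p}^\dagger a_0a_0+\text{h.c.}$ with no cutoff $\chi_{|p|\leq N^\alpha}$ on the external momentum $p$; that cutoff only emerges after combining with $Q_2$ and the scattering equation. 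This is a bookkeeping slip, not a structural problem.

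Step 2, however, has a genuine gap at its last step. Your Cauchy--Schwarz is performed on $\langle\xi,e^{-t\cB_2}\Gamma_2e^{t\cB_2}\xi\rangle$ with both factors acting on the conjugated vector $e^{t\cB_2}\xi$, so one of the two factors is $\|Q_4^{1/2}e^{t\cB_2}\xi\|$, and you then invoke Lemma \ref{Gron} to ``remove the conjugations''. Lemma \ref{Gron} only controls $e^{-t\cB_2}(\cN_++1)^ne^{t\cB_2}$; it gives no control of $e^{-t\cB_2}Q_4e^{t\cB_2}$ in terms of $Q_4$, and no such a-priori bound for $\cB_2$ is available (the paper proves analogues only for $\cB_3$ and $\cB_4$). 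As written, the bound you obtain is in terms of the conjugated quartic energy and cannot be converted into the unconjugated $N^{-\alpha/2}(Q_4+\cN_++1)$ claimed in \eqref{eq:G2s}. The paper's route (the one deferred to in Lemma \ref{Q3}) avoids exactly this: write $\Gamma_2=\sum\hat V_N(r)\tilde\vphi_p\,(a_{p+r}^\dagger a_q^\dagger)(a_{-p}^\dagger a_{q+r}a_0a_0)$, insert $e^{t\cB_2}e^{-t\cB_2}$ between the two groups, and expand $e^{-t\cB_2}a_{p+r}^\dagger a_q^\dagger e^{t\cB_2}$ by Duhamel. The leading term then places $\ca_x^\dagger\ca_y^\dagger$ directly against the \emph{unconjugated} $\xi$, so the $V_N$-weighted norm is $\|Q_4^{1/2}\xi\|$ with no conjugation, while the commutator remainder $[a_{p+r}^\dagger a_q^\dagger,\cB_2]$ carries an extra $\tilde\vphi$ and is controlled by powers of $\cN_+$ alone, where Lemma \ref{Gron} does apply (Steps 1--3 of Lemma \ref{Q3}). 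Alternatively, you could first prove an a-priori bound of the form $e^{-t\cB_2}Q_4e^{t\cB_2}\lesssim Q_4+\cN_++1+\dots$ (in the spirit of Lemma \ref{lm:a-priB3}), but that is an additional lemma you neither prove nor can cite; without it, or without the restructuring above, your final step does not go through.
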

\begin{proof} 
Straightforward calculations yield  
$$
[H_1,\B_2] = \sum_{p }  p^2 \tilde \vphi_p [a_p^\dagger a_{-p}^\dagger a_0 a_0  + \text{h.c.}]
$$
and
\begin{multline}\label{eq:Q4B2} 
[Q_4, \B_2] = \frac{1}{2} \sum_{p,q} \hat{V}_N(p-q) \tilde \vphi_q a_{p}^\dagger a_{-p}^\dagger a_0a_0
+ \sum_{r,p,q} \hat{V}_N(r) \tilde \vphi_p a_{p+r}^\dagger a_q^\dagger a_{-p}^\dagger a_{q+r}a_0 a_0 + \text{h.c.}
\end{multline}
Hence,
\[ \begin{split} 
[H_1 + Q_4,\B_2] + Q_2 = \; &\sum_{|p| > N^\alpha} \Big(p^2  \vphi_p + \frac{1}{2} \sum_{q} \hat{V}_N(p-q) \vphi_q + \frac{1}{2} \hat{V}_N (p)\Big) (a_p^\dagger a_{-p}^\dagger a_0 a_0+ \text{h.c.} ) \\ &- \frac{1}{2} \sum_{|p| > N^\alpha , |q| \leq N^\alpha} \hat{V}_N (p-q) \vphi_q 
(a_p^\dagger a_{-p}^\dagger a_0 a_0+ \text{h.c.} ) \\ &+ \frac{1}{2} \sum_{|p| \leq N^\alpha} \Big( \sum_{|q| > N^\alpha} \hat{V}_N (p-q) \vphi_q + \hat{V}_N (p) \Big) (a_p^\dagger a_{-p}^\dagger a_0 a_0+ \text{h.c.} )  \\ 
&+ \sum_{r,p,q} \hat{V}_N(r) \tilde \vphi_p a_{p+r}^\dagger a_q^\dagger a_{-p}^\dagger a_{q+r}a_0 a_0 \\ 
= \: &\tilde Q'_2 + \sum_{r,p,q} \hat{V}_N(r) \tilde \vphi_p a_{p+r}^\dagger a_q^\dagger a_{-p}^\dagger a_{q+r}a_0 a_0 
\end{split} \] 
where we used the scattering equation \eqref{scattequ} and (\ref{eq:tQ'2}). Comparing with (\ref{realscattequ}), we find (\ref{eq:G2}).

To prove (\ref{eq:G2s}), we write 
\[ \int_0^1 e^{-t \cB_2} \Gamma_2 e^{t \cB_2} \dt = 
\sum_{p,q,r} \hat{V}_{N}(r)\tilde\vphi_p \int_0^1  e^{-t\B_2} a_{p+r}^\dagger a_q^\dagger e^{t\B_2}e^{-t\B_2} a_{-p}^\dagger a_{q+r} a_0 a_0 e^{t\B_2} \dt  \]
and we proceed similarly as in Lemma \ref{Q3}. We omit further details. 
\end{proof}

Next, we focus on the contribution with the commutator $[Q_2, \cB_2]$ on the r.h.s. of (\ref{BHm2}). 
\begin{lemma}
\label{lemma:Q2B2} On $\{ \cN = N \}$, we have 
\begin{equation}
\begin{split} 
 \int_0^1 & \int_s^1 e^{-t\B_2}[Q_2,\B_2]e^{t\B_2} \dt \ds 
\\ 	&= - \frac {(N-1)}{2} (\hat V(0) -8\pi \aN )  -\frac { N(N-1)}{2} \sum_{|p|\leq N^\alpha} \hat{V}_N(p) \vphi_p   + \N_+  (\hat V(0) -8\pi \aN ) + \mathcal{E}_{[Q_2,\B_2]}  \label{CQ2B2},
 \end{split}
 \end{equation} 
and
\begin{equation}
\pm \mathcal{E}_{[Q_2,\B_2]}   \lesssim (N^{-\alpha/2} + N^{\alpha -1}) \cN_+ + N^{-1} (\cN_+ + 1)^2 + N^{-\alpha/2} Q_4. \label{eq:Q2B2}
\end{equation}
\end{lemma}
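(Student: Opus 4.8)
The plan is to compute $[Q_2,\cB_2]$ explicitly and then extract the scalar (number-of-particles) contributions, leaving a genuinely small operator remainder. Recall $Q_2 = \tfrac12\sum_{p}\hat V_N(p)[a_p^\dagger a_{-p}^\dagger a_0 a_0 + \text{h.c.}]$ and $\cB_2 = \tfrac12\sum_q \tilde\vphi_q[a_q^\dagger a_{-q}^\dagger a_0 a_0 - \text{h.c.}]$. The commutator splits into three types of terms according to which of the eight creation/annihilation operators are contracted: (i) the contraction of the $a_0 a_0$ in $Q_2$ with $a_0^\dagger a_0^\dagger$ in $\cB_2$, which produces $a_0^\dagger a_0^\dagger a_0 a_0$-type factors (yielding, on $\{\cN=N\}$, the scalar $N(N-1)$ up to lower order in $\cN_+$) times $\sum_p \hat V_N(p)\tilde\vphi_p\, a_p^\dagger a_{-p}^\dagger$-ish structures that, after contracting the remaining excitation operators with each other, give $\sum_{|p|>N^\alpha}\hat V_N(p)\vphi_p\,(1 + a_p^\dagger a_p + a_{-p}^\dagger a_{-p})$; (ii) a genuinely quartic-in-excitations piece with four $a_0^\sharp$ operators, which is controlled by Lemma~\ref{Gron} together with $\|\tilde\vphi\|_2$ and the bound $\cN_0\le N$; and (iii) mixed terms where the contraction is between one operator of $Q_2$ and one of $\cB_2$ at nonzero momentum. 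I would organize the algebra by normal-ordering carefully, tracking all the $\delta_{p,q}$ contractions.

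Next I would identify the scalar part. The dominant scalar comes from the fully-contracted term: $\tfrac12 \cdot \tfrac12 \sum_{p}\hat V_N(p)\tilde\vphi_p \cdot 2 \cdot (\text{factor }a_0^\dagger a_0^\dagger a_0 a_0 + \text{lower}) \cdot 2$ (two from the two momenta $p,-p$ contracting, two from h.c.), which on $\{\cN=N\}$ contributes $\tfrac{N(N-1)}{2}\sum_{|p|>N^\alpha}\hat V_N(p)\vphi_p$ plus a $\cN_+$-linear correction $-\cN_+\sum_{|p|>N^\alpha}\hat V_N(p)\vphi_p \cdot (\text{const})$. Then I split $\sum_{|p|>N^\alpha} = \sum_{p} - \sum_{|p|\le N^\alpha}$, and use the definition \eqref{eq:aNdef}, $8\pi\aN = \hat V(0) + N\sum_p \hat V_N(p)\vphi_p$, together with the identity \eqref{eq:approx_aN}, to rewrite $\tfrac{N(N-1)}{2}\sum_{p}\hat V_N(p)\vphi_p = \tfrac{N-1}{2}(8\pi\aN - \hat V(0)) + \mathcal O(N^{\alpha-1})$. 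This produces exactly the claimed scalar $-\tfrac{N-1}{2}(\hat V(0) - 8\pi\aN) - \tfrac{N(N-1)}{2}\sum_{|p|\le N^\alpha}\hat V_N(p)\vphi_p$, while the $\cN_+$-linear piece assembles (with the correct constant, tracking the h.c. carefully) into $\cN_+(\hat V(0) - 8\pi\aN)$ up to $(N^{-\alpha/2}+N^{\alpha-1})\cN_+$ errors coming from the restriction $|p|>N^\alpha$ and from replacing $a_0^\dagger a_0^\dagger a_0 a_0$-type expressions by their scalar leading terms (using $a_0^\dagger a_0 = N - \cN_+$, so e.g. $a_0^\dagger a_0^\dagger a_0 a_0/N^2 = 1 - 2\cN_+/N + \cN_+^2/N^2$).

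For the operator error terms $\mathcal E_{[Q_2,\cB_2]}$: the quartic-in-$a_0^\sharp$ contribution is bounded, after moving $a_0,a_0^\dagger$ to the scalar $N$ and using Cauchy–Schwarz in configuration space exactly as in the proof of Lemma~\ref{Q3}, by $\|\tilde\vphi\|_2 \|\hat V_N\|_?$ times $Q_4^{1/2}$ and $(\cN_++1)$-factors, giving the $N^{-\alpha/2}Q_4 + N^{-\alpha/2}(\cN_++1)$ pieces via Lemma~\ref{Lemmavphi}; the corrections from $(N-\cN_+)(N-\cN_+-1) - N(N-1)$ give the $N^{-1}(\cN_++1)^2$ term; the $|p|\le N^\alpha$ cutoff mismatch in the linear-in-$\cN_+$ term, controlled using $|\vphi_p|\lesssim 1/(Np^2)$ and $\sum_{|p|\le N^\alpha}p^{-2}\lesssim N^\alpha$, gives the $N^{\alpha-1}\cN_+$ term; and the $N^{-\alpha/2}\cN_+$ pieces absorb the tails $\sum_{|p|>N^\alpha}|\hat V_N(p)\vphi_p|\lesssim N^{-1}\|\tilde\vphi\|_1 \lesssim N^{-1}$ multiplied against $\cN_+$, which is even smaller, and more importantly the terms where one excitation operator in $Q_2$ is contracted with one in $\cB_2$ — these produce $\sum_{|p|>N^\alpha}\hat V_N(p)\tilde\vphi_p a_p^\dagger a_p$-type expressions whose coefficient $\sum_p |\hat V_N(p)\tilde\vphi_p|$ is $\mathcal O(N^{-1})$ hence absorbed. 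Throughout, the double time integral $\int_0^1\int_s^1 e^{-t\cB_2}(\cdot)e^{t\cB_2}\,\dt\,\ds$ only contributes the factor $\tfrac12$ to the scalar (which I have not written above but is where the $\tfrac12$ in front of the scalars comes from) and, via Lemma~\ref{Gron}, at most changes the operator bounds by harmless constants since conjugation by $e^{t\cB_2}$ preserves $(\cN_++1)^n$ and $Q_4$ (the latter because $e^{-t\cB_2}Q_4 e^{t\cB_2} \le Q_4 + \text{lower}$ follows from the analysis already used for $[Q_4,\cB_2]$ in Lemma~\ref{lemma:Gamma2}).

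The main obstacle is the bookkeeping: getting every numerical constant right in the scalar and in the $\cN_+$-linear term, since there are competing factors of $\tfrac12$ (from $Q_2$ and from $\cB_2$), factors of $2$ from $p \leftrightarrow -p$, factors of $2$ from hermitian conjugation, and the $\tfrac12$ from the double time integral, and one must be careful that the $|p|>N^\alpha$ restriction in $\cB_2$ interacts correctly with the unrestricted sum in $Q_2$ so that the leftover $\sum_{|p|\le N^\alpha}\hat V_N(p)\vphi_p$ appears with precisely the coefficient $-\tfrac{N(N-1)}{2}$ claimed. I expect the cleanest route is to work entirely in configuration space (as in Lemmas~\ref{H2} and \ref{Q3}) for the operator remainders, and to do the scalar extraction in momentum space, using $\cN_0 = a_0^\dagger a_0 = N - \cN_+$ on $\{\cN=N\}$ to turn $a_0$-strings into polynomials in $\cN_+$, and then invoking \eqref{eq:aNdef} and \eqref{eq:approx_aN} at the very end to convert sums of $\hat V_N(p)\vphi_p$ into $\aN$.
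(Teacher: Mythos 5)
Your overall architecture matches the paper's: compute $[Q_2,\B_2]$ via the CCR, use $a_0^\dagger a_0^\dagger a_0 a_0 = N(N-1)-2N\N_+ + \N_+(\N_++1)$ on $\{\N=N\}$ to extract the scalar and the $\N_+$-linear part, convert $\sum_p \hat V_N(p)\tilde\vphi_p$ into $\aN$ via (\ref{eq:aNdef}), (\ref{eq:approx_aN}), and control the remainders with Lemma \ref{Gron} and configuration-space Cauchy--Schwarz against $Q_4$. However, there is a genuine gap in your treatment of the term $\sum_{p,q}\hat V_N(p)\tilde\vphi_q\, a_p^\dagger a_{-p}^\dagger a_{-q}a_q(2\N_0+1)$ after it is sandwiched by $e^{-t\B_2}(\cdot)e^{t\B_2}$. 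You bound the unconjugated operator by $\delta Q_4 + \delta^{-1}(\dots)(\N_++1)^2$ and then dispose of the conjugation by asserting that $e^{-t\B_2}Q_4e^{t\B_2}\le Q_4+\text{lower order}$ ``follows from the analysis used for $[Q_4,\B_2]$''. No such bound is proved in the paper, and it is not true at the precision you need: the quadratic part of $[Q_4,\B_2]$, namely $\tfrac12\sum_{p,q}\hat V_N(p-q)\tilde\vphi_q a_p^\dagger a_{-p}^\dagger a_0a_0+\text{h.c.}$, carries the factor $a_0a_0\sim N$ and $\| V_N^{1/2}\check{\tilde\vphi}\|_2\sim N^{-1/2}$, so a Gronwall argument yields at best $e^{-t\B_2}Q_4e^{t\B_2}\lesssim Q_4 + CN$; this additive $N$ is unavoidable, since $e^{\B_2}$ creates order-$N$ quartic interaction energy (testing on the pure condensate shows $\langle e^{\B_2}\Omega_N, Q_4 e^{\B_2}\Omega_N\rangle \sim N$). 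Feeding $Q_4+CN$ back through your weight $\delta=N^{-\alpha/2}$ produces an error $\sim N^{1-\alpha/2}$, which destroys the lemma (and the theorem's $N^{-1/17}$ precision). The paper avoids ever needing such an a priori bound: it writes the term as $\sum_p\hat V_N(p)\, e^{-t\B_2}a_p^\dagger a_{-p}^\dagger e^{t\B_2}\cdot e^{-t\B_2}\Phi(2\N_0+1)e^{t\B_2}$ with $\Phi=\sum_q\tilde\vphi_q a_{-q}a_q$, and expands only the first factor by Duhamel, $e^{-t\B_2}a_p^\dagger a_{-p}^\dagger e^{t\B_2}=a_p^\dagger a_{-p}^\dagger+\int_0^t e^{-\tau\B_2}[a_p^\dagger a_{-p}^\dagger,\B_2]e^{\tau\B_2}\,{\rm d}\tau$, so that the $V_N$-weighted pair which gets Cauchy--Schwarzed against $Q_4^{1/2}$ is \emph{unconjugated}, while everything conjugated is controlled by powers of $\N_+$ alone via Lemmas \ref{Lemmavphi} and \ref{Gron}. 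Your proof needs this (or an equivalent) device; as written, the step fails.

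Two smaller points. First, the diagonal contraction term $\sum_p\hat V_N(p)\tilde\vphi_p\,a_p^\dagger a_p\,\N_0(\N_0-1)$ must be bounded using the per-mode estimate $N^2\|\hat V_N\tilde\vphi\|_\infty\lesssim N^{-2\alpha}$ (as the paper does), not via $\sum_p|\hat V_N(p)\tilde\vphi_p|\lesssim N^{-1}$, since the $\ell^1$ norm of the coefficients does not see the $N^2$ coming from $\N_0(\N_0-1)$. Second, the factor $\tfrac12$ from $\int_0^1\int_s^1{\rm d}t\,{\rm d}s$ multiplies the \emph{whole} delta-contracted contribution, i.e.\ both the $N(N-1)$ scalar and the $-2N\N_+$ piece (turning it into $-N\N_+\sum_p\hat V_N(p)\tilde\vphi_p$, whence $\N_+(\hat V(0)-8\pi\aN)$); your remark that the $\tfrac12$ enters ``only the scalar'' would give the wrong constant in front of $\N_+$, which is exactly the bookkeeping you flagged as delicate.
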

\begin{proof}
First, we claim that 
\begin{align}
 \int_0^1 \int_s^1 e^{-t\B_2}[Q_2,\B_2]e^{t\B_2} \dt \ds 
 	&= \frac { N(N-1)}{2} \sum_p \hat{V}_N(p) \tilde \vphi_p   
 -N \N_+ \sum_p  \hat{V}_N(p) \tilde \vphi_p +  \mathcal{E}'_{[Q_2,\B_2]}  \label{CQ2B}
\end{align}
with
\begin{multline}
 	\label{Xi}
 \mathcal{E}'_{[Q_2,\B_2]}  = - 2N \sum_p \hat{V}_N (p) \tilde\vphi_p \int_0^1 \int_s^1 \left[ e^{-t \cB_2} \cN_+ e^{t \cB_2} - \cN_+ \right] \dt \ds \\ + \sum_p  \hat{V}_N(p) \tilde\vphi_p   \int_0^1 \int_s^1 e^{-t\B_2} \N_+( \N_+ + 1 )e^{t\B_2}\dt \ds \\
  +2\sum_p \hat{V}_N(p) \tilde \vphi_p \int_0^1 \int_s^1 e^{-t\B_2}\N_0(\N_0 -1) a^\dagger_p a_pe^{t\B_2} \dt \ds  \\
 -  \sum_{p,q}  \hat{V}_N(p) \tilde \vphi_q \int_0^1 \int_s^1 e^{-t\B_2}  a^\dagger_{p} a^\dagger_{-p} a_{-q} a_{q} (2 \N_0+1 )  e^{t\B_2} \dt \ds \, .
\end{multline}
To prove \eqref{Xi}, we  calculate
\begin{align}\nonumber
[Q_2,\B_2] &= \frac{1}{4} \sum_{p,q} \hat{V}_N(p) \tilde\vphi_q [a_p^\dagger a_{-p}^\dagger a_0a_0 + a_{-p}a_pa_0^\dagger a_0^\dagger , a_q^\dagger a_{-q}^\dagger a_0a_0 - a_{-q}a_q a_0^\dagger a_0^\dagger]
\\ \label{Q2B}
&=  \frac{1}{4} \sum_{p,q} \hat{V}_N(p) \tilde\vphi_q \left( [a_{-p}a_pa_0^\dagger a_0^\dagger , a_q^\dagger a_{-q}^\dagger a_0a_0] - [a_p^\dagger a_{-p}^\dagger a_0a_0,a_{-q}a_q a_0^\dagger a_0^\dagger] \right).
\end{align}
The two terms in the bracket are hermitian conjugates. Hence, it suffices to compute the second one
\begin{align*}
  - [a_p^\dagger a_{-p}^\dagger a_0a_0,a_{-q}a_q a_0^\dagger a_0^\dagger] &= [a_{-q}a_q, a_p^\dagger a_{-p}^\dagger] a_0^\dagger a_0^\dagger a_0a_0 - a_p^\dagger a_{-p}^\dagger a_{-q} a_q [a_0a_0,a_0^\dagger a_0^\dagger],
 \end{align*}
where
 \begin{equation} \label{eq:comm1}[a_{-p}a_p,a_q^\dagger a_{-q}^\dagger ] = (\delta_{p,q} + \delta_{p,-q})( 1 + a^\dagger_p a_p + a^\dagger_{-p} a_{-p} ),\end{equation} 
 and
 \begin{equation}\label{eq:comm2} \begin{split} a_0^\dagger a_0^\dagger a_0 a_0 &= \N_0(\N_0-1)= N(N -1) - 2 N \N_+ + \N_+(\N_+ +1), \\  [a_0a_0,a^\dagger_0a^\dagger_0] &= 2(2 \N_0 + 1). \end{split} \end{equation} 
 Inserting these identities on the r.h.s. of (\ref{Q2B}), conjugating with $e^{t\cB_2}$ and integrating over $t,s$ 
 we obtain (\ref{Xi}).
 
With the definition (\ref{eq:aNdef}), we write 
\[ \begin{split} \frac{N(N-1)}{2} \sum_p \hat{V}_N (p) \tilde{\vphi}_p &= \frac{(N-1)}{2} (\hat{V} (0) - 8 \pi \aN) - \frac{N(N-1)}{2} \sum_{|p| \leq N^\alpha} \hat{V}_N (p) \vphi_p , \end{split} \]
and we use \eqref{eq:approx_aN} to estimate 
\[  \pm \Big[ - N \cN_+ \sum_p \hat{V}_N (p) \tilde\vphi_p + (8\pi \aN - \hat{V} (0)) \mathcal{N}_+ \Big] \lesssim N^{\alpha-1} \cN_+ \, . \]
Thus, Lemma \ref{lemma:Q2B2} follows from (\ref{CQ2B}), if we can prove that 
$\mathcal{E}'_{[Q_2,\B_2]}$ satisfies the estimate (\ref{eq:Q2B2}).  

Using the bound
\[ \Big| \sum_p \hat{V}_N (p) \tilde\vphi_p \Big| \lesssim N^{-1} \]
and Lemma \ref{Gron}, we can bound the first term on the r.h.s. of (\ref{Xi}) by 
\[ \pm 2N \sum_p \hat{V}_N (p) \tilde\vphi_p \int_0^1 \int_s^1 \left[ e^{-t \cB_2} \cN_+ e^{t \cB_2} - \cN_+ \right] \dt \ds \lesssim N^{-\alpha/2} (\cN_+ + 1) \, .\]
Also the second term on the r.h.s. of (\ref{Xi}) can be bounded with Lemma \ref{Gron}; we find 
\[ \pm \sum_p \hat{V}_N (p) \tilde\vphi_p \int_0^1 \int_s^1 e^{-t \cB_2} \cN_+ ( \cN_+ + 1) e^{t \cB_2} \dt \ds \lesssim N^{-1} (\cN_+ + 1)^2\,. \]
As for the third term, we use $\| \hat{V}_N \|_\infty \lesssim N^{-1}$, $\| \tilde \vphi \|_\infty \lesssim N^{-1-2\alpha}$ together with $\cN_0 (\cN_0 -1) \leq N^2$ and again Lemma \ref{Gron} to conclude that 
\[ \pm 2 \sum_p \hat{V}_N (p) \tilde\vphi_p \int_0^1 \int_s^1 e^{-t \cB_2} \cN_0 ( \cN_0 + 1) e^{t \cB_2} \dt \ds \lesssim N^{-2\alpha} \cN_+ \,.  \]
To control the last term on the r.h.s. of (\ref{Xi}), we write 
\begin{multline}\label{eq:last} 
 \sum_{p,q}  \hat{V}_N(p) \tilde \vphi_q \int_0^1 \int_s^1 e^{-t\B_2}  a^\dagger_{p} a^\dagger_{-p} a_{-q} a_{q} (2 \N_0+1)  e^{t\B_2} \dt \ds \\ =
\sum_{p}  \hat{V}_N(p) \int_0^1 \int_s^1 e^{-t\B_2}  a^\dagger_{p} a^\dagger_{-p} e^{t\B_2} e^{-t\B_2} \Phi (2 \N_0+1) e^{t\B_2} \dt \ds
\end{multline}
where we defined $\Phi = \sum_q \tilde \vphi_q a_{-q} a_q$ so that, by Lemma \ref{Lemmavphi}, 
\begin{equation}\label{eq:Phibd} \| \Phi \xi \| \lesssim \| \tilde\vphi \|_2 \| \cN_+ \xi \| \lesssim N^{-1-\alpha/2} \| \cN_+ \xi \| \,.\end{equation} 
 Next, we expand  
\begin{equation}\label{adad} e^{- t\B_2}  a^\dagger_{p} a^\dagger_{-p} e^{t\B_2}  =  a^\dagger_{p} a^\dagger_{-p} + \int_0^t e^{-\tau \B_2} [a^\dagger_{p} a^\dagger_{-p} ,\B_2] e^{\tau \B_2} d \tau .
\end{equation}  
Inserting this identity into (\ref{eq:last}), we obtain two contributions. The first contribution can be controlled passing to position space. We find 
\[ \begin{split} 
\pm \sum_{p}  \hat{V}_N(p) &\int_0^1 \int_s^1 a^\dagger_{p} a^\dagger_{-p} e^{-t\B_2} \Phi (2 \N_0+1) e^{t\B_2} \dt \ds \\  &=  \pm \int_0^1 \int_s^1 \int_{\Lambda^2} \dx \dy  \kappa V_N(x-y) \ca_x^\dagger \ca_y^\dagger e^{-t\B} \Phi (2 \N_0 +1) e^{t\B} \dt \ds + \text{h.c.} \\
	&\lesssim \delta Q_4 + \delta^{-1} N^2 \|\varphi\|_{L^2}^2 \|V_N\|_{1}   (\N_++1)^2 \lesssim N^{-\alpha/2} Q_4 + N^{-1-\alpha/2}  (\N_++1)^2 .\end{split} \]
On the other hand, the contribution arising from the second term on the r.h.s. of (\ref{adad})  can be controlled by
\begin{multline}
\pm \sum_{p}  \hat{V}_N(p) \tilde \vphi_p \int_0^1 \int_s^1\int_0^t e^{-\tau \B_2}   a^\dagger_0 a^\dagger_0 (2 a^\dagger_p a_p + 1 )e^{\tau \B_2}   e^{-t\B} \Phi (2 \N_0 +1) e^{t\B} \dt \ds + \text{h.c.} \\ \lesssim N^{-1 -\alpha/2} (\N_++1)^2 .
\end{multline} 
This concludes the proof of (\ref{CQ2B2}), (\ref{eq:Q2B2}). 
\end{proof}

Finally, we control the contribution with the commutator $[ \tilde{Q}_2 , \cB_2 ]$ in (\ref{BHm2}).
\begin{lemma} \label{F2}
We have 
\begin{equation}\label{inttQ2B2}
 \int_0^1 \int_0^s e^{-t\B_2}[\tilde Q_2,\B_2] e^{t\B_2} \dt \ds=  - \frac {N(N-1)} 2   \sum_{|p|> N^\alpha, |q|\leq N^\alpha} \hat{V}_N(p-q) \vphi_p \vphi_q  + \E_{[\tilde Q_2,\B_2]},
 \end{equation}
 with $$ \pm \E_{[\tilde Q_2 , \B_2]} \lesssim N^{-\alpha/2} Q_4 + N^{-1 + \alpha} (\N_++1)^2 \,.$$
\end{lemma}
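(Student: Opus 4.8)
The plan is to proceed exactly as in the proof of Lemma~\ref{lemma:Q2B2}: compute the commutator $[\tilde Q_2,\cB_2]$ explicitly, isolate the scalar contribution coming from the zero modes acting on $\{\cN=N\}$, and then bound every remaining operator term using Lemma~\ref{Gron} and the norms of $\tilde\vphi$ from Lemma~\ref{Lemmavphi}. First I would write $\tilde Q_2 = \sum_{|p|\le N^\alpha} 4\pi\aN\,[a_p^\dagger a_{-p}^\dagger a_0a_0/N + \mathrm{h.c.}]$ and recall $\cB_2 = \tfrac12\sum_{|q|>N^\alpha}\vphi_q[a_q^\dagger a_{-q}^\dagger a_0a_0 - \mathrm{h.c.}]$. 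Since $\tilde Q_2$ involves only momenta $|p|\le N^\alpha$ while $\cB_2$ involves only $|q|>N^\alpha$, the two creation-pair terms commute with each other and likewise the two annihilation-pair terms; the only surviving pieces of the commutator come from pairing a creation pair in one with an annihilation pair in the other, exactly as in \eqref{eq:comm1}--\eqref{eq:comm2} but \emph{without} a Kronecker-delta contraction between the $p$-modes and $q$-modes (their supports are disjoint). The remaining nontrivial commutator is the one involving $[a_0a_0,a_0^\dagger a_0^\dagger] = 2(2\cN_0+1)$, which produces a term $\sum_{|p|\le N^\alpha,|q|>N^\alpha}\hat V_N(p-q)\vphi_p\vphi_q\,(2\cN_0+1) \cdot (\text{const})$ plus, via the $b$-type structure, operator terms of the schematic form $a_p^\dagger a_{-p}^\dagger a_{-q}a_q(2\cN_0+1)$ and $a_q^\dagger a_{-q}^\dagger a_{-p}a_p(2\cN_0+1)$.

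The scalar term is the announced main contribution: on $\{\cN=N\}$ we have $2\cN_0+1 \le 2N$ and, more precisely, $\cN_0(2\cN_0+1)$-type factors give $N(N-1)$ to leading order, so after the double time integral $\int_0^1\!\int_0^s \dt\,\ds = \tfrac12$ and tracking the combinatorial constant from $\tilde Q_2$ (the $4\pi\aN/N = \tfrac12[\hat V_N(0) + \sum_q\hat V_N(q)\vphi_q]$, but here appearing paired against $\vphi_q$ through the scattering structure) one recovers $-\tfrac{N(N-1)}{2}\sum_{|p|>N^\alpha,|q|\le N^\alpha}\hat V_N(p-q)\vphi_p\vphi_q$. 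I would double-check this constant by matching it against the cross terms that Lemma~\ref{lemma:Q2B2} and the definition \eqref{eq:aNdef} of $\aN$ are designed to cancel — indeed this term is precisely what combines with the $\sum_{|p|\le N^\alpha}(4\pi\aN)^2/p^2$ and the leftover from the scattering equation to produce the clean $4\pi\aN(N-1)$ plus the momentum sum in Proposition~\ref{propB2}.

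For the error term $\E_{[\tilde Q_2,\cB_2]}$: the replacement of $\cN_0(2\cN_0+1)$ by its $c$-number value costs $N\cdot\|\hat V_N\|_\infty\|\tilde\vphi\|_1\|\vphi\chi_{|p|\le N^\alpha}\|_1\cdot(\cN_++1)$, and using $\|\hat V_N\|_\infty\lesssim N^{-1}$, $\|\tilde\vphi\|_1\lesssim1$, and $\sum_{|p|\le N^\alpha}|\vphi_p|\lesssim N^{-1}\sum_{|p|\le N^\alpha}|p|^{-2}\lesssim N^{-1+\alpha}$ this is $\lesssim N^{-1+\alpha}(\cN_++1)$. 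The operator terms of the form $a_p^\dagger a_{-p}^\dagger a_{-q}a_q(2\cN_0+1)$ I would handle by passing to position space and writing $\sum_{|q|>N^\alpha}\vphi_q a_{-q}a_q = \Phi'$ with $\|\Phi'\xi\|\lesssim\|\tilde\vphi\|_2\|\cN_+\xi\|\lesssim N^{-1-\alpha/2}\|\cN_+\xi\|$, then estimating $\sum_{|p|\le N^\alpha}\hat V_N(p)\,a_p^\dagger a_{-p}^\dagger$ against $Q_4^{1/2}$ after inserting $e^{-t\cB_2}e^{t\cB_2}$ and expanding as in \eqref{adad}; this yields a bound $\delta Q_4 + \delta^{-1}N^2\|\vphi\|_2^2\cdot(\text{small})\,(\cN_++1)^2$, optimized to $N^{-\alpha/2}Q_4 + N^{-1+\alpha}(\cN_++1)^2$. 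The symmetric terms with $p$ and $q$ roles swapped are bounded the same way, using $\sum_{|q|>N^\alpha}\hat V_N(q)\,a_q^\dagger a_{-q}^\dagger$ against $Q_4^{1/2}$ and $\|\sum_{|p|\le N^\alpha}\vphi_p a_{-p}a_p\,\xi\| \lesssim N^{-1}\|\cN_+\xi\|$ (using the $\ell^2$ norm of $\vphi$ restricted to $|p|\le N^\alpha$, which is $\lesssim N^{-1}(\sum_{|p|\le N^\alpha}|p|^{-4})^{1/2}\lesssim N^{-1}$). Finally the growth of $\cN_+$ under conjugation by $e^{t\cB_2}$ in all these estimates is absorbed by Lemma~\ref{Gron} at the cost of constants. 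The main obstacle I anticipate is purely bookkeeping: getting the numerical constant in front of $\sum\hat V_N(p-q)\vphi_p\vphi_q$ exactly right, since it involves unpacking the $4\pi\aN$ in $\tilde Q_2$ via \eqref{eq:aNdef} and carefully tracking which momentum-restricted sums survive — but no new analytic input beyond Lemmas~\ref{Lemmavphi} and~\ref{Gron} is needed.
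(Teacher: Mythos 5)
There is a genuine gap, and it sits exactly at the main term. Your disjoint-support observation is correct for the operator $\tilde Q_2$ of \eqref{tQ2}: since that $\tilde Q_2$ lives on $|p|\leq N^\alpha$ while $\B_2$ lives on $|r|>N^\alpha$, no contraction between nonzero modes is possible, and the commutator reduces to $\tfrac{4\pi\aN}{N}\sum_{|p|\leq N^\alpha}\sum_{|r|>N^\alpha}\tilde\vphi_r$ times purely quartic expressions of the form $a_p^\dagger a_{-p}^\dagger a_r a_{-r}(2\N_0+1)$ and their adjoints. Such an expression can never produce a c-number, let alone one of size $N^\alpha$ carrying the convolution kernel $\hat V_N(p-q)$: the coefficients available are $\tfrac{4\pi\aN}{N}\tilde\vphi_r$, and there is no mechanism to generate $\hat V_N(p-q)\vphi_p\vphi_q$. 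Your claim that the scalar $-\tfrac{N(N-1)}{2}\sum_{|p|>N^\alpha,|q|\leq N^\alpha}\hat V_N(p-q)\vphi_p\vphi_q$ nonetheless appears ``via the $b$-type structure'' and by ``unpacking $4\pi\aN$ through \eqref{eq:aNdef}'' is not a computation but an appeal to the expected answer, and it contradicts your own (correct) remark that only the zero-mode commutator survives. The point you are missing is that the double-integral term in \eqref{BHm2} is generated by $\tilde Q_2'$ of \eqref{eq:tQ'2}, not by $\tilde Q_2$ of \eqref{tQ2} (the lemma's notation is loose here): for $|p|>N^\alpha$ one has $\hat W(p)=-\tfrac12\sum_{|q|\leq N^\alpha}\hat V_N(p-q)\vphi_q$, so $\tilde Q_2'$ has a high-momentum component whose support \emph{does} overlap with that of $\B_2$. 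It is the contraction $[a_p^\dagger a_{-p}^\dagger,a_r a_{-r}]$ at $r=\pm p$, $|p|>N^\alpha$, that produces the factor $\tilde\vphi_p\,\N_0(\N_0-1)$ in \eqref{tQ2B2}, and then $\N_0(\N_0-1)=N(N-1)-2N\N_+ +\N_+(\N_++1)$ yields the announced main term of order $N^\alpha$ together with error terms controlled by $\big|\sum_{|p|>N^\alpha,|q|\leq N^\alpha}\hat V_N(p-q)\vphi_p\vphi_q\big|\lesssim N^{-2+\alpha}$. Without identifying the correct operator, the main term simply cannot be derived, so the proposal does not prove the lemma.

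On the error bounds your toolkit is the right one and largely coincides with the paper's (position space, the operator $\Phi=\sum_r\tilde\vphi_r a_ra_{-r}$ with $\|\Phi\xi\|\lesssim\|\tilde\vphi\|_2\|\N_+\xi\|$, Cauchy--Schwarz against $Q_4$, and Lemma \ref{Gron} for the conjugations), but because it is applied to the wrong commutator the bookkeeping does not match what is actually needed: the paper must bound the off-diagonal quartic term $\sum_{|p|>N^\alpha,|q|\leq N^\alpha}\hat V_N(p-q)\vphi_q\,a_p^\dagger a_{-p}^\dagger\Phi(2\N_0+1)$ by splitting the $p$-sum into the full sum (estimated in position space using $\|V_N^{1/2}\check\vphi^<\|_2^2\lesssim N^{2\alpha-3}$, Eq. \eqref{eq:Vphph}) and a low-momentum remainder, and separately the diagonal term $\tilde\vphi_p\,(a_p^\dagger a_p+a_{-p}^\dagger a_{-p})\N_0(\N_0-1)$ via $\|\tilde\vphi\,(\hat V_N\ast\vphi^<)\|_\infty\lesssim N^{-3-\alpha}$; none of these objects appear in your setup. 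Also note that no conjugation identity such as inserting $e^{-t\B_2}e^{t\B_2}$ and re-expanding as in \eqref{adad} is needed here; in the paper the conjugated commutator is estimated directly after the algebra above.
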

\begin{proof}
     Using that $ \chi_{|p| \leq N^\alpha} \tilde \varphi_p = 0$, similar computations as in the proof of Lemma \ref{CQ2B} yield
    \begin{equation}\begin{split} 
        [\tilde Q_2 &,\B_2] \\ = \; &\frac{1}{4} \sum_{|p|>N^\alpha, |q|\leq N^\alpha,r} \hat V_N(p-q) \vphi_q \tilde \vphi_r \\ &\times  (a^\dagger_p a^\dagger_{-p} a_r a_{-r}[a_0 a_0,a^\dagger_0 a^\dagger_0] +[a^\dagger_p a^\dagger_{-p},a_r a_{-r}]a^\dagger_0 a^\dagger_0 a_0 a_0)+\text{h.c.}
        \\ = \; &\sum_{|p|>N^\alpha, |q| \leq N^{\alpha}} \frac{1}{4}\hat V_N(p-q) \vphi_q \\ & \times \big( (2 a^\dagger_p a^\dagger_{-p} \Phi (2\N_0+1) + \text{h.c.}) 
   - 4 \tilde \vphi_p (a^\dagger_p a_p +a^\dagger_{-p} a_{-p}) \N_0 (\N_0-1)  -4\tilde\vphi_p \N_0 (\N_0-1)\big) \label{tQ2B2}
\end{split}
\end{equation} 
with the notation $\Phi = \sum_r \tilde\vphi_r a_r a_{-r}$. To bound the contribution arising from the first term in the parenthesis, we decompose 
\begin{equation}\label{eq:dec-1} \begin{split} \frac{1}{2} &\sum_{|p|>N^\alpha, |q|\leq N^\alpha} \hat V_N(p-q) \vphi_q a^\dagger_p a^\dagger_{-p} \Phi (2\N_0+1)\\  = \; &\frac{1}{2}   \sum_{p,|q|\leq N^\alpha} \hat V_N(p-q) \vphi_q a^\dagger_p a^\dagger_{-p} \Phi (2\N_0+1) - \frac{1}{2}   \sum_{|p|, |q|\leq N^\alpha} \hat V_N(p-q) \vphi_q a^\dagger_p a^\dagger_{-p} \Phi (2\N_0+1)\,. \end{split} \end{equation} 
The first term can be controlled switching to position space. With the notation $\check{\vphi}^<$ for the Fourier series of $\chi_{|q| \leq N^\alpha} \vphi_q$, we find 
\[ \begin{split} 
\pm \frac{1}{2}   &\sum_{p,|q|\leq N^\alpha} \hat V_N(p-q) \vphi_q a^\dagger_p a^\dagger_{-p} \Phi (2\N_0+1) \\ &= \pm \int_{\Lambda^2} \dx  \dy  V_N (x-y) \check \vphi^< (x-y) a^\dagger_x  a^\dagger_y \Phi (2\N_0+1) + \text{h.c.} \nonumber \\
	&\lesssim  \delta Q_4 + \delta^{-1} N^{-\alpha} \|V_N^{1/2} \check\vphi^< \|_2^2 (\cN_++1)^2 
\end{split} \]
where we used the bound (\ref{eq:Phibd}) for $\Phi$ and $\cN_0 \leq N$. With 
\begin{equation}\label{eq:Vphph} \| V_N^{1/2} \check\vphi^< \|_2^2 = \sum_{|p|, |q| \leq N^\alpha} \hat{V}_N (p-q) \vphi_p \vphi_q \lesssim \frac{1}{N^3} \Big[ \sum_{|p| \leq N^\alpha} \frac{1}{|p|^2} \Big]^2 \lesssim N^{2\alpha - 3} \end{equation} 
and choosing $\delta = N^{-\alpha/2}$, we conclude (since $\alpha < 1$) that 
\[ \pm \frac{1}{2}  \sum_{p,|q|\leq N^\alpha} \hat V_N(p-q) \vphi_q a^\dagger_p a^\dagger_{-p} \Phi (2\N_0+1) \lesssim N^{-\alpha/2} Q_4 + N^{-1-\alpha/2} (\cN_+ + 1)^2 \,.\]
As for the second term on the r.h.s. of (\ref{eq:dec-1}), we estimate 
\[ \pm \frac{1}{2}   \sum_{|p|, |q|\leq N^\alpha} \hat V_N(p-q) \vphi_q a^\dagger_p a^\dagger_{-p} \Phi (2\N_0+1) + \text{h.c.} \lesssim N^{-\alpha/2} \| \chi_{|p|\leq N^\alpha} \hat V_N \ast \vphi^{<} \|_{2} \| (\cN_+ + 1)^2 \]
where, again, we used (\ref{eq:Phibd}) and $\cN_0 \leq N$. With 
\[  \| \chi_{|p|\leq N^\alpha} \hat V_N \ast \vphi^{<}\|_{2} \leq \|\chi_{|p|\leq N^\alpha}\|_{L^2} \|V_N \check\vphi^{<}\|_{1} \lesssim N^{3\alpha/2}\| V_N^{1/2}\|_{2} \| V_N^{1/2} \check \vphi^{<}\|_{2} \lesssim N^{-2+5\alpha/2} \]
we conclude that 
\[  \pm \frac{1}{2}   \sum_{|p|, |q|\leq N^\alpha} \hat V_N(p-q) \vphi_q a^\dagger_p a^\dagger_{-p} \Phi (2\N_0+1) + \text{h.c.} \lesssim N^{-2-2\alpha} (\cN_+ + 1)^2 \,.\]
The contribution arising from the second term in the parenthesis on the r.h.s. of (\ref{tQ2B2}) can be bounded by 
$$\pm \sum_{|p|>N^\alpha, |q| \leq N^{\alpha}} \hat V_N(p-q) \vphi_q \tilde \vphi_p a^\dagger_p a_p \N_0 (\N_0-1)  \lesssim N^{-1- \alpha} \N_+,$$
using that $ \|\tilde\vphi  (\hat V_N \ast \vphi^{<})\|_{L^\infty} \lesssim  \|\tilde\vphi \|_{\infty} \| V_N \check \vphi^{<}\|_1 \lesssim N^{-3-\alpha}$. As for the contribution arising from the last term on the r.h.s. of (\ref{tQ2B2}), we write $\N_0(\N_0-1)=  N(N -1) - 2 N \N_+ + \N_+ (\N_+ +1)$. The contribution proportional to $N(N-1)$ produces the main term on the r.h.s. of  (\ref{inttQ2B2}). The other contributions can be bounded, noticing that 
\begin{align*}
\Big| \sum_{|p|> N^\alpha, |q|\leq N^\alpha}  \hat{V}_N(p-q) \vphi_p \vphi_q \Big| \leq \|\check {\tilde \vphi} V_N \check \vphi^{<}\|_{1} \leq \|\check {\tilde \vphi}\|_{\infty} \|V_N \check\vphi^< \|_1 \lesssim N^{-2+\alpha},
\end{align*}
where we used $\| \check{\tilde\vphi} \|_\infty \lesssim \| \tilde \vphi \|_1 \lesssim 1$, by Lemma \ref{Lemmavphi}. 
\end{proof}

We can now finish the proof of Prop. \ref{propB2}.

\begin{proof}[Proof of Prop. \ref{propB2}] 
Combining (\ref{BHm2}) with the bounds proven in Lemma \ref{H2}, Lemma \ref{Q3}, Lemma \ref{lemma:Gamma2}, Lemma \ref{lemma:Q2B2} and Lemma \ref{F2}, 
we conclude that 
\begin{equation}\label{eq:EHE2} e^{-\cB_2} H_N e^{\cB_2} = 4 \pi \mathfrak{a}_N (N-1) + A_\alpha + H_1 + \tilde{H}_2 +  \tilde{Q}'_2 + Q_3 + Q_4 + \mathcal{E} \end{equation} 
where
\begin{equation}\label{eq:eps-err} \pm \mathcal{E} \lesssim N^{-\alpha/2} Q_4 + \big [ N^{-\alpha/2} + N^{\alpha-1} \big] (\cN_+ + 1) + N^{-1+\alpha} \cN_+^2  + N^{-2} H_1 \end{equation} 
and where we defined 
\[ \begin{split} A_\alpha &= - \frac{N(N-1)}{2}  \Big[ \sum_{|p| \leq N^\alpha} \hat{V}_N (p) \vphi_p + \sum_{|p| > N^\alpha, |q| \leq N^\alpha} \hat{V}_N (p-q) \vphi_p \vphi_q  \Big] \\ &= - \frac{N(N-1)}{2}  \Big[ \sum_{|p| \leq N^\alpha} (\hat{V}_N (p) + \hat{V}_N * \vphi) \vphi_p - \sum_{|p|, |q| \leq N^\alpha} \hat{V}_N (p-q) \vphi_p \vphi_q \Big].
\end{split} \]
The second term in the parenthesis can be estimated as in (\ref{eq:Vphph}). Setting, in position space, $f = 1 + \check{\vphi}$ we find
\begin{equation}\label{eq:Aalpha} A_\alpha = - \frac{N(N-1)}{2}  \sum_{|p| \leq N^\alpha} (\hat{V}_N  * \hat{f}) (p) \vphi_p  + \mathcal{O} (N^{2\alpha-1}) .\end{equation} 
From (\ref{eq:aNdef}), we have $\widehat{V_N f} (0) = 8\pi \mathfrak{a}_N$. Hence
\begin{equation}\label{eq:diff-sc} | (\hat{V}_N * \hat{f}) (p) - 8 \pi \mathfrak{a}_N / N | \leq  \int_\Lambda V_N (x) f (x) |e^{-ip \cdot x} -1| \dx  \leq C |p| / N^2 \,.\end{equation}
Moreover, from the scattering equation (\ref{scattequ}), we find 
\[ \vphi_p = - \frac{1}{2p^2} (\hat{V}_N * \hat{f}) (p) \] 
which implies, by (\ref{eq:diff-sc}),  
\[ \Big| \vphi_p + \frac{4\pi \mathfrak{a}_N}{N p^2} \Big| \leq \frac{1}{2p^2} \big| (\hat{V}_N * \hat{f}) (p) - 8 \pi \mathfrak{a}_N \big| \leq \frac{C}{|p|N^2}\,. \]
Inserting in (\ref{eq:Aalpha}), we obtain 
\[ A_\alpha = \sum_{|p| \leq N^\alpha} \frac{(4\pi \mathfrak{a}_N)^2}{p^2} + \mathcal{O} (N^{2\alpha-1}) \,.\]
To conclude the proof of Prop. \ref{propB2}, we still have to compare the operator $\tilde Q'_2$ appearing on the r.h.s. of (\ref{eq:EHE2}) with the operator $\tilde Q_2$ defined in (\ref{tQ2}). From (\ref{eq:tQ'2}), we can write, using again the notation $f = 1+ \check{\vphi}$, 
\begin{equation} \label{eq:Q2Q2} \tilde Q'_2 - \tilde Q_2 = \frac{1}{2} \sum_{|p| \leq N^\alpha} \Big[ (\hat{V}_N * \hat{f})(p) - \frac{8\pi \mathfrak{a}_N}{N} \Big]  \, a_p^\dagger a_{-p}^\dagger a_0 a_0 - \frac{1}{2} \sum_{p, |q| \leq N^\alpha} \hat{V}_N (p-q) \vphi_q a_p^\dagger a_{-p}^\dagger a_0 a_0 + \text{h.c.} \end{equation} 
The first term on the r.h.s. of (\ref{eq:Q2Q2}) can be bounded with (\ref{eq:diff-sc}) by 
\[ \begin{split} \pm \sum_{|p| \leq N^\alpha} \Big[ (\hat{V}_N * \hat{f})(p) &- \frac{8\pi \mathfrak{a}_N}{N} \Big]  \, \big( a_p^\dagger a_{-p}^\dagger a_0 a_0 +\text{h.c.}\big) \\ &\lesssim \frac{1}{N} \| |p| \chi_{|p| \leq N^\alpha} \|_2 (\cN_+ + 1) \lesssim N^{-1+5\alpha/2} (\cN_+ + 1)\,. \end{split} \]
As for the second term on the r.h.s. of (\ref{eq:Q2Q2}), we set $\vphi^<_p = \vphi_p  \chi_{|p| \leq N^\alpha}$ and we estimate, switching to position space,  
\[ \begin{split}  \pm \sum_{p} &(\hat{V}_N * \vphi^< )(p) \, a_p^\dagger a_{-p}^\dagger a_0 a_0 +  \text{h.c.} \\ &= \pm \int_{\Lambda^2} \dx  \dy  \, V_N (x-y) \check{\vphi}^< (x-y) a_x^\dagger a_y^\dagger a_0 a_0 + \text{h.c.} \\ &\lesssim \delta Q_4 + \delta^{-1} N^2 \| V_N \|_1 \| \check{\vphi}^< \|_\infty^2 \lesssim \delta Q_4 + \delta^{-1} N^{-1+2\alpha}  \leq N^{-\alpha/2} Q_4 + N^{-1+5\alpha/2} \end{split} \]
since $\| \check{\vphi}^< \|_\infty \lesssim \| \vphi^< \|_1 \lesssim N^\alpha$, from Lemma \ref{Lemmavphi} (in the last step, we chose $\delta =  N^{-\alpha/2}$). The last two estimates show that the difference $\tilde Q'_2 - \tilde Q_2$ can be added to the error (\ref{eq:eps-err}) and conclude therefore the proof of Prop. \ref{propB2}. 
\end{proof}

\section{Cubic renormalization}

While conjugation with $e^{\B_2}$ allowed us to renormalize the quadratic part of the Hamiltonian $H_N$, regularizing the off-diagonal term $Q_2$, it did not significantly change the cubic 
operator $Q_3$. To renormalize $Q_3$, we proceed with a second conjugation, with a
unitary operator $e^{\B_3}$, where 
\begin{equation}
\label{defB3}
 \B_3 = \sum_{p,q} \tilde \vphi_p \chi_{|q| \leq N^\alpha} \, a^\dagger_{p+q} a^\dagger_{-p} a_q a_0  - \text{h.c.},
\end{equation}
with the same $0 < \alpha <1$ used in the definition (\ref{eq:tvphi}) of $\tilde\vphi$. Similarly as we did in Lemma \ref{Gron} for the action of $\B_2$, it is important to notice that conjugation with $e^{\B_3}$ does not substantially change the number of excitations. 
\begin{lemma}
	\label{lemma:gronwall_B3}
For all $s \in [-1;1]$ and all $k \in \mathbb{N}$, we have  
\begin{align}\label{eq:growth1} \pm \Big[ e^{-s\B_3} \cN_+ e^{s \B_3} - \cN_+ \Big] &\lesssim N^{-\alpha/2} (\cN_+ + 1) \, , \\ 
\label{eq:growth2} e^{-s\B_3} (\N_++1)^k e^{s\B_3} &\lesssim (\N_++1)^k \, . \end{align} 
\end{lemma}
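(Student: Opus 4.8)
The strategy is a Grönwall argument, identical in spirit to the one used for $\cB_2$ in Lemma \ref{Gron}. I would first fix $\xi$ in a suitable dense domain and differentiate $f(s) = \langle \xi, e^{-s\B_3}(\cN_+ + 1)^k e^{s\B_3}\xi\rangle$ with respect to $s$, obtaining $f'(s) = \langle \xi, e^{-s\B_3}[(\cN_+ + 1)^k, \B_3] e^{s\B_3}\xi\rangle$. The point is that $\B_3$, being built from one creation operator $a^\dagger_{p+q}a^\dagger_{-p}$ and one "annihilation pair" $a_q a_0$, changes $\cN_+$ only by a bounded amount per application, so the commutator $[(\cN_+ + 1)^k, \B_3]$ can be controlled by $(\cN_+ + 1)^k$ itself, up to a factor that is small in $N$. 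Concretely, $[\cN_+, \B_3]$ picks up a factor counting the net change of excitation number (here $a^\dagger_{p+q}a^\dagger_{-p}a_q$ changes $\cN_+$ by $+1$, so $[\cN_+,\B_3]$ is, schematically, $\B_3$ itself plus a correction), and the resulting operator is estimated by a Cauchy-Schwarz splitting in position space, exactly as in the proof of Lemma \ref{Q3}.

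The key input is the smallness of the coefficients: by Lemma \ref{Lemmavphi}, $\|\tilde\vphi\|_2 \lesssim N^{-1-\alpha/2}$, and when one of the three excitation momenta is summed against $\hat{V}_N$ or against a factor of $a_0 a_0^\dagger \simeq N$ the corresponding factor $\|\tilde\vphi\|_2 \cdot N \lesssim N^{-\alpha/2}$ appears. More precisely, I would pass to configuration space, writing $\B_3$ as an integral against $V$-free kernels: $\B_3 = \int \check{\tilde\vphi}(x-y) \check{\chi}^<(y-z) \, \ca^\dagger_x \ca^\dagger_y \ca_z \, a_0 \, \dd x\, \dd y\, \dd z - \text{h.c.}$, where $\check\chi^<$ is the Fourier series of $\chi_{|q|\leq N^\alpha}$. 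Then for the commutator with $(\cN_++1)^k$ one obtains a sum of terms each containing one $a_0$ (bounded by $N^{1/2}$ on $\{\cN = N\}$, or one uses $\cN_0 \leq N$), a factor $\|\tilde\vphi\|_2$, and the right number of $(\cN_+ + 1)^{1/2}$ factors to absorb the remaining $\ca^\dagger,\ca$'s; Cauchy-Schwarz then yields $|f'(s)| \lesssim N^{-\alpha/2} f(s)$, and Grönwall gives \eqref{eq:growth2}. For \eqref{eq:growth1} one writes $e^{-s\B_3}\cN_+ e^{s\B_3} - \cN_+ = \int_0^s e^{-\tau\B_3}[\cN_+,\B_3]e^{\tau\B_3}\,\dd\tau$ and estimates the commutator directly by $N^{-\alpha/2}(\cN_+ + 1)$ using the same position-space bound together with \eqref{eq:growth2} (with $k=1$) to remove the conjugations inside the integral.

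The main obstacle — really the only one requiring care — is bookkeeping the combinatorics of $[(\cN_+ + 1)^k, \B_3]$ for general $k$ and checking that every resulting term has enough $(\cN_+ + 1)^{1/2}$ factors available after one extracts the $a_0$'s and the $\|\tilde\vphi\|_2$ factor; one must be careful that the cutoff $\chi_{|q|\leq N^\alpha}$ on the $a_q$ mode does not cost anything (it doesn't, since it only restricts the sum). Since the structure is entirely parallel to the $\cB_2$ case and the needed bounds on $\tilde\vphi$ are already in Lemma \ref{Lemmavphi}, I would state the computation briefly and refer, as the authors do elsewhere, to the analogous argument in \cite[Lemma 3.1]{BS} or to the proof of Lemma \ref{Q3} for the position-space estimates, omitting the routine details.
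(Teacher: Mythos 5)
Your proposal is correct and follows essentially the same route as the paper: a Grönwall argument for $f(s)=\langle\xi,e^{-s\B_3}(\cN_++1)^k e^{s\B_3}\xi\rangle$, using that $[\cN_+,\B_3]$ reproduces $\B_3$ (the cubic term changes $\cN_+$ by one), a Cauchy--Schwarz estimate exploiting $\|\tilde\vphi\|_2\lesssim N^{-1-\alpha/2}$ together with $\cN_0,\cN_+\leq N$ to get $|f'(s)|\lesssim N^{-\alpha/2}f(s)$, and then \eqref{eq:growth1} by integrating the commutator and invoking \eqref{eq:growth2} with $k=1$. The paper performs the Cauchy--Schwarz directly in momentum space rather than via your kernel representation (where, incidentally, the cutoff kernel should carry the argument $x-z$, not $y-z$), but this is immaterial to the argument.
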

\begin{proof}
We proceed similarly as in the proof of \cite[Prop. 5.1]{BBCS4}. For $\xi \in \cF$, we set $f(s) = \langle \xi, e^{-s\B_3} (\N_++1) e^{s\B_3} \xi \rangle$. For $s \in (0;1)$, we find 
\begin{equation}\label{eq:growth3}
\begin{split} 
f'(s)  &= \langle \xi, e^{-s\B_3}   [(\N_++1),\B_3] e^{s\B_3} \xi \rangle \\  &=  \sum_{p,q} \tilde \vphi_p  \chi_{|q| \leq N^\alpha} \, \langle \xi, e^{-s\B_3}  a^\dagger_{p+q} a^\dagger_{-p} a_q a_0 e^{s\cB_3} \xi \rangle + \text{h.c.} \\
	&\lesssim \delta  \sum_{p,q} \langle \xi, e^{-s\B_3} a^\dagger_{p+q} a^\dagger_{-p}  a_{-p} a_{p+q} e^{s\B_3} \xi \rangle + \delta^{-1}  \sum_{p,q} |\tilde \vphi_p|^2 \, \langle \xi, e^{-s\B_3}  a^\dagger_q  a_q a^\dagger_0 a_0 e^{s\B_3} \xi \rangle  \\
	&\lesssim \delta \langle e^{-s\B_3} \N_+^2  e^{s\B_3} \xi \rangle + C \delta^{-1} N  \| \tilde \vphi \|_2^2 \langle \xi , e^{-s\B_3} \N_+  e^{s \B_3} \xi \rangle \lesssim N^{-\alpha/2} f(s) \end{split}\end{equation} 
where we put $\delta = N^{-1-\alpha/2}$ and we used that $\cN_+, \cN_0 \leq N$, $\| \vphi \|_2 \lesssim N^{-1-\alpha/2}$, by Lemma~\ref{Lemmavphi}. With Gronwall's lemma \cite[Theorem 1.2.2]{Pachpatte}, we obtain $f(s) \lesssim \langle \xi, (\cN_+ + 1) \xi \rangle$, for all $s \in [-1;1]$, proving (\ref{eq:growth2}). Inserting this estimate on the r.h.s. of (\ref{eq:growth3}) and integrating over $s$, we obtain (\ref{eq:growth1}). For $k > 1$, the bound (\ref{eq:growth2}) can be shown similarly.
\end{proof}

The operator $\B_3$ is chosen (similarly as we did with $\B_2$ in Section \ref{sec:quadra}) so that the commutator $[H_1 + Q_4 , \B_3]$, arising from conjugation with $e^{\B_3}$, cancels the main part of $Q_3$. The goal of this section is to use this cancellation to prove the following proposition. 
\begin{proposition}\label{propB3}
We have
\begin{equation}\label{eq:cubic-ham}
\begin{split} 
e^{-\B_3} &e^{-\B_2} H_N  e^{\B_2} e^{\B_3}
	\\ &= 4 \pi \aN (N-1) + \frac{1}{4}\sum_{|p|\leq N^{\alpha}} \frac{(8\pi \aN)^2}{p^2} \\
	&\quad +  \sum_{p} (p^2 + 8\pi \aN  \chi_{|p| \leq N^\alpha}) a^\dagger_p a_p + \frac{1}{2}\sum_{|p|\leq N^\alpha} 8 \pi \aN [a^\dagger_p a^\dagger_{-p} \frac{a_0 a_0 }N+\text{h.c.}]  + Q_4 + \mathcal{E}_{\B_3},
\end{split} 
\end{equation}
with 
\begin{equation} \label{eq:EB3}\begin{split} \pm \mathcal{E}_{\B_3} \lesssim \; & N^{-3\alpha/2} H_1 + N^{-\alpha/2} Q_4 + N^{-\alpha/2} (\cN_+ + 1) \\ &+ N^{(3\alpha-1)/2} (\cN_+ + 1)^{3/2} + N^{-1+5\alpha/2} (\cN_+ + 1)^2.  \end{split} \end{equation} 
\end{proposition}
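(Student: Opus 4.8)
The plan is to start from the identity of Proposition~\ref{propB2} and conjugate its right-hand side by $e^{\B_3}$, term by term. The two $c$-number contributions are inert (and are numerically unchanged, since $\tfrac14(8\pi\aN)^2=(4\pi\aN)^2$), so everything reduces to understanding $e^{-\B_3}(H_1+\tilde H_2+\tilde Q_2+Q_3+Q_4+\E_{\B_2})e^{\B_3}$. First I would dispose of the \emph{stable} pieces. The conjugated error $e^{-\B_3}\E_{\B_2}e^{\B_3}$ stays of the form \eqref{eq:eps-err} — hence absorbed into $\mathcal E_{\B_3}$ — once one has, besides Lemma~\ref{lemma:gronwall_B3}, Gr\"onwall-type relative bounds $e^{\mp\B_3}Q_4e^{\pm\B_3}\lesssim Q_4+(\cN_++1)^2$ and $e^{\mp\B_3}H_1e^{\pm\B_3}\lesssim H_1+(\cN_++1)^3$, proved exactly as Lemma~\ref{lemma:gronwall_B3} by controlling $[Q_4,\B_3]$ and $[H_1,\B_3]$. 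Next, $e^{-\B_3}\tilde H_2 e^{\B_3}=\tilde H_2+\mathcal O(N^{-\alpha/2}(\cN_++1))$, because $[\tilde H_2,\B_3]=(2\hat V(0)-8\pi\aN)[\cN_+,\B_3]$ and $[\cN_+,\B_3]$ (which is $\B_3$ with its h.c.\ sign flipped) is bounded via $ab\le\delta a^\dagger a+\delta^{-1}b^\dagger b$ with $\delta=N^{-1-\alpha/2}$ and $\|\tilde\vphi\|_2\lesssim N^{-1-\alpha/2}$ from Lemma~\ref{Lemmavphi}. Finally $e^{-\B_3}\tilde Q_2 e^{\B_3}=\tilde Q_2+\mathcal O(\cdots)$, with $[\tilde Q_2,\B_3]$ estimated in position space just like the error terms of Section~\ref{sec:quadra}.

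The core is $e^{-\B_3}(H_1+Q_3+Q_4)e^{\B_3}$, which I would treat along the lines of \eqref{realscattequ}--\eqref{BHm2}. By Duhamel, $[H_1,\B_3]$ carries the coefficient $2p\cdot(p+q)\tilde\vphi_p=(2p^2+2p\cdot q)\tilde\vphi_p$ in front of $a_{p+q}^\dagger a_{-p}^\dagger a_q a_0$, while $[Q_4,\B_3]$, after one contraction of an annihilation operator of $Q_4$ with a creation operator of $\B_3$, supplies a convolution $(\hat V_N*\vphi)$; by the truncated scattering equation \eqref{scattequ}, $2p^2\tilde\vphi_p=-\chi_{|p|>N^\alpha}[\hat V_N(p)+(\hat V_N*\vphi)(p)]$, so these combine to cancel $Q_3$ restricted to $\{|r|>N^\alpha,\ |q|\le N^\alpha\}$, leaving $[H_1+Q_4,\B_3]+Q_3=R_3$, where $R_3$ is a sum of: the $2p\cdot q$-part of $[H_1,\B_3]$; the genuinely quintic part of $[Q_4,\B_3]$; and the cutoff mismatches ($\vphi$ vs.\ $\tilde\vphi$, and the part of $Q_3$ with $|r|\le N^\alpha$ or $|q|>N^\alpha$) — each carrying an extra factor $\tilde\vphi$ or $\hat V_N$, or an extra factor $|q|\le N^\alpha$ to be traded against $H_1^{1/2}$. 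Crucially $R_3$ has no diagonal quadratic part: that term appears instead inside $[Q_3,\B_3]$, where contracting the $a_0$ of $\B_3$ with the $a_0^\dagger$ of $Q_3$ produces a factor $\cN_0=N-\cN_+$, and the further double contraction of the excitation legs yields $[Q_3,\B_3]=\Theta_3'+\rho_3$ with $\Theta_3'\approx 4\cN_0\big(\sum_q\hat V_N(q)\tilde\vphi_q\big)\sum_{|p|\le N^\alpha}a_p^\dagger a_p$ and $\rho_3$ a sum of (at most sixfold) operators carrying $\hat V_N\tilde\vphi$.

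The Duhamel bookkeeping then gives $e^{-\B_3}(H_1+Q_3+Q_4)e^{\B_3}=H_1+Q_4+\tfrac12\Theta_3'$ plus the (single and double) time integrals of the conjugated $R_3$, $\rho_3$ and $[Q_3,\B_3]$: here one uses $e^{-t\B_3}Q_3 e^{t\B_3}=Q_3+\int_0^t e^{-s\B_3}[Q_3,\B_3]e^{s\B_3}\,ds$ to recombine $Q_3$ with $-\int_0^1 e^{-t\B_3}Q_3 e^{t\B_3}\,dt$, and $e^{-s\B_3}\Theta_3' e^{s\B_3}=\Theta_3'+\mathcal O(N^{-\alpha/2}(\cN_++1))$ since $[\Theta_3',\B_3]$ is of the same type as $[\cN_+,\B_3]$. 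Writing $\Theta_3:=\tfrac12\Theta_3'$ and using \eqref{eq:approx_aN} in the form $2N\sum_q\hat V_N(q)\tilde\vphi_q=16\pi\aN-2\hat V(0)+\mathcal O(N^{\alpha-1})$, together with $(2\hat V(0)-8\pi\aN)\sum_{|p|>N^\alpha}a_p^\dagger a_p\lesssim N^{-2\alpha}H_1$ and $\cN_0=N-\cN_+$, one obtains $\tilde H_2+\Theta_3=8\pi\aN\sum_{|p|\le N^\alpha}a_p^\dagger a_p+(\text{errors})$, which is the diagonal quadratic term of \eqref{eq:cubic-ham}. It then remains to bound $R_3$, $\rho_3$, $[Q_3,\B_3]$ and $[\tilde Q_2,\B_3]$; I would do this by passing to position space and using Cauchy--Schwarz against $Q_4$ (pairing the singular kernel $\hat V_N$ with $\|Q_4^{1/2}\xi\|$), the bounds $\|\tilde\vphi\|_2\lesssim N^{-1-\alpha/2}$, $\|\tilde\vphi\|_1\lesssim 1$, $\|\tilde\vphi\|_\infty\lesssim N^{-1-2\alpha}$ of Lemma~\ref{Lemmavphi}, $\cN_0\le N$, and Lemma~\ref{lemma:gronwall_B3} to remove the conjugations — exactly as in the proofs of Lemmas~\ref{Q3}, \ref{lemma:Gamma2} and \ref{F2} — which produces the contributions $N^{-3\alpha/2}H_1+N^{-\alpha/2}Q_4+N^{-\alpha/2}(\cN_++1)+N^{(3\alpha-1)/2}(\cN_++1)^{3/2}+N^{-1+5\alpha/2}(\cN_++1)^2$ of \eqref{eq:EB3}.

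\textbf{Main obstacle.} The delicate point is bounding $[Q_3,\B_3]$ and $R_3$: these are cubic-times-cubic and fivefold operators of naive size $\mathcal O(N)$, so the gain must come entirely from the $\tilde\vphi$ factors and from carefully pairing $\hat V_N$ with $Q_4^{1/2}$, while the infrared cutoff at $|p|=N^\alpha$ must be used to kill both the uncancelled low-momentum-transfer part of $Q_3$ and the $2p\cdot q$-term, by trading the factor $|q|\le N^\alpha$ against $H_1^{1/2}$. Tracking the exponents through the single and double Duhamel integrals so that everything closes into \eqref{eq:EB3} — especially the weakest term $N^{(3\alpha-1)/2}(\cN_++1)^{3/2}$, which comes precisely from those $a_0$-rich contractions inside $[Q_3,\B_3]$ — is what needs the most care; a secondary point is establishing the $H_1$- and $Q_4$-relative boundedness of $e^{\B_3}$ used to absorb the conjugated $\E_{\B_2}$.
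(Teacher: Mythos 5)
Your proposal is correct and is essentially the paper's own proof: the same Duhamel bookkeeping with $\Gamma_3=[H_1+Q_4,\B_3]+Q_3$ cancelled through the truncated scattering equation \eqref{scattequ}, the same extraction of the diagonal quadratic term from the double time integral of $[Q_3,\B_3]$ recombined with $\tilde H_2$ via \eqref{eq:approx_aN}, the same Gr\"onwall a-priori bounds for $\cN_+$, $H_1$, $Q_4$, and the same position-space Cauchy--Schwarz-against-$Q_4$ error estimates (Lemmas \ref{H1B3}--\ref{lemma:Q2_B3}). Two small inaccuracies that do not affect the strategy: the a-priori bound for the conjugated kinetic energy must keep $Q_4$ on the right-hand side (cf.\ Lemma \ref{lm:a-priB3}; a bound of the form $H_1+(\cN_++1)^3$ alone is not attainable, since $[H_1,\B_3]$ is controlled only by pairing the singular kernel with $Q_4$), and the $N^{(3\alpha-1)/2}(\cN_++1)^{3/2}$ error originates from the low-momentum remainder $\tilde Q_{3,1}$ of $\Gamma_3$ and from $[\tilde Q_2,\B_3]$, not from $[Q_3,\B_3]$.
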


To prove Prop. \ref{propB3} we define
\begin{equation}\label{eq:Xi3} \Gamma_3 := \big[ H_1 + Q_4 , \cB_3 \big] + Q_3\,. \end{equation} 
Starting from (\ref{eq:B2}), we compute
\begin{align*}\nonumber
e^{-\B_3} e^{-\B_2} &H_N e^{\B_2} e^{\B_3} -  4 \pi \aN (N-1) - \sum_{|p|\leq N^{\alpha}} \frac{(4\pi \aN)^2}{p^2} -  e^{-\B_3}(\tilde H_2 + \tilde Q_2 + \mathcal E_{\B_2})e^{\B_3} \\
	&=  H_1 + Q_4 + \int_0^1 e^{-t\B_3} [H_1+Q_4,\B_3] e^{t\B_2} \dt +  e^{-\B_3} Q_3 e^{\B_3} \nonumber
\\ \nonumber
& = H_1 + Q_4 + \int_0^1 e^{-t\B_3} (-Q_3  + \Gamma_3) e^{t\B_3} \dt +  e^{-\B_3} Q_3 e^{\B_3}  
\end{align*}
which leads to 
\begin{equation}\label{BHm3} 
\begin{split} 
e^{-\B_3} e^{-\B_2} H_N e^{\B_2} e^{\B_3} = \; & 4 \pi \aN (N-1) + \sum_{|p|\leq N^{\alpha}} \frac{(4\pi \aN)^2}{p^2} +  H_1 + Q_4  + e^{-\B_3}(\tilde H_2 + \tilde Q_2 + \mathcal E_{\B_2} )e^{\B_3} \\ &+ \int_0^1 e^{-t\B_3}  \Gamma_3 e^{t\B_3} \dt + \int_0^1 \int_s^1 e^{-t{\B_3}}[Q_3,{\B_3}]e^{t{\B_3}} \dt \ds \, .\end{split} \end{equation} 
To show Prop. \ref{propB3}, we are going to control all terms on the r.h.s. of (\ref{BHm3}). We start by computing and estimating the commutator in (\ref{eq:Xi3}), defining the error term $\Gamma_3$. 
\begin{lemma}\label{H1B3}
We have 
\begin{align}
[H_1,\B_3] &=  \sum_{p,q,r}  \hat V_N(p-r) (\delta_{0,r}+\vphi_r)  \chi_{|p| > N^\alpha}  \chi_{|q| \leq N^\alpha} a^\dagger_{p+q} a^\dagger_{-p} a_q a_0 + \text{h.c.} + \mathcal{E}_{[H_1,\B_3]}, \label{eq:H1B3} \\
[Q_4,\B_3] &= \sum_{r,p,q} \hat V_N(p-r)\tilde \vphi_r  \chi_{|q| \leq N^\alpha} (a^\dagger_{p+q} a^\dagger_{-p} a_q a_0 + \text{h.c.})+ \mathcal{E}_{[Q_4,\B_3]}, \label{eq:Q4B3}
\end{align}
where  
\begin{equation} \label{eq:est_E3}
\begin{split}   
\pm\mathcal{E}_{[H_1,\B_3]} &\lesssim N^{-3\alpha/2} H_1 + N^{-1+5\alpha/2} (\N_++1)^2,  \\
\pm \mathcal{E}_{[Q_4,\B_3]} &\lesssim N^{-\alpha/2} Q_4 + N^{-1+5\alpha/2} (\N_++1)^2. 
\end{split} \end{equation}
\end{lemma}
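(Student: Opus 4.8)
The plan is to compute the two commutators $[H_1,\B_3]$ and $[Q_4,\B_3]$ explicitly, isolate the ``main terms'' displayed on the right-hand sides of \eqref{eq:H1B3}--\eqref{eq:Q4B3}, and bound the remainders by the operators on the right-hand side of \eqref{eq:est_E3}. Recall that $\B_3 = \sum_{p,q} \tilde\vphi_p \chi_{|q| \leq N^\alpha} a^\dagger_{p+q} a^\dagger_{-p} a_q a_0 - \text{h.c.}$, so each commutator is computed term by term using the CCR \eqref{eq:CCR}. For $[H_1,\B_3]$, I would write $H_1 = \sum_k k^2 a_k^\dagger a_k$ and use $[a_k^\dagger a_k, a^\dagger_{p+q} a^\dagger_{-p} a_q a_0] = (\delta_{k,p+q} + \delta_{k,-p} - \delta_{k,q}) a^\dagger_{p+q} a^\dagger_{-p} a_q a_0$ (the $a_0$ contributes nothing since the index-0 convention is suspended there but $H_1$ omits $k=0$). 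This produces the factor $\big((p+q)^2 + p^2 - q^2\big)\tilde\vphi_p = (2p^2 + 2p\cdot q)\tilde\vphi_p$. The key move is then to use the scattering equation \eqref{scattequ} in the form $2p^2\vphi_p = -\hat V_N(p) - \sum_{r\neq 0}\hat V_N(p-r)\vphi_r = -\sum_r \hat V_N(p-r)(\delta_{0,r}+\vphi_r)$ to rewrite $2p^2\tilde\vphi_p$ (valid since $\tilde\vphi_p = \vphi_p\chi_{|p|>N^\alpha}$) as the convolution term appearing in \eqref{eq:H1B3}; the leftover $2(p\cdot q)\tilde\vphi_p$ piece, with $|q|\leq N^\alpha$, is collected into $\mathcal E_{[H_1,\B_3]}$.

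For the error bound, the term $2(p\cdot q)\tilde\vphi_p\, a^\dagger_{p+q} a^\dagger_{-p} a_q a_0 + \text{h.c.}$ is estimated by Cauchy--Schwarz in position-like or momentum form: split off $|p|\,|\tilde\vphi_p| \lesssim N^{-1}|p|^{-1}\chi_{|p|>N^\alpha}$ (from the pointwise bound \eqref{eq:point}) and $|q|\chi_{|q|\leq N^\alpha}$, then estimate one creation pair by $H_1^{1/2}$ and the rest by powers of $(\cN_++1)$, using $\cN_0 \leq N$. Summing $|p|^{-2}\chi_{|p|>N^\alpha}$ against the remaining factors and optimizing the splitting parameter yields the claimed $N^{-3\alpha/2} H_1 + N^{-1+5\alpha/2}(\cN_++1)^2$. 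There is also a lower-order contribution where the $a_0$ in $\B_3$ meets a creation operator in $H_1$ — but since $H_1$ has no $k=0$ term this does not arise; what does arise are the contractions internal to $\B_3$ itself when commuting, and these are handled the same way.

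For $[Q_4,\B_3]$ the computation is longer: $Q_4$ is quartic, $\B_3$ is cubic (in $a_p^\dagger,a_p$ with one $a_0$), so the commutator has a ``leading'' piece where the contraction pairs an operator from $Q_4$ with one from $\B_3$ leaving the structure $a^\dagger a^\dagger a a_0$ with the convolution $\hat V_N(p-r)\tilde\vphi_r$ — this is the displayed main term in \eqref{eq:Q4B3} — plus genuinely higher-order pieces of the schematic form $a^\dagger a^\dagger a^\dagger a a a_0$ (from single contractions leaving more operators) which go into $\mathcal E_{[Q_4,\B_3]}$. These are bounded exactly as in the proof of Lemma \ref{Q3}: pass to position space where $Q_4$ becomes $\tfrac12\int V_N(x-y)\ca_x^\dagger\ca_y^\dagger\ca_x\ca_y$, use Cauchy--Schwarz to extract $Q_4^{1/2}$ from the $V_N$-weighted pair, bound the remaining factors by $\|\tilde\vphi\|_2 \lesssim N^{-1-\alpha/2}$, $\|V_N\|_1 \lesssim 1$ and powers of $(\cN_++1)$, with $\cN_0\leq N$ supplying the factors of $N$. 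Choosing the Cauchy--Schwarz parameter $\delta = N^{-\alpha/2}$ produces $N^{-\alpha/2} Q_4 + N^{-1+5\alpha/2}(\cN_++1)^2$.

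The main obstacle is purely bookkeeping: the commutator $[Q_4,\B_3]$ generates many terms depending on which pair of indices in $Q_4$ is contracted with which in $\B_3$, and one must check that every one of them either reassembles into the single clean convolution term displayed in \eqref{eq:Q4B3} or fits under the stated error bound — in particular that no term is genuinely of size $\gg N^{-\alpha/2}Q_4 + N^{-1+5\alpha/2}(\cN_++1)^2$ once the cutoffs $|p|>N^\alpha$ (from $\tilde\vphi$) and $|q|\leq N^\alpha$ are exploited. The cutoff $|p|>N^\alpha$ is what makes $\|p^{-2}\chi_{|p|>N^\alpha}\|_1$ and related sums small, and the cutoff $|q|\leq N^\alpha$ controls the $(p\cdot q)$ correction in $[H_1,\B_3]$; both must be used at the right places. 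No new analytic input beyond Lemma \ref{Lemmavphi}, the CCR, and the scattering equation is needed.
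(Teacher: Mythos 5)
Your proposal follows essentially the paper's own proof: expand $[H_1,\B_3]$ to get the weight $(p+q)^2+p^2-q^2=2p^2+2p\cdot q$, absorb $2p^2\tilde\vphi_p$ via the scattering equation \eqref{scattequ} into the convolution main term of \eqref{eq:H1B3}, bound the leftover $p\cdot q$ piece by a weighted Cauchy--Schwarz against $\delta H_1$ plus $\delta^{-1}N^{1+2\alpha}\|\tilde\vphi\|_2^2(\N_++1)^2$ with $\delta=N^{-3\alpha/2}$, and for $[Q_4,\B_3]$ normal-order so that the doubly contracted (quartic) piece gives the displayed $\hat V_N*\tilde\vphi$ term while the quintic remainders are estimated in position space by extracting $Q_4^{1/2}$ and using Lemma \ref{Lemmavphi}, exactly as in the paper. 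Two small bookkeeping cautions: in the $p\cdot q$ error keep the factor $|p|$ attached to the creation pair so it is absorbed into $H_1$ and only $\sum_p|\tilde\vphi_p|^2$ is summed (splitting off $|p||\tilde\vphi_p|$ as a scalar, as written, would force you to sum the divergent $\sum_{|p|>N^\alpha}|p|^{-2}$), and in $[Q_4,\B_3]$ the choice $\delta=N^{-\alpha/2}$ is not uniform over the quintic terms (the paper takes, e.g., $\delta=N^{-1-7\alpha/2}$ where a factor $N\|\chi_{|\cdot|\leq N^\alpha}\|_2^2$ multiplies $Q_4$).
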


\begin{proof}
A simple computation shows that 
\begin{align*}
 [H_1,\B_3] &= \sum_{p,q} \left[(p+q)^2 + p^2 - q^2 \right] \tilde \vphi_p \chi_{|q| \leq N^\alpha} a^\dagger_{p+q} a^\dagger_{-p} a_q a_0 + \text{h.c.}, \\
 	&= 2 \sum_{p,q} p^2 \tilde \vphi_p \chi_{|q| \leq N^\alpha} a^\dagger_{p+q} a^\dagger_{-p} a_q a_0 +\text{h.c.}+ \mathcal{E}_{[H_1,\B_3]},
\end{align*}
with
\begin{align*}
\mathcal{E}_{[H_1,\B_3]}& = 2 \sum_{p,q}   p\cdot q \tilde \vphi_p  \chi_{|q| \leq N^\alpha} a^\dagger_{p+q} a^\dagger_{-p} a_q a_0 +\text{h.c.}
\end{align*}
Using the scattering equation \eqref{scattequ} yields \eqref{eq:H1B3}. We now estimate $\mathcal{E}_{[H_1,\B_3]}$. Using $|q| \leq N^\alpha$, we find, for any $\delta>0$,
\begin{align*}
\pm \mathcal{E}_{[H_1,\B_3]}
	&= \pm 2 \sum_{p,q}  p\cdot q \tilde \vphi_p  \chi_{|q| \leq N^\alpha} a^\dagger_{p+q} a^\dagger_{-p} (\N_++1)^{-1/2} (\N_++1)^{1/2}  a_q a_0  + \text{h.c.} \\ 
	&\lesssim 
\delta \sum_{p,q} p^2 a^\dagger_{p+q} a^\dagger_{-p} (\N_++1)^{-1} a_{-p} a_{p+q} + \delta^{-1} \sum_{p,|q| \leq N^\alpha} q^2  |\tilde \vphi_p |^2 \, a^\dagger_{q} (\N_++1) a_{q}(a^\dagger_{0} a_{0}) \\
	&\lesssim \delta H_1 + \delta^{-1} N^{1+2\alpha} \|\tilde \vphi\|_2^2  (\N_++1)^2. \end{align*}
Choosing $\delta = N^{-3\alpha/2}$, we conclude that
\[ \pm \mathcal{E}_{[H_1,\B_3]} \lesssim N^{-3\alpha/2} H_1 + N^{-1+5\alpha/2} (\N_++1)^2 \, . \]

Let us now turn to (\ref{eq:Q4B3}). Recalling \eqref{eq:Qs} and \eqref{defB3}, we find
 \begin{align*}
[Q_4,\B_3] &= \frac{1}{2} \sum_{r,p,q} \sum_{m,n} \hat{V}_N (r)  \tilde \vphi_m \chi_{|n| \leq N^\alpha} \big[ a_{p+r}^\dagger a_q^\dagger a_p a_{q+r},  a_{m+n}^\dagger a_{-m}^\dagger a_n a_0 \big] + \text{h.c.} \\ &= 
 \frac{1}{2} \sum_{r,p,q} \sum_{m,n} \hat{V}_N (r)  \tilde \vphi_m \chi_{|n| \leq N^\alpha} \\ &\hspace{1.5cm} \times \Big\{ a_{p+r}^\dagger a_q^\dagger \big[ a_p a_{q+r} ,  a_{m+n}^\dagger a_{-m}^\dagger \big] a_n a_0 + a_{m+n}^\dagger a_{-m}^\dagger
\big[ a_{p+r}^\dagger a_q^\dagger ,  a_n a_0 \big] a_p a_{q+r} \Big\}\, .
\end{align*} 
Using (\ref{eq:CCR}) and rearranging all terms in normal order, we arrive at 
 \begin{align*}
 [Q_4,\B_3] &= \sum_{r,p,q} 
\hat V_N(p-r)\tilde \vphi_r  \chi_{|q| \leq N^\alpha} (a^\dagger_{p+q} a^\dagger_{-p} a_q a_0 + \text{h.c.})+ \mathcal{E}_{[Q_4,\B_3]}
\end{align*} 
where 
\[ \begin{split}
2 \mathcal{E}_{[Q_4,\B_3]}=  \; &- \sum_{p,q,m, r} \hat V_N(r) \tilde \vphi_m  \chi_{|p+r| \leq N^\alpha}  \, a^\dagger_{m+p+r} a^\dagger_{-m} a^\dagger_q a_{q+r} a_p a_0 \\
& -  \sum_{p,q,m, r} \hat V_N(r) \tilde \vphi_m \chi_{|q| \leq N^\alpha} \, a^\dagger_{m+q} a^\dagger_{-m} a^\dagger_{p+r} a_{q+r} a_p a_0 \\
& +  \sum_{p,q,m, r}  \hat V_N(r) \tilde \vphi_m  \chi_{|q+r-m| \leq N^\alpha} \, a^\dagger_{p+r} a^\dagger_{-m} a^\dagger_{q} a_{q+r-m} a_p a_0 \\
& +  \sum_{p,q,m, r}  \hat V_N(r) \tilde \vphi_m \chi_{|p-m| \leq N^\alpha} \, a^\dagger_{p+r} a^\dagger_{-m} a^\dagger_{q} a_{q+r} a_{p-m} a_0 \\
& +  \sum_{p,q,m, r} \hat V_N(r) \tilde \vphi_{-q-r}  \chi_{|m| \leq N^\alpha} \, a^\dagger_{p+r} a^\dagger_{q} a^\dagger_{-q-r+m} a_{p} a_{m} a_0 \\
& +  \sum_{p,q,m, r} \hat V_N(r) \tilde \vphi_{-p} \chi_{|p-m| \leq N^\alpha} \, a^\dagger_{p+r} a^\dagger_{q} a^\dagger_{-m} a_{q+r} a_{p-m} a_0  + \text{h.c.} =: \; \sum_{i=1}^6 \mathcal{E}_i  .
\end{split}\]
For a parameter $\delta > 0$, we find 
\begin{align*}
\pm \mathcal{E}_1
&\lesssim \delta \|V_N\|_{\infty}^2  \|\chi_{|\cdot|\leq N^\alpha}\|_{1}   \N_+^{3} + \delta^{-1} \|\tilde \vphi\|_2^2  \| \chi_{|\cdot|\leq N^\alpha}\|_{1} \N^2_+ \cN_0 \, , \\ 
&\lesssim N^{-1+5\alpha/2} \cN_+^2 \end{align*}
where in the last step, we chose $\delta = N^{-\alpha/2}$ and we used $\cN_0 \leq N$ (and Lemma \ref{Lemmavphi}). To estimate $\mathcal{E}_2, \dots, \mathcal{E}_6$, we switch to position space. For arbitrary $\delta > 0$, we find 
\begin{align*}
\pm \mathcal{E}_2 
	&= \pm \int_{\Lambda^2}  \dx  \dy  V_N(x-y) \check{\chi}_{|\cdot|\leq N^{\alpha}}(z-y) a^\dagger_z a^\dagger(\check{\tilde \vphi}^z) a^\dagger_x a_x a_y a_0 + \text{h.c.} \\
	&\leq \delta \|\chi_{|\cdot|\leq N^{\alpha}}\|_2^2 \, Q_4 a^\dagger_0 a_0 + \delta^{-1} 
	\int_{\Lambda^2}  \dx  \dy   V_N(x-y)  a^\dagger_z a^\dagger(\check{\tilde \vphi}^z) a^\dagger_x a_x a(\check{\tilde \vphi}^z) a_z \\
	&\lesssim \delta N^{1+3\alpha} Q_4 + \delta^{-1} N^{-3-\alpha} \cN_+^3  \\
	&\lesssim N^{-\alpha/2} Q_4 + N^{-2+5\alpha/2} \N_+^3,
\end{align*}
where, in the last line, we fixed $\delta = N^{-1-7\alpha/2}$. Similarly, we find \begin{align*}
\pm \mathcal{E}_3 
	&= \pm \int_{\Lambda^2}  \dx  \dy   V_N(x-y) a^\dagger_xa^\dagger_y a^\dagger(\check{\tilde \vphi}^z) a(\check{\chi}_{|\cdot|\leq N^{\alpha}}^y) a_x a_0 + \text{h.c.} \\
	&\lesssim \delta Q_4 + \delta^{-1} N \|V_N\|_{1} \|\check \chi_{|\cdot|\leq N^{\alpha}}\|_{2}^2 \|\tilde \vphi\|_2^2 (\N_++1)^3 \\
	&\lesssim N^{-\alpha/2} Q_4 + N^{-2+5\alpha/2} (\N_+ + 1)^3,
\end{align*}
taking $\delta = N^{-\alpha/2}$. Furthermore, for an arbitrary $\xi \in \cF$, we have 
\begin{align*}
|\langle \xi,  \mathcal{E}_4 \xi \rangle| &= \Big| \int_{\Lambda^2}  \dx  \dy   V_N(x-y) \check{\chi}_{|\cdot|\leq N^{\alpha}}(x-z) \langle \xi, a^\dagger_xa^\dagger_z a^\dagger(\check{\tilde \vphi}^x) a_y a_z a_0 \xi \rangle  \Big| \\
&\leq \| \check{\chi}_{|.| \leq N^\alpha} \|_\infty \int _{\Lambda^2}  \dx  \dy V_N (x-y) \| a_x a_z a (\check{\tilde\vphi}^x) \xi \| \| a_y a_z a_0 \xi \|  \\
&\lesssim  \| \chi_{|.| \leq N^\alpha} \|_1 \| \tilde\vphi \|_2 \| V_N \|_1 \left[ \langle \xi , \cN^3_+ \xi \rangle + \langle \xi, \cN_+^2 \cN_0 \xi \rangle \right]  \\
&\lesssim N^{-1+5\alpha/2}  \langle \xi, \cN_+^2 \xi \rangle.
\end{align*} 
As for $\mathcal{E}_5$, we estimate 
\begin{align*}
\pm \mathcal{E}_5
	&= \pm \int_{\Lambda^2}  \dx  \dy   V_N(x-y) \check{\tilde \vphi}(y-z) a^\dagger_xa^\dagger_y a^\dagger_z (\N_++1)^{-1/2}(\N_++1)^{1/2} a_x  a(\check{\chi}_{|\cdot|\leq  N^{\alpha}}^z) a_0 + \text{h.c.} \\
	&\leq \delta Q_4 + \delta^{-1} \int_{\Lambda^2}  \dx  \dy  V_N(x-y) |\check{\tilde \vphi}(z-y)|^2  a_x^\dagger a^\dagger (\check{\chi}^z_{|.| \leq N^\alpha}) a_0^\dagger (\cN_+ + 1) a_0 a (\check{\chi}^z_{|.| \leq N^\alpha})  a_x \\
	&\lesssim \delta Q_4 + \delta^{-1} N \|V_N\|_{1} \|\check \chi_{|\cdot|\leq N^{\alpha}}\|_{2}^2 \|\tilde \vphi\|_2^2 (\N_++1)^3 \\
	&\lesssim N^{-\alpha/2} Q_4 + N^{-2+5\alpha/2} \N_+^3,
\end{align*}
choosing again $\delta = N^{-\alpha/2}$. By a simple change of variable, it is easy to check that $\mathcal{E}_6 = \mathcal{E}_5$. This concludes the proof of the lemma. 
\end{proof}

With the bounds from the last lemma, we can estimate the operator $\Gamma_3$, defined in (\ref{eq:Xi3}). 
\begin{lemma} \label{lm:Xi3-bd} 
We have 
\begin{equation}\label{eq:Xi3-bd} \begin{split}  \pm \Gamma_3 \lesssim \; &N^{-3\alpha/2} H_1 + N^{-\alpha/2} Q_4 + N^{3\alpha/2-1/2} (\cN_+ + 1)^{3/2} + N^{-1+5\alpha/2} (\cN_+ + 1)^2.  \end{split} \end{equation} 
\end{lemma}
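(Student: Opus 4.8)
The statement of Lemma \ref{lm:Xi3-bd} is an immediate consequence of the decomposition $\Gamma_3 = [H_1+Q_4,\cB_3] + Q_3$ in \eqref{eq:Xi3}, combined with the explicit commutator formulas \eqref{eq:H1B3}, \eqref{eq:Q4B3} from Lemma \ref{H1B3}. The key observation is that the main (non-error) terms on the right-hand sides of \eqref{eq:H1B3} and \eqref{eq:Q4B3} are chosen precisely so that, when added together and combined with $Q_3$, they cancel up to manageable remainders. So the first step is to write
\[ \Gamma_3 = \Big( \sum_{p,q,r} \hat V_N(p-r)(\delta_{0,r}+\vphi_r)\chi_{|p|>N^\alpha}\chi_{|q|\leq N^\alpha}\, a^\dagger_{p+q}a^\dagger_{-p}a_q a_0 + \text{h.c.} \Big) + \Big( \sum_{p,q,r}\hat V_N(p-r)\tilde\vphi_r \chi_{|q|\leq N^\alpha}\, a^\dagger_{p+q}a^\dagger_{-p}a_q a_0 + \text{h.c.}\Big) + Q_3 + \mathcal{E}_{[H_1,\cB_3]} + \mathcal{E}_{[Q_4,\cB_3]}. \]
Since $\tilde\vphi_r = \vphi_r\chi_{|r|>N^\alpha}$, adding the first two brackets gives $\sum_{p,q,r}\hat V_N(p-r)(\delta_{0,r}+\vphi_r)\chi_{|q|\leq N^\alpha}\, a^\dagger_{p+q}a^\dagger_{-p}a_q a_0 + \text{h.c.}$, now with the constraint $\chi_{|p|>N^\alpha}$ removed but with an extra term $-\sum \hat V_N(p-r)(\delta_{0,r}+\vphi_r)\chi_{|p|\leq N^\alpha}\chi_{|q|\leq N^\alpha}a^\dagger_{p+q}a^\dagger_{-p}a_q a_0 + \text{h.c.}$ to be tracked. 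Using the scattering equation \eqref{scattequ} in the form $\sum_r \hat V_N(p-r)(\delta_{0,r}+\vphi_r) = \hat V_N(p) + 2\sum_{r\neq 0}\hat V_N(p-r)\vphi_r + \ldots$, or more directly $\widehat{V_N f}(p) = \hat V_N(p) + (\hat V_N * \vphi)(p) = -2p^2\vphi_p$, the leading bracket is seen to reproduce $-Q_3$ up to terms involving $-2p^2\vphi_p$ which are small because $|p^2\vphi_p|\lesssim N^{-1}$ is summable after the position-space rearrangement. So $\Gamma_3$ reduces, up to the already-estimated errors $\mathcal{E}_{[H_1,\cB_3]}+\mathcal{E}_{[Q_4,\cB_3]}$, to a sum of cubic terms of the schematic form $\sum \hat V_N(p-r)\vphi_r \chi_{|p|\leq N^\alpha}\chi_{|q|\leq N^\alpha} a^\dagger a^\dagger a$ plus a term with kernel $p^2\vphi_p$.

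\textbf{Estimating the remaining cubic terms.} For each such leftover cubic operator I would pass to position space, writing it as an integral $\int_{\Lambda^2}\dx\,\dy\, V_N(x-y)\, g(\cdot)\, \ca^\dagger_x \ca^\dagger_y \ca_z a_0 + \text{h.c.}$ with $g$ the relevant Fourier series (either $\check{\vphi}^<$ or $\check{\chi}_{|\cdot|\leq N^\alpha}$ or $\check{\vphi}$), then apply Cauchy–Schwarz splitting off a factor controlled by $Q_4^{1/2}$ and a factor controlled by $(\cN_++1)^{1/2}a_0$ or $(\cN_++1)$, exactly as in the proofs of Lemmas \ref{Q3}, \ref{H2}, \ref{lemma:Q2B2}. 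The $L^1$ and $L^2$ norms of $V_N$ scale like $N^{-1}$ and $N^{1/2}$ respectively, $\|\tilde\vphi\|_2\lesssim N^{-1-\alpha/2}$, $\|\check\vphi^<\|_\infty\lesssim\|\vphi^<\|_1\lesssim N^\alpha$, and $\|\chi_{|\cdot|\leq N^\alpha}\|_2\lesssim N^{3\alpha/2}$; balancing the parameter $\delta$ in Young's inequality then produces contributions bounded by $\delta Q_4 + \delta^{-1}(\text{powers of }N)(\cN_++1)^j$, and choosing $\delta = N^{-\alpha/2}$ yields terms of the form $N^{-\alpha/2}Q_4$ plus $N^{-1+5\alpha/2}(\cN_++1)^2$. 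The term with kernel $p^2\vphi_p$, which has $\ell^\infty$-norm $\lesssim N^{-1}$ but which is only $\ell^2$ (not $\ell^1$) in $p$, is treated by keeping one $a_q a_0$ pair: one estimates $\big|\sum p^2\vphi_p\,\langle\xi, a^\dagger_{p+q}a^\dagger_{-p}a_q a_0\xi\rangle\big|$ by Cauchy–Schwarz in $p$, getting $\|p^2\vphi\|_2 \|(\cN_++1)^{1/2}\xi\|\big(\sum_{p,q}\|a_{-p}a_q a_0\xi\|^2\big)^{1/2} \lesssim N^{-1}\cdot N^{1/2}\|(\cN_++1)^{1/2}\xi\|\|(\cN_++1)\xi\| = N^{-1/2}\|(\cN_++1)^{3/2}\xi\|^2$ — wait, more carefully one distributes so as to produce $N^{3\alpha/2-1/2}(\cN_++1)^{3/2}$, which is precisely the odd-looking middle term in \eqref{eq:Xi3-bd}; the factor $N^{3\alpha/2}$ enters because the $|q|\leq N^\alpha$ cutoff restricts the $q$-sum, giving $\|\chi_{|q|\leq N^\alpha}\|_1 \lesssim N^{3\alpha}$ under the square root.

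\textbf{Assembling the bound and main obstacle.} Collecting everything, the worst contributions are $N^{-3\alpha/2}H_1$ (coming only from $\mathcal{E}_{[H_1,\cB_3]}$), $N^{-\alpha/2}Q_4$, $N^{3\alpha/2-1/2}(\cN_++1)^{3/2}$, and $N^{-1+5\alpha/2}(\cN_++1)^2$, which together with the trivially dominated $N^{-\alpha/2}(\cN_++1)$ give exactly \eqref{eq:Xi3-bd}. I expect the main obstacle to be bookkeeping rather than any single hard estimate: one must verify that the cancellation between the two commutators and $Q_3$ is exact at leading order (this is the raison d'être of the choice of $\cB_3$, and uses the scattering equation \eqref{scattequ} together with $\tilde\vphi_r=\vphi_r\chi_{|r|>N^\alpha}$ — the low-momentum leftover $\chi_{|p|\leq N^\alpha}$ and $\chi_{|q|\leq N^\alpha}$ pieces must be carefully isolated), and then track which of the many residual cubic terms is genuinely responsible for each term in \eqref{eq:Xi3-bd}. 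The one estimate requiring slight care is the $p^2\vphi_p$-kernel term, because $p^2\vphi$ fails to be $\ell^1$; here one cannot simply bound the kernel in $\ell^1$ and must instead exploit the $|q|\leq N^\alpha$ cutoff and an extra power of $\cN_+$, which is the source of the fractional exponent $3\alpha/2-1/2$ and of the power $(\cN_++1)^{3/2}$ appearing in the statement.
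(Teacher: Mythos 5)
Your overall route is the same as the paper's (insert the commutator formulas \eqref{eq:H1B3}, \eqref{eq:Q4B3}, use the scattering equation \eqref{scattequ} to cancel against $Q_3$, and estimate the cubic leftovers by Cauchy--Schwarz in position/momentum space), but there is a genuine gap in your bookkeeping of the cancellation: the main terms of $[H_1,\cB_3]$ and $[Q_4,\cB_3]$ all carry the cutoff $\chi_{|q|\leq N^\alpha}$ inherited from the definition \eqref{defB3} of $\cB_3$, so they can only cancel the part of $Q_3$ with $|q|\leq N^\alpha$. The high-momentum piece $Q_3^{>}=\sum_{p,q}\hat V_N(p)\chi_{|q|>N^\alpha}\,a^\dagger_{p+q}a^\dagger_{-p}a_q a_0+\text{h.c.}$ (the paper's \eqref{eq:Q3>}) survives uncancelled and is nowhere accounted for in your list of leftovers; indeed you assert that the $N^{-3\alpha/2}H_1$ term comes ``only from $\mathcal{E}_{[H_1,\cB_3]}$'', which is false. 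Estimating $Q_3^{>}$ is exactly where the kinetic energy is needed a second time: in position space one gets $\pm Q_3^{>}\lesssim \delta Q_4+\delta^{-1}N\|V_N\|_1\sum_{|q|>N^\alpha}a_q^\dagger a_q$, and since $\sum_{|q|>N^\alpha}a_q^\dagger a_q\leq N^{-2\alpha}H_1$, the choice $\delta=N^{-\alpha/2}$ yields $N^{-\alpha/2}Q_4+N^{-3\alpha/2}H_1$. Without this term your decomposition of $\Gamma_3$ is simply not an identity, so the proof as written is incomplete (relatedly, your claim that adding the two commutator main terms ``removes the constraint $\chi_{|p|>N^\alpha}$'' is not correct as stated: the correct residue of that combination is the paper's $\tilde Q_{3,2}$, with kernel $\hat V_N(p-r)\vphi_r\chi_{|r|\leq N^\alpha}\chi_{|q|\leq N^\alpha}$, i.e.\ the low-momentum cutoff sits on $r$, not on $p$).

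A secondary, quantitative slip concerns the $p^2\vphi_p$ leftover (the paper's $\tilde Q_{3,1}$). The factor $N^{3\alpha/2}$ in the bound $N^{3\alpha/2-1/2}(\cN_++1)^{3/2}$ does not come from $\|\chi_{|q|\leq N^\alpha}\|_1$; it comes from the restriction $|p|\leq N^\alpha$ on the kernel itself, via $\|p^2\vphi_p\chi_{|p|\leq N^\alpha}\|_2\lesssim N^{-1+3\alpha/2}$ (using $|\vphi_p|\lesssim N^{-1}p^{-2}$ from Lemma \ref{Lemmavphi}), combined with one factor $\sqrt{N}$ from $a_0$ and Cauchy--Schwarz in $(p,q)$. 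If instead you use the global bound $\|p^2\vphi\|_2\lesssim N^{-1/2}$ (note $\|p^2\vphi\|_2$ is \emph{not} $\lesssim N^{-1}$, only the $\ell^\infty$ norm is) and additionally pay $N^{3\alpha/2}$ for the $q$-sum, the estimate overshoots badly. Since your ``extra term to be tracked'' does carry $\chi_{|p|\leq N^\alpha}$, this is fixable, but the argument you wrote does not produce the stated exponent for the right reason. With $Q_3^{>}$ added and these two points repaired, your proof coincides with the paper's.
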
 
\begin{proof}
With Lemma \ref{H1B3}, we find, using the scattering equation (\ref{scattequ}),  
\[ \Gamma_3 =  \big[ H_1 + Q_4 , \B_3 \big] + Q_3 = \tilde{Q}_{3,1} + \tilde{Q}_{3,2} 
+ Q_3^> + \mathcal{E}_{[H_1,\B_3]} + \mathcal{E}_{[Q_4,\B_3]}
\]
with 
\begin{align}
\tilde Q_{3,1} & = - \sum_{p,q} p^2 \vphi_p \chi_{|p| \leq N^\alpha} \chi_{|q| \leq N^\alpha} \, a^\dagger_{p+q} a^\dagger_{-p} a_q a_0 + \text{h.c.}, \label{eq:tQ3}\\
\tilde Q_{3,2} &=  - \sum_{p,q, r} \hat{V}_N (p-r) \vphi_r \chi_{|r| \leq N^\alpha}  \chi_{|q| \leq N^\alpha} \, a^\dagger_{p+q} a^\dagger_{-p} a_q a_0 + \text{h.c.}, \\
Q_3^{>} & =  \sum_{p,q}  \hat V_N(p) \chi_{|q| >  N^\alpha} a^\dagger_{p+q} a^\dagger_{-p} a_q a_0 +\text{h.c.} \label{eq:Q3>}
\end{align}
It follows easily from Lemma \ref{Lemmavphi} that $\| \chi_{|p| \leq N^{\alpha}} p^2 \vphi_p\|_{2}\lesssim N^{-1+3\alpha/2}$; thus  
\[  \pm \tilde Q_{3,1} \lesssim N^{-1/2 + 3\alpha/2} (\N_++1)^{3/2}. \]
Denoting $\vphi^{<}_p = \vphi_p \chi_{|p| \leq N^{\alpha}}$, we write $\tilde{Q}_{3,2}$ in position space as 
\begin{align*}
\pm \tilde Q_{3,2} 
	&= \pm \int_{\Lambda^3}  \dx  \dy  \dz  \, V_N (x-y) \check{\vphi}^{<} (x-y) \check{\chi}_{|\cdot|\leq N^{\alpha}}(x-z) a^\dagger_x a^\dagger_y a_z a_0 + \text{h.c.}\\
	&\leq \delta Q_4 + \delta^{-1} N \| V_N^{1/2} \check\vphi^< \|_2^2 \|\chi_{|\cdot|\leq N^{\alpha}}\|_\infty^2  (\N_++1) \\
	&\lesssim N^{-\alpha/2} Q_4 + N^{-2+5\alpha/2} (\cN_+ + 1) \, ,
\end{align*} 
where we chose $\delta = N^{-\alpha/2}$ and we used the estimate 
\begin{equation}\label{eq:dGchi} \int_\Lambda a^\dagger (\check{\chi}_{|.| \leq N^\alpha}^x) \, a (\check{\chi}_{|.| \leq N^\alpha}^x) \, \dx  = d\Gamma (\chi^2_{|p| \leq N^\alpha}) \leq \| \chi_{|.| \leq N^\alpha} \|_\infty^2 \, \cN_+ \leq \cN_+ \, . \end{equation}  
Proceeding similarly, we find 
\begin{align*}
\pm Q_3^{>} 
	&= \pm \int_{\Lambda^3} \dx  \dy  \dz  \, V_N (x-y) \check{\chi}_{|\cdot|> N^{\alpha}}(x-z) a^\dagger_x a^\dagger_y a_z a_0 + \text{h.c.}\\
	&\lesssim \delta Q_4 + \delta^{-1} N^{1-2\alpha} \| V_N \|_1 H_1 \lesssim N^{-\alpha/2} Q_4 + N^{-3\alpha/2} H_1,
\end{align*}
where we took $\delta = N^{-\alpha/2}$ and we used that  
\[ \int_\Lambda a^\dagger(\check{\chi}_{|\cdot|> N^{\alpha}}^{x}) a(\check{\chi}_{|\cdot|> N^{\alpha}}^x) \dx = \sum_{|p| > N^\alpha} a_p^\dagger a_p \leq N^{-2\alpha} H_1 \, . \]
Combining the bounds for $\tilde{Q}_{3,1} , \tilde{Q}_{3,2}, Q_3^>$ with the estimates for $\mathcal{E}_{[H_1,\B_3]}, \mathcal{E}_{[Q_4,\B_3]}$ from Lemma \ref{H1B3}, we obtain (\ref{eq:Xi3-bd}). 
 \end{proof}
 
 In order to obtain similar bounds also for the integral, we also need a-priori control on growth of $H_1, Q_4$. 
 \begin{lemma} \label{lm:a-priB3} 
We have 
 \begin{align}
e^{-s \B_3} Q_4 e^{s \B_3} &\lesssim Q_4 + \N_+ + 1 + N^{-1+5\alpha/2} (\N_++1)^2, \label{eq:gronwall_Q4_B3}\\
e^{-s \B_3} H_1e^{s \B_3} &\lesssim H_1 + Q_4 + \N_+ +1 + N^{-1+3\alpha} (\N_++1)^2. \label{eq:gronwall_H1_B3}
\end{align}
 \end{lemma}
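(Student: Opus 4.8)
The plan is to prove the two a-priori bounds in Lemma~\ref{lm:a-priB3} by a Gronwall argument analogous to Lemma~\ref{lemma:gronwall_B3}, using the commutators of $Q_4$ and $H_1$ with $\cB_3$ that were essentially already computed in Lemma~\ref{H1B3}. Concretely, set $g(s) = \langle \xi, e^{-s\cB_3} (Q_4 + \cN_+ + 1) e^{s\cB_3} \xi \rangle$ for $s \in [0;1]$ and differentiate: $g'(s) = \langle \xi, e^{-s\cB_3} [\,Q_4 + \cN_+ + 1, \cB_3\,] e^{s\cB_3} \xi \rangle$. The commutator $[\cN_+, \cB_3]$ is controlled exactly as in the proof of Lemma~\ref{lemma:gronwall_B3}, so the new input is $[Q_4, \cB_3]$, whose explicit normal-ordered form is given in the proof of Lemma~\ref{H1B3} (the main term $\sum \hat V_N(p-r)\tilde\vphi_r \chi_{|q|\le N^\alpha} a^\dagger_{p+q}a^\dagger_{-p}a_q a_0 + \text{h.c.}$ plus the six error terms $\mathcal E_1,\dots,\mathcal E_6$). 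I would bound each of these on the state $e^{s\cB_3}\xi$ by Cauchy--Schwarz, passing to position space as in Lemma~\ref{H1B3}, always splitting off a factor of $Q_4^{1/2}$ (or $H_1^{1/2}$) via a parameter $\delta$, so that everything is dominated by $\delta\, e^{-s\cB_3}(Q_4 + \text{const}\cdot H_1) e^{s\cB_3} + \delta^{-1}(\text{powers of } \cN_+)$; choosing $\delta$ a small negative power of $N$ and invoking \eqref{eq:growth2} to convert $e^{-s\cB_3}(\cN_++1)^k e^{s\cB_3}$ back into $(\cN_++1)^k$, one arrives at a differential inequality $g'(s) \lesssim g(s) + (\text{fixed polynomial in }\cN_+)$, from which Gronwall gives \eqref{eq:gronwall_Q4_B3}.

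For \eqref{eq:gronwall_H1_B3} the structure is the same but the commutator $[H_1, \cB_3]$ is more singular: from Lemma~\ref{H1B3}, $[H_1,\cB_3] = 2\sum p^2 \tilde\vphi_p \chi_{|q|\le N^\alpha} a^\dagger_{p+q}a^\dagger_{-p}a_q a_0 + \text{h.c.} + \mathcal E_{[H_1,\cB_3]}$, and after using the scattering equation \eqref{scattequ} the leading term becomes the already-analyzed $\tilde Q_{3,1}, \tilde Q_{3,2}, Q_3^>$-type expressions (or more precisely their $|p|>N^\alpha$ analogues), which by the computations in Lemma~\ref{lm:Xi3-bd} are bounded by $\delta\,(H_1+Q_4) + \delta^{-1}(\cN_++1)^{3/2}$ type quantities. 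Hence $h(s) := \langle \xi, e^{-s\cB_3}(H_1 + Q_4 + \cN_+ + 1) e^{s\cB_3}\xi\rangle$ satisfies, using also \eqref{eq:gronwall_Q4_B3} to control the $Q_4$ piece that has already been handled, a closed differential inequality $h'(s) \lesssim h(s) + N^{-1+3\alpha}(\cN_++1)^2$; Gronwall then yields \eqref{eq:gronwall_H1_B3}. It is convenient to prove \eqref{eq:gronwall_Q4_B3} first and feed it into the estimate for $h'(s)$, since the error terms multiplying $Q_4$ inside $[H_1,\cB_3]$ (through the $\mathcal E_{[Q_4,\cB_3]}$-type contributions appearing after the scattering equation substitution) would otherwise not close.

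The main obstacle I expect is bookkeeping of the many error terms with the correct powers of $N^\alpha$ so that the prefactors are genuinely small (i.e.\ the coefficient of $e^{-s\cB_3}(H_1+Q_4)e^{s\cB_3}$ on the right is $\lesssim N^{-c}$ for some $c>0$, as needed for Gronwall to give a uniform-in-$s$ bound rather than merely a finite one), together with keeping track of which factor of $a_0$ contributes an $N^{1/2}$ versus being bounded by $\cN_+$. In particular, in the position-space estimates for the cubic error terms, one must be careful that the momentum cutoff $\chi_{|q|\le N^\alpha}$ — equivalently the kernel $\check\chi_{|\cdot|\le N^\alpha}$, whose $L^2$ norm grows like $N^{3\alpha/2}$ while its $L^\infty$ norm grows like $N^{3\alpha}$ — is used with the norm that yields the weakest growth for the term at hand, exactly as was done term by term in Lemma~\ref{H1B3}. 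Once the differential inequalities are set up with these choices, the Gronwall step (via \cite[Theorem 1.2.2]{Pachpatte}, as in Lemma~\ref{lemma:gronwall_B3}) is routine.
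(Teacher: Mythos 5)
Your strategy coincides with the paper's proof: a Gronwall argument in $s$ for $\langle\xi,e^{-s\B_3}Q_4e^{s\B_3}\xi\rangle$ and $\langle\xi,e^{-s\B_3}H_1e^{s\B_3}\xi\rangle$, using the commutator formulas and error bounds already established in Lemma \ref{H1B3}, the growth estimate \eqref{eq:growth2}, and proving \eqref{eq:gronwall_Q4_B3} first so it can be fed into the differential inequality for $H_1$.

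Two points in your plan need correcting, one of which would derail it as written. You assert that the coefficient of $e^{-s\B_3}(H_1+Q_4)e^{s\B_3}$ on the right of the differential inequality must be $\lesssim N^{-c}$ ``as needed for Gronwall to give a uniform-in-$s$ bound''. This is neither needed nor achievable. Gronwall is applied on the fixed interval $s\in[0;1]$, so an order-one coefficient only costs a constant factor $e^{C}$ --- which is precisely why $Q_4$ (resp.\ $H_1+Q_4$) appears with an $O(1)$ constant on the right-hand side of the lemma. Moreover, the leading term of $[Q_4,\B_3]$ is genuinely of size $Q_4+\cN_+$: in position space one estimates it by $Q_4+N\|V_N\|_1\|\check{\tilde\vphi}\|_\infty^2\,\cN_+\lesssim Q_4+\cN_+$, and forcing a prefactor $N^{-c}$ on $Q_4$ via Cauchy--Schwarz would put $N^{+c}$ in front of $\cN_+$, giving a weaker statement than \eqref{eq:gronwall_Q4_B3}. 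The same remark applies to the $[H_1,\B_3]$ bound: the term with kernel $V_N(1+\check\vphi)$ (the $|p|>N^\alpha$ part, after writing $\chi_{|p|>N^\alpha}=1-\chi_{|p|\leq N^\alpha}$) does not enjoy the smallness that $\tilde Q_{3,2}$, $Q_3^>$ had in Lemma \ref{lm:Xi3-bd} (there the cutoff $\vphi^<$ made $\|V_N^{1/2}\check\vphi^<\|_2$ small); here one only has $\|1+\check\vphi\|_\infty\lesssim1$, and the correct bound is $O(1)(Q_4+\cN_+)$, which is why $Q_4$ enters \eqref{eq:gronwall_H1_B3} at all. Second, in the Gronwall for $Q_4$ you allow yourself to split off $H_1^{1/2}$: you must not, since \eqref{eq:gronwall_Q4_B3} has no $H_1$ on its right-hand side and the $H_1$ bound is only obtained afterwards using \eqref{eq:gronwall_Q4_B3}, so any genuine appearance of $H_1$ there would make the argument circular. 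Fortunately none arises: every term of $[Q_4,\B_3]$, the main term as well as $\mathcal{E}_{[Q_4,\B_3]}$, is controlled by $Q_4+\cN_++N^{-1+5\alpha/2}(\cN_++1)^2$ alone. With these two adjustments your argument is exactly the one in the paper.
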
 
\begin{proof}
For arbitrary $\xi \in \cF$, we define $f(s) =  \langle \xi, e^{-s\B_3} Q_4 e^{s \B_3} \xi \rangle$, so that 
\[ f' (s) = \langle \xi, e^{-s \B_3} [ Q_4 , \cB_3 ]  e^{s \B_3} \xi \rangle \, . \]
From Lemma \ref{H1B3}, we find
\[ [Q_4,\B_3] = \sum_{r,p,q} \hat V_N(p-r)\tilde \vphi_r  \chi_{|q| \leq N^\alpha} (a^\dagger_{p+q} a^\dagger_{-p} a_q a_0 + \text{h.c.})+ \mathcal{E}_{[Q_4,\B_3]}
. \]
where 
\[ \pm \mathcal{E}_{[Q_4,\B_3]}
 \lesssim N^{-\alpha/2} Q_4 + N^{-1+5\alpha/2} (\cN_+ + 1)^2 \, . \]
Switching to position space, we have 
\[ \begin{split}  \sum_{r,p,q} \hat V_N(p-r) &\tilde \vphi_r  \chi_{|q| \leq N^\alpha} (a^\dagger_{p+q} a^\dagger_{-p} a_q a_0 + \text{h.c.}) \\ &=  \sum_{r,p,q} \hat V_N(p-r)\tilde \vphi_r  \chi_{|q| \leq N^\alpha} (a^\dagger_{p+q} a^\dagger_{-p} a_q a_0 + \text{h.c.})\\
	&= \int_{\Lambda^2} \dx  \dy  \, V_N (x-y) \check{\tilde\vphi} (x-y) a^\dagger_x a^\dagger_y a( \check{\chi}_{|\cdot| \leq N^\alpha}^x) a_0 + \text{h.c.} \\
	&\lesssim Q_4 + \int \dx  \dy  \, V_N(x-y) |\check{\tilde\vphi} (x-y)|^2  a_0^\dagger a^\dagger (\check{\chi}^x_{|.| \leq N^\alpha}) a (\check{\chi}^x_{|.| \leq N^\alpha}) a_0 \\ &\lesssim Q_4 + N \| V_N \|_1 \| \tilde \vphi \|_\infty^2 \| \chi_{|.| \leq N^\alpha}^2 \|_\infty \lesssim Q_4 +  \N_+ ,
\end{split}\]
where we used Lemma \ref{Lemmavphi} and we argued as in (\ref{eq:dGchi}). We conclude that 
\[ \pm \big[ Q_4 , \B_3 \big] \lesssim Q_4 + \cN_+ + N^{-1+5\alpha/2} (\cN_+ + 1)^2. \]
Therefore, using Lemma \ref{Gron}, we find 
\[ f' (s) \lesssim f(s) + \langle \xi, \cN_+ \xi \rangle + N^{-1+5\alpha/2} \langle \xi , (\cN_+ + 1)^2 \xi \rangle. \]
By Gronwall, we obtain (\ref{eq:gronwall_Q4_B3}). 

To prove (\ref{eq:gronwall_H1_B3}), we proceed similarly. For $\xi \in \cF$, we define $g(s) =  \langle \xi, e^{-s\B_3} H_1 e^{s \B_3} \xi \rangle$, for any $|s| \leq 1$, which leads to 
\[ g' (s) = \langle \xi , e^{-\cB_3} H_1 e^{s\cB_3} \xi \rangle. \]
From Lemma \ref{H1B3}, we have 
\[ \big[ H_1, \B_3 \big] = \sum_{p,q,r} \hat{V}_N (p-r) (\delta_{0,r} + \vphi_r) \chi_{|p| > N^\alpha} \chi_{|q| \leq N^\alpha} a_{p+q}^\dagger a_{-p}^\dagger a_q a_0 + \text{h.c.} + \mathcal{E}_{[H_1,\B_3]}
 \]
where
\[ \pm \mathcal{E}_{[H_1,\B_3]}
 \leq N^{-3\alpha/2} H_1 + N^{-1+5\alpha/2} (\cN_+ + 1)^2 \,. \]
Writing $\chi_{|p| > N^\alpha} = 1 - \chi_{|p| \leq N^\alpha}$, we decompose
\[  \sum_{p,q,r} \hat{V}_N (p-r) (\delta_{0,r} + \vphi_r) \chi_{|p| > N^\alpha} \chi_{|q| \leq N^\alpha} a_{p+q}^\dagger a_{-p}^\dagger a_q a_0 + \text{h.c.} = \mathcal{E}_1 + \mathcal{E}_2 \]
where 
\[ \begin{split} \mathcal{E}_1 &= \pm \int_{\Lambda^2} \dx  \dy  V_N (x-y) (1 + \check{\vphi}) (x-y) a_x^\dagger a_y^\dagger a(\check{\chi}_{|.| \leq N^\alpha}^x) a_0 + \text{h.c.}  \\
&\lesssim  Q_4 +  N \|V_N\|_{1} \|(1+\check \vphi)\|_{\infty}^2 \|\chi_{|\cdot|\leq N^\alpha}\|_{\infty}^2 \N_+ \lesssim Q_4 + \N_+ 
\end{split}\] 
and  
\[ \begin{split} 
\pm \mathcal{E}_2 &= \pm \sum_{p,q,r} \hat{V}_N (p-r) (\delta_{0,r} + \vphi_r ) \chi_{|p| \leq N^\alpha} \chi_{|q| \leq N^\alpha} a_{p+q}^\dagger a_{-p}^\dagger a_q a_0 + \text{h.c.}  
\\ & \lesssim \| V_N \|_1 \| (1+ \check{\vphi}) \|_\infty \Big[ \delta \cN_+^2 + \delta^{-1} N \| \chi_{|.| \leq N^\alpha} \|_2^2 \cN_+ \Big] \\ &\lesssim N^{-1} \Big[ \delta \cN_+^2 + \delta^{-1} N^{1+3\alpha} \cN_+ \Big]  \lesssim \cN_+ + N^{-1+3\alpha} \cN_+^2\, ,
\end{split} \]
choosing in the last step $\delta = N^{3\alpha}$. Thus, with Lemma \ref{Gron}, we find 
\[ g' (s) \lesssim f(s) + g(s) + \langle \xi, \cN_+ \xi \rangle + N^{-1+3\alpha} \langle \xi, (\cN_+ + 1)^2 \xi \rangle. \]
With (\ref{eq:gronwall_Q4_B3}) and applying Gronwall's lemma, we obtain (\ref{eq:gronwall_H1_B3}). 
\end{proof} 

In the next lemma, we control the contribution on the r.h.s. of (\ref{BHm3}), arising from the commutator $[Q_3, \cB_3]$. 
\begin{lemma}\label{lemma:Q3B3}
We have
$$ 
    \int_0^1\int_s^1 e^{-t\B_3} [Q_3,B_3]e^{t\B_3} \dt\ds =  2 (8\pi \mathfrak{a}_N - \hat{V} (0)) \cN_+  + \mathcal{E}_{[Q_3,\B_3]}, $$
    with 
    $$ \pm  \mathcal{E}_{[Q_3,\B_3]} \lesssim  N^{-2\alpha} H_1 + N^{-\alpha/2} Q_4 + \big[ N^{-\alpha/2} + N^{-1+\alpha} \big] (\cN_+ + 1)  + N^{-1+5\alpha/2} (\N_++1)^2. 
$$
\end{lemma}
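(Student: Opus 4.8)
The plan is to compute the double integral $\int_0^1\int_s^1 e^{-t\B_3}[Q_3,\B_3]e^{t\B_3}\,\dt\,\ds$ by first computing the commutator $[Q_3,\B_3]$ explicitly, isolating its leading term, and then handling the conjugation by $e^{t\B_3}$ perturbatively. Recall $Q_3 = \sum_{q,r} \hat V_N(r)[a_{q+r}^\dagger a_{-r}^\dagger a_q a_0 + \text{h.c.}]$ and $\B_3 = \sum_{m,n}\tilde\vphi_m\chi_{|n|\le N^\alpha}a_{m+n}^\dagger a_{-m}^\dagger a_n a_0 - \text{h.c.}$. The first step is to normal-order $[Q_3,\B_3]$ using the CCR \eqref{eq:CCR}. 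The only term that survives to leading order is the one where the annihilation operators of $Q_3$ are contracted against two of the creation operators of $\B_3$ (and the $a_0^\dagger a_0$ factors each contribute $\N_0 \approx N$): this produces a quadratic operator of the form $N\sum_{q}(\hat V_N\ast\text{something involving }\tilde\vphi)(q)\,\chi_{|q|\le N^\alpha}\,a_q^\dagger a_q$. Combining this with the scattering equation \eqref{scattequ} and the definition \eqref{eq:aNdef} of $\aN$, together with \eqref{eq:approx_aN}, one identifies the coefficient: the $t,s$-integrals just give an overall factor of $1/2$, and the main term becomes $2(8\pi\aN - \hat V(0))\cN_+$, exactly as in the statement (the analogous computation already appeared in the $[Q_2,\B_2]$ analysis, Lemma \ref{lemma:Q2B2}, and in \cite{Hainzl,BBCS4}).

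The next step is to collect all the remaining terms from $[Q_3,\B_3]$ — these are a sum of (normal-ordered) operators, some quadratic in creation/annihilation operators carrying extra factors of $\N_+/N$, some genuinely cubic or quartic (products $a^\dagger a^\dagger a^\dagger a a a$ type, or $a^\dagger a^\dagger a^\dagger a^\dagger a a a a$ type), and the $\cN_0$-expansion remainders coming from writing $\N_0(\N_0-1) = N(N-1) - 2N\N_+ + \N_+(\N_+ + 1)$. For each such term I would switch to position space (as in the proofs of Lemma \ref{Q3}, Lemma \ref{lemma:Q2B2} and Lemma \ref{H1B3}) and estimate using $V_N$ as a pointwise weight: Cauchy--Schwarz pairing gives a factor $Q_4^{1/2}$ on one side and then a combination of $\|\tilde\vphi\|_2$, $\|\tilde\vphi\|_\infty$, $\|\check\vphi^<\|_\infty \lesssim N^\alpha$, $\|\chi_{|\cdot|\le N^\alpha}\|_1 \lesssim N^{3\alpha}$, and powers of $\cN_+$ or $\cN_0 \le N$ on the other. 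Choosing the splitting parameter $\delta$ optimally (typically $\delta = N^{-\alpha/2}$, sometimes a different power) converts each contribution into $N^{-\alpha/2}Q_4$ plus powers of $(\cN_+ + 1)$; the terms with an uncancelled $H_1$ (coming from the high-momentum part, as in the $Q_3^>$ estimate of Lemma \ref{lm:Xi3-bd}) give the $N^{-2\alpha}H_1$ contribution.

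To control the conjugation, I would use Duhamel: $e^{-t\B_3}[Q_3,\B_3]e^{t\B_3} = [Q_3,\B_3] + \int_0^t e^{-\tau\B_3}[[Q_3,\B_3],\B_3]e^{\tau\B_3}\,d\tau$ for the leading quadratic piece (which is stable under conjugation up to errors controlled by Lemma \ref{lemma:gronwall_B3}, since $\pm([Q_3,\B_3] - 2(8\pi\aN-\hat V(0))\cN_+)$ is bounded by the error operator, and $e^{-\tau\B_3}\cN_+ e^{\tau\B_3} - \cN_+$ is small by \eqref{eq:growth1}); for the genuinely higher-order pieces one does not even expand — one simply conjugates the position-space estimate directly, using Lemma \ref{lemma:gronwall_B3} to replace $e^{\pm s\B_3}(\cN_+ + 1)^k e^{\mp s\B_3}$ by $(\cN_+ + 1)^k$ and Lemma \ref{lm:a-priB3} to replace $e^{\pm s\B_3}H_1 e^{\mp s\B_3}$ and $e^{\pm s\B_3}Q_4 e^{\mp s\B_3}$ by $H_1 + Q_4 + \text{lower order}$. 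The $t,s$-integrations only contribute harmless $\mathcal O(1)$ constants. The main obstacle is bookkeeping: $[Q_3,\B_3]$ generates a substantial number of normal-ordered terms (more than in the $Q_2$ case because $Q_3$ is cubic and $\B_3$ is cubic), and one must check that every single one of them, after the position-space estimate with its optimal $\delta$, lands inside the claimed error bound $N^{-2\alpha}H_1 + N^{-\alpha/2}Q_4 + [N^{-\alpha/2} + N^{-1+\alpha}](\cN_+ + 1) + N^{-1+5\alpha/2}(\N_+ + 1)^2$ — in particular that none of the quartic-in-$\cN$ remainders from the $\N_0(\N_0-1)$ expansion produces a power worse than $N^{-1+5\alpha/2}(\cN_++1)^2$, which is where the $5\alpha/2$ exponent (and ultimately the constraint on $\alpha$ and the final error $N^{-1/17}$) comes from.
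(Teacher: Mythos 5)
Your proposal follows essentially the same route as the paper's proof: expand $[Q_3,\B_3]$ by normal ordering, identify the fully contracted quadratic term whose coefficient $N\sum_p\hat V_N(p)\tilde\vphi_p$ is converted to $8\pi\aN-\hat V(0)$ via \eqref{eq:approx_aN} (the scattering equation itself is not needed here), bound all remaining terms in position space by $N^{-\alpha/2}Q_4$ plus powers of $(\cN_++1)$, control the conjugation with Lemmas \ref{lemma:gronwall_B3} and \ref{lm:a-priB3}, take the factor $1/2$ from the $t,s$-integrals, and pay $N^{-2\alpha}H_1$ to replace $\sum_{|q|\le N^\alpha}a_q^\dagger a_q$ by $\cN_+$. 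Only minor sketch-level slips occur (e.g.\ the un-integrated commutator has leading part $\simeq 4(8\pi\aN-\hat V(0))\sum_{|q|\le N^\alpha}a_q^\dagger a_q$, not $2(\ldots)\cN_+$, and the replacement $\hat V_N(p-q)\to\hat V_N(p)$ for $|q|\le N^\alpha$ should be made explicit), none of which affects the validity of the approach.
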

\begin{proof}
We compute
\begin{equation}\label{eq:IandII}
\begin{split} 
[Q_3,\B_3] &= \sum_{p,q,r,s
} \hat V_N(p) \tilde \vphi_r \chi_{|s|\leq N^\alpha} [a^\dagger_{p+q} a^\dagger_{-p}a_q a_0 + a^\dagger_0 a^\dagger_q a_{-p} a_{p+q}, a^\dagger_{r+s}a^\dagger_{-r} a_sa_0 - a^\dagger_0 a^\dagger_sa_{-r} a_{r+s}] \\
	&=\sum_{p,q,r,s
} \hat V_N(p) \tilde \vphi_r \chi_{|s|\leq N^\alpha} [a^\dagger_{p+q} a^\dagger_{-p}a_q a_0, a^\dagger_{r+s}a^\dagger_{-r} a_sa_0] + \text{h.c.} \\
	&\quad + \sum_{p,q,r,s
} \hat V_N(p) \tilde \vphi_r \chi_{|s|\leq N^\alpha} [a^\dagger_0 a^\dagger_sa_{-r} a_{r+s},a^\dagger_{p+q} a^\dagger_{-p}a_q a_0] + \text{h.c.} =: (\textrm{I}) + (\textrm{II}).
\end{split}\end{equation}

We start by estimating the term $(\textrm{I})$. With the canonical commutation relations, we obtain 
\begin{equation}\label{Q3B3_1}
\begin{split} 
(\textrm{I}) = \; &\sum_{p,q,r}\hat V_N(p) \left[ \tilde \vphi_r \chi_{|q-r|\leq N^\alpha} a^\dagger_{p+q} a^\dagger_{-p}a^\dagger_{-r} a_{q-r} +
\tilde \vphi_q \chi_{|r|\leq N^\alpha} a^\dagger_{p+q} a^\dagger_{-p}a^\dagger_{-q+r} a_{r}  \right. \\ &\hspace{1cm} - \left.
\tilde \vphi_r \chi_{|q+p|\leq N^\alpha} a^\dagger_{r+p+q} a^\dagger_{-r}a^\dagger_{-p} a_{q} -
\tilde \vphi_r \chi_{|p|\leq N^\alpha} a^\dagger_{r-p} a^\dagger_{-r}a^\dagger_{p+q} a_{q}\right] a_0a_0 + \text{h.c.}  \\ =: \; &(\textrm{I})_a + (\textrm{I})_b + (\textrm{I})_c + (\textrm{I})_d. 
\end{split} \end{equation} 
To estimate the first term, we rewrite it in position space. We find 
\begin{align*}
\pm (\textrm{I})_a &= \pm \int_{\Lambda^2} \dx  \dy  \, V_N(x-y) a^\dagger_x a^\dagger_y  a^\dagger(\check {\tilde \vphi}^x) a(\check{ \chi}^x_{|\cdot| \leq N^\alpha}) a_0 a_0 + \text{h.c.} \\
	&\leq \delta Q_4 + \delta^{-1} \int_{\Lambda^2} \dx  \dy  \, V_N(x-y)  a^\dagger (\check{ \chi}^x_{|\cdot| \leq N^\alpha}) a(\check {\tilde \vphi}^x)a^\dagger(\check {\tilde \vphi}^x)  a(\check{ \chi}^x_{|\cdot| \leq N^\alpha}) a^\dagger_0a^\dagger_0 a_0 a_0 \\
	&\lesssim \delta Q_4 + \delta^{-1} N^2 \|\tilde \vphi\|_2^2 \|V_N\|_1 \|\chi_{|\cdot| \leq N^\alpha}\|_{\infty}^2 (\N_++1)^2 \\
	&\lesssim \delta Q_4 + \delta^{-1} N^{-\alpha-1} (\N_++1)^2 \leq N^{-\alpha/2} Q_4 + N^{-1-\alpha/2} (\cN_+ + 1)^2 
\end{align*}
where we used that $\N_0 \leq N$ and the bound (\ref{eq:dGchi}) and, in the last step, we set $\delta = N^{-\alpha/2}$. The second term in \eqref{Q3B3_1} is dealt with similarly. We obtain 
\begin{align*}
\pm (\textrm{I})_b &= \pm \int_{\Lambda^3} \dx  \dy  \dz  \,  V_N(x-y)   \check {\tilde \vphi}(x-z) \, a^\dagger_x a^\dagger_y  a^\dagger_z  a(\check{ \chi}^z_{|\cdot| \leq N^\alpha}) a_0 a_0 + \text{h.c.} \\
	&\leq \delta Q_4 + \delta^{-1}\int_{\Lambda^3} \dx  \dy  \dz  \, V_N(x-y)  |\check {\tilde \vphi}(x-z)|^2  a^\dagger(\check{ \chi}^z_{|\cdot| \leq N^\alpha}) (\N_++1)a(\check{ \chi}^z_{|\cdot| \leq N^\alpha}) a^\dagger_0a^\dagger_0a_0a_0 \\
	& \lesssim \delta Q_4 + \delta^{-1} N^2 \|\tilde \vphi\|_2^2 \|V_N\|_{1} \|\chi_{|\cdot| \leq N^\alpha}\|_{\infty}^2 (\N_++1)^2 \\
	&\lesssim  N^{-\alpha/2} Q_4 + N^{-1-\alpha/2} (\cN_+ + 1)^2 
\end{align*}
choosing again $\delta= N^{-\alpha/2}$. As for the third term in \eqref{Q3B3_1}, we bound it, for an arbitrary $\delta > 0$, with Cauchy-Schwarz by   
\begin{align*}
\pm (\textrm{I})_c&\lesssim \delta \sum_{p,q,r}|\hat V_N(p)| a^\dagger_{r+p+q} a^\dagger_{-r}a^\dagger_{-p} (\N_+ +1)^{-1}a_{-p} a_{-r} a_{r+p+q} \\
	&\qquad + \delta^{-1} \sum_{p,q,r} |\hat V_N(p)| |\tilde \vphi_r|^2 \chi_{|q+p|\leq N^\alpha} a^\dagger_{q}(\N_++1)a_{q}a^\dagger_0a^\dagger_0a_0a_0 \\
	&\lesssim (\delta \|\hat V_N\|_{\infty} + \delta^{-1} N^2 \|\tilde \vphi\|_{2}^2 \|\hat V_N \ast \chi_{|\cdot|\leq N^{\alpha}}\|_{\infty})(\N_++1)^2  \\
	&\lesssim N^{- 1+5\alpha/2} (\N_+ + 1)^2
\end{align*}
where at the end we took $\delta = N^{-\alpha}$ and we used $\|\hat V_N \ast \chi_{|\cdot|\leq N^{\alpha}}\|_{\infty} \leq \|\hat V_N\|_{\infty} \|\chi_{|\cdot|\leq N^{\alpha}}\|_{1} \lesssim N^{-1+3\alpha}$. The last term in \eqref{Q3B3_1} can be bounded, again by Cauchy-Schwarz, by 
\begin{align*}
\pm (\textrm{I})_d &\lesssim  \delta (\cN_+ + 1)^2 + \delta^{-1} N^2 \| \hat{V}_N \|_\infty^2 \| \tilde\vphi \|_2^2 \| \chi_{|.| \leq N^\alpha} \|_1 (\cN_+ + 1)^2  \lesssim N^{-1+\alpha} (\cN_+ + 1)^2 
\end{align*}
where we used $\delta = N^{\alpha}$. 

Let us now consider the term $(\textrm{II})$ in (\ref{eq:IandII}). We write 
\begin{equation}\label{eq:IIaIIb} \begin{split} (\textrm{II}) &= \sum_{p,q,r,s} \hat{V}_N (p) \tilde\vphi_r \chi_{|s| \leq N^\alpha}  \\ &\hspace{1cm} \times 
\Big\{ a_0^\dagger a_s^\dagger \big[ a_{-r} a_{r+s} , a^\dagger_{p+q} a^\dagger_{-p} \big] a_q a_0 + a_{p+q}^\dagger a_{-p}^\dagger \big[ a_0^\dagger a_s^\dagger , a_q a_0 \big] a_{-r} a_{r+s} \Big\}  + \text{h.c.} \\ &=: (\textrm{II})_a + (\textrm{II})_b \,. \end{split} \end{equation} 
With 
\[ a_{p+q}^\dagger a_{-p}^\dagger \big[ a_0^\dagger a_s^\dagger , a_q a_0 \big] a_{-r} a_{r+s}  = - \delta_{sq} \, a_{p+q}^\dagger a_{-p}^\dagger a_0^\dagger a_0 a_{-r} a_{r+s} -  a_{p+q}^\dagger a_{-p}^\dagger a_q a_s^\dagger a_{-r} a_{r+s} \]
we obtain 
\[ \begin{split} (\textrm{II})_b = \; &-\sum_{p,q,r} \hat{V}_N (p) \tilde\vphi_r \chi_{|q| \leq N^\alpha} \, a_{p+q}^\dagger a_{-p}^\dagger a_0^\dagger a_0 a_{-r} a_{r+q} \\ 
&- \sum_{p,q,r,s} \hat{V}_N (p) \tilde\vphi_r \chi_{|s| \leq N^\alpha} \, a_{p+q}^\dagger a_{-p}^\dagger a_s^\dagger a_q a_{-r} a_{r+s} \\ &- \sum_{p,r,q} \hat{V}_N (p) \tilde\vphi_r \chi_{|q| \leq N^\alpha} \, a_{p+q}^\dagger a^\dagger_{-p} a_{-r} a_{r+q} =: (\textrm{II})_{b1} +  (\textrm{II})_{b2} +  (\textrm{II})_{b3}. \end{split} \]
We can bound $(\textrm{II})_{b3}$ switching to position space. We find 
\[ \begin{split} \pm  (\textrm{II})_{b3}  &= \pm \int_{\Lambda^4} \dx  \dy  \du  \dv  \, V_N (x-y) \check{\chi}_{|.| \leq N^\alpha} (x-u) \check{\tilde \vphi} (u-v) a_x^\dagger a_y^\dagger a_u a_v + \text{h.c.} \\
&\lesssim \delta \| \check{\tilde \vphi} \|_2^2 \, Q_4 + \delta^{-1} \| \check{\chi}_{|.| \leq N^\alpha} \|_2^2 \,  (\cN_+ + 1)^2  \lesssim N^{-\alpha/2} Q_4 + N^{-3 +5 \alpha/2} (\cN_+ + 1)^2  \, . \end{split}  \]
The term $(\textrm{II})_{b1}$ can be bounded analogously, but it contains an additional factor $\cN_0 = a_0^\dagger a_0 \leq N$. Thus
\[   \pm  (\textrm{II})_{b1} \leq  N^{-\alpha/2} Q_4 + N^{-1 +5 \alpha/2} (\cN_+ + 1)^2.   \]
Also $(\textrm{II})_{b1}$ can be bounded in position space. We obtain 
\[ \begin{split} \pm  (\textrm{II})_{b2}  &= \pm \int_{\Lambda^3} \dx  \dy  \du  \, V_N (x-y) \,  a_x^\dagger a_y^\dagger a^\dagger (\check{\chi}_{|.| \leq N^\alpha}^u) a_x a (\check{\tilde \vphi}^u) a_u + \text{h.c.} \\ &\leq \delta \| \check{\chi}_{|.| \leq N^\alpha} \|_2^2 \, Q_4 \cN_+ + \delta^{-1} \| V_N \|_1 \| \check{\tilde \vphi} \|_2^2 (\cN_+ + 1)^3 \lesssim N^{-\alpha/2} Q_4 + N^{-1+5\alpha/2} (\cN_+ + 1)^2  \end{split} \]
where we chose $\delta = N^{-1-7\alpha/2}$.

Let us now consider the term $(\textrm{II})_a$, defined on the r.h.s. of (\ref{eq:IIaIIb}). 
With 
\[ \begin{split} 
a_0^\dagger &a_s^\dagger \big[ a_{-r} a_{r+s} , a^\dagger_{p+q} a^\dagger_{-p} \big] a_q a_0 \\ &= a_0^\dagger a_s^\dagger \Big\{ \delta_{-r,p+q} a_{-p}^\dagger a_{r+s} + \delta_{r,p} a^\dagger_{p+q} a_{r+s} + \delta_{r+s, p+q} a_{-r} a_{-p}^\dagger + \delta_{r+s,-p} a_{-r} a_{p+q}^\dagger \Big\} a_q a_0 \end{split} \]
we obtain, rearranging terms in normal order (with appropriate changes of variables)
\begin{equation}\label{eq:IIa1-4} 
\begin{split} 
(\textrm{II})_a = \; &\sum_{p,q} \big( \hat{V}_N (p) + \hat{V}_N (p-q) \big) \tilde\vphi_p \chi_{|q| \leq N^\alpha} a_0^\dagger a_q^\dagger a_q a_0 \\ &+
4 \sum_{p,q,s} \hat{V}_N (p) \tilde\vphi_{p+q} \chi_{|s| \leq N^\alpha} \, a_0^\dagger a_s^\dagger a_{-p}^\dagger a_{-p-q+s} a_q a_0 + \text{h.c.} \\ 
=: \; &2 \sum_{p,q} \big( \hat{V}_N (p) + \hat{V}_N (p-q) \big) \tilde\vphi_p \chi_{|q| \leq N^\alpha} a_0^\dagger a_q^\dagger a_q a_0 + (\textrm{II})_{a1} 
\end{split} \end{equation} 
where we can bound, with $\cN_0 = a_0^\dagger a_0 \leq N$, 
\[ \begin{split} \pm  (\textrm{II})_{a1} \lesssim N \| \hat{V}_N \|_\infty \Big[ \delta \| \tilde \vphi \|_2^2 + \delta^{-1} \| \chi_{|.| \leq N^\alpha} \|_2^2 \Big] \cN_+^2 \lesssim N^{-1+\alpha} (\cN_+ + 1)^2
 \end{split} \]
choosing $\delta = N^{1-\alpha}$. Collecting all the estimates we proved so far, we conclude from (\ref{eq:IandII}) that 
\[ \big[ Q_3, \cB_3 \big] = 2 \sum_{p,q} \big( \hat{V}_N (p) + \hat{V}_N (p-q) \big) \tilde\vphi_p \chi_{|q| \leq N^\alpha} a_0^\dagger a_q^\dagger a_q a_0 +  \mathcal{E}_1 \]
where 
\begin{equation}\label{eq:Xi3-1} \pm  \mathcal{E}_1 \lesssim N^{-\alpha/2} Q_4 + N^{-1+5\alpha/2} (\cN_+ + 1)^2. \end{equation} 
At the expenses of adding an additional small error to the r.h.s. of (\ref{eq:Xi3-1}), in the main term we can replace $a_0^\dagger a_0 = N - \cN_+$ by a factor of $N$, since
\[ \pm \sum_{p,q} (\hat{V}_N (p) + \hat{V}_N (p-q) ) \tilde\vphi_p \chi_{|q| \leq N^\alpha} a_q^\dagger \cN_+ a_q \lesssim N^{-1} \| \tilde \vphi \|_1 (\cN_++1)^2 \lesssim N^{-1} (\cN_+ + 1)^2. \]
Moreover, from 
\begin{multline}
\pm N\sum_{p,q} \left( \hat V_N(p-q)- \hat V_N(p)\right)\tilde \vphi_p \chi_{|q| \leq N^\alpha} a^\dagger_q a_q \\
\lesssim \sum_{p,q} | q/N| \| \nabla \hat V\|_{\infty} |\tilde \vphi_p| \chi_{|q| \leq N^\alpha} a^\dagger_q a_q\lesssim \|\tilde \vphi\|_1 N^{-1+\alpha} \N_+ \lesssim N^{-1+\alpha} \N_+,  
\end{multline}
we arrive at 
\[ \big[ Q_3, \cB_3 \big] = 4N \sum_{p} \hat{V}_N (p) \tilde\vphi_p   \sum_{|q| \leq N^\alpha} a_q^\dagger a_q  + \mathcal{E}_2  \]
where 
\[ \pm \mathcal{E}_2 \lesssim N^{-\alpha/2} Q_4 + N^{-1+\alpha} \cN_+ + N^{-1+5\alpha/2}  (\cN_+ + 1)^2. \]
Conjugating with $e^{t \B_3}$ and integrating over $t,s$, we obtain, with the help of Lemma \ref{lemma:gronwall_B3} and Lemma \ref{lm:a-priB3} (and of the observation that the first estimate in Lemma \ref{lemma:gronwall_B3} also holds, if we replace, on the l.h.s. $\cN_+$ by $\sum_{|p| \leq N^\alpha} a_p^\dagger a_p$), 
\[ \int_0^1 \int_s^1 e^{-t\B_3} [Q_3, \B_3] e^{t \B_3} \dt \ds = 2N \sum_q \hat{V}_N (q) \tilde\vphi_q \sum_{|p| \leq N^\alpha} a_p^\dagger a_p + \mathcal{E}_3 \]
where 
\[ \pm \mathcal{E}_3 \lesssim N^{-\alpha/2} (Q_4 + \cN_+ + 1) + N^{-1+\alpha} \cN_+ + N^{-1+5\alpha/2} (\cN_+ + 1)^2\, . \]
The claim now follows from (\ref{eq:approx_aN}) 
and the observation that  
\[  \cN_+  - \sum_{|p| \leq N^\alpha} a_p^\dagger a_p  = \sum_{|p| \geq N^\alpha} a_p^\dagger a_p \leq N^{-2\alpha} H_1 \, . \]
\end{proof}  

Finally, we consider the conjugation of the operator $\tilde{Q}_2$, defined in (\ref{tQ2}). 
\begin{lemma}
	\label{lemma:Q2_B3}
We have 
\begin{equation}
	\label{eq:Q2_B3}
e^{-\B_3} \tilde Q_2 e^{\B_3}= \tilde{Q}_2  + \mathcal{E}_{\tilde Q_2}
\end{equation} with 
\[
 \pm  \mathcal{E}_{\tilde Q_2}  \lesssim N^{-1/2+\alpha} (\N_++1)^{3/2} + N^{-3/2+\alpha} (\N_++ 1)^{5/2}. \]
\end{lemma}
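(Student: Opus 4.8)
The plan is to expand the conjugation by Duhamel's formula,
$e^{-\B_3}\tilde Q_2 e^{\B_3} = \tilde Q_2 + \int_0^1 e^{-s\B_3}[\tilde Q_2,\B_3]e^{s\B_3}\,\ds$, and then to bound the commutator $[\tilde Q_2,\B_3]$. Since by Lemma~\ref{lemma:gronwall_B3} the conjugations $e^{\pm s\B_3}$ leave powers of $(\N_++1)$ essentially unchanged, it is enough to show, on $\{\N=N\}$, that $\pm[\tilde Q_2,\B_3]\lesssim N^{-1/2+\alpha}(\N_++1)^{3/2}+N^{-3/2+\alpha}(\N_++1)^{5/2}$; integrating over $s$ then yields the claim for $\mathcal{E}_{\tilde Q_2}$.

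To compute the commutator I would split $\tilde Q_2=\tilde Q_2^++\tilde Q_2^-$ and $\B_3=\B_3^+-\B_3^-$ into creation- and annihilation-heavy parts, with $\tilde Q_2^+=\tfrac{4\pi\aN}{N}\sum_{|p|\le N^\alpha}a_p^\dagger a_{-p}^\dagger a_0 a_0$ and $\B_3^+=\sum_{p,|q|\le N^\alpha}\tilde\vphi_p\, a_{p+q}^\dagger a_{-p}^\dagger a_q a_0$. Because $[\tilde Q_2^-,\B_3^-]=-([\tilde Q_2^+,\B_3^+])^{*}$ and $[\tilde Q_2^-,\B_3^+]=-([\tilde Q_2^+,\B_3^-])^{*}$, only $[\tilde Q_2^+,\B_3^+]$ and $[\tilde Q_2^+,\B_3^-]$ need to be analysed, the others being added as hermitian conjugates. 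The structural point is the disjoint-support identity $\chi_{|p|\le N^\alpha}\tilde\vphi_p=0$ already used in Lemma~\ref{F2}: it kills every contraction between a nonzero-momentum operator of $\tilde Q_2$ (momentum $|p|\le N^\alpha$) and one of $\B_3$ carrying a $\tilde\vphi$-index (momentum $>N^\alpha$). Hence in $[\tilde Q_2^+,\B_3^+]$ the fully uncontracted product (four nonzero creation and one nonzero annihilation operator times $a_0^3$) cancels between the two orderings, and what survives is only the contraction of the $a_q$ of $\B_3^+$ against a creation operator of $\tilde Q_2^+$, a term of the form $\tfrac{c}{N}\sum_{|p|\le N^\alpha,\,p'}\tilde\vphi_{p'}\, a^\dagger_{p+p'}a^\dagger_{-p'}a^\dagger_{-p}\, a_0^3$; while in $[\tilde Q_2^+,\B_3^-]$ the $\N_0$-dependent factors produced by $a_0 a_0 a_0^\dagger$ and $a_0^\dagger a_0 a_0$ cancel against each other, leaving (after reordering with the support identity) a term of the form $\tfrac{c}{N}\sum_{|p|,|q|\le N^\alpha,\,p'}\tilde\vphi_{p'}\, a_p^\dagger a_{-p}^\dagger (a_q^\dagger a_{p'+q})\, a_{-p'}\, a_0$, plus strictly lower-order contracted terms.

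For the estimates I would use $\N_0,\N_+\le N$, $\|a_0^k\xi\|\le N^{k/2}\|\xi\|$, the bounds $\|\tilde\vphi\|_2\lesssim N^{-1-\alpha/2}$, $\|\tilde\vphi\|_1\lesssim1$ from Lemma~\ref{Lemmavphi}, and $\#\{|p|\le N^\alpha\}\lesssim N^{3\alpha}$. For the first term, move the three creation operators to the left, apply Cauchy--Schwarz in $(p,p')$ with balanced $(\N_++1)^{\pm3/4}$ weights, bound $a_0^3$ by $N^{3/2}$, and note that the pair-creation sum $\sum_{|p|\le N^\alpha}a^\dagger_{p+p'}a^\dagger_{-p}$ costs $N^{3\alpha/2}$ whereas $\sum_{p'}\tilde\vphi_{p'}a^\dagger_{-p'}$ costs $\|\tilde\vphi\|_2$; this yields $\tfrac1N\cdot N^{3/2}\cdot N^{3\alpha/2}\|\tilde\vphi\|_2(\N_++1)^{3/2}\lesssim N^{-1/2+\alpha}(\N_++1)^{3/2}$. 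For the second term, the key observation is that $\sum_{|q|\le N^\alpha}a_q^\dagger a_{p'+q}$ is the second quantization of a one-body operator of norm $\le1$ (the cutoff $\chi_{|q|\le N^\alpha}$ composed with a momentum shift), hence bounded by $\N_+$ with \emph{no} combinatorial loss from the $q$-cutoff; combining this with $\|a_0\xi\|\le N^{1/2}\|\xi\|$, the cost $N^{3\alpha/2}$ of $\sum_{|p|\le N^\alpha}a_p^\dagger a_{-p}^\dagger$, the weight $\|\tilde\vphi\|_2$ for the $p'$-sum, and again Cauchy--Schwarz with balanced $(\N_++1)$-powers, gives $\tfrac1N\cdot N^{1/2}\cdot N^{3\alpha/2}\|\tilde\vphi\|_2(\N_++1)^{5/2}\lesssim N^{-3/2+\alpha}(\N_++1)^{5/2}$. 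The lower-order contracted pieces are estimated the same way (often after passing to position space, as in Lemmas~\ref{Q3} and~\ref{lemma:Q3B3}) and turn out to be smaller.

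The main obstacle is purely computational: since $\tilde Q_2$ carries two zero-mode operators and $\B_3$ one, the commutator decomposes into a sizable number of normal-ordered monomials, and the final power counting is tight. One has to exploit both the cancellations of the $a_0^3$- and $\N_0$-heavy pieces and the fact that, of the two cutoffs $|p|,|q|\le N^\alpha$, only the pair-creation sum costs a factor $N^{3\alpha/2}$ while the momentum-shift sum is free, together with a judicious placement of the $(\N_++1)$-weights in each Cauchy--Schwarz step, in order to reach exactly the stated powers of $N$ rather than something larger; there is no conceptual difficulty beyond this bookkeeping.
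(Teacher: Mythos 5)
Your proposal follows essentially the same route as the paper's proof: Duhamel plus Lemma \ref{lemma:gronwall_B3}, an explicit computation of $[\tilde Q_2,\B_3]$ in which the $\N_0$-heavy pieces cancel, and Cauchy--Schwarz estimates using $\|\tilde\vphi\|_2\lesssim N^{-1-\alpha/2}$ and the $N^{3\alpha/2}$ cost of the low-momentum pair sums; your two surviving commutator terms and the contracted remainder are exactly (up to hermitian conjugation and relabeling) the paper's terms $(\textrm{II})$, $(\textrm{III})$ and $(\textrm{I})$, with matching bounds $N^{-1/2+\alpha}(\N_++1)^{3/2}$ and $N^{-3/2+\alpha}(\N_++1)^{5/2}$. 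The argument is correct as proposed.
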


\begin{proof}
We have   
\[ \begin{split}  e^{-\B_3} \tilde Q_2 e^{\B_3} - \tilde Q_2   &= \int_0^1 e^{-s\B_3}[ \tilde Q_2,\B_3] e^{s\B_3}\ds \\ 
    &= \frac{4\pi \mathfrak{a}_N}{N}  \sum_{|r| \leq N^\alpha} \int_0^1 e^{-s\B_3}[ a^\dagger_r a^\dagger_{-r} a_0 a_0 + \text{h.c.}, \B_3] e^{s\B_3}\ds.
    \end{split}\] 
    We compute the commutator 
    \begin{align}
    	[a^\dagger_r a^\dagger_{-r} a_0 a_0 &+ \text{h.c.} , a_{p+q}^\dagger a_{-p}^\dagger a_q a_0 - \text{h.c.} ] \\ &=[a^\dagger_r a^\dagger_{-r}a_0a_0,a^\dagger_{p+q} a^\dagger_{-p} a_q a_0]+[a^\dagger_0 a^\dagger_0 a_r a_{-r},a^\dagger_{p+q} a^\dagger_{-p} a_q a_0]+\text{h.c.}\nonumber \\
	&= 2\Big\{ a^\dagger_0 a^\dagger_0 \big( \delta_{p+q,r} a^\dagger_{-p} a_{-p-q} +\delta_{p,r} a^\dagger_{p+q} a_{p} \big) a_q a_0 \nonumber\\
	&\qquad\qquad\qquad -\delta_{r,q}a^\dagger_{p+q} a^\dagger_{-p} a^\dagger_{-q} a_0 a_0 a_0- a^\dagger_{p+q} a^\dagger_{-p} a^\dagger_{0} a_{q} a_r a_{-r}\Big\} +\text{h.c.} \label{eq:comm_lemm_tQ2B3}
    \end{align}
Hence, we obtain
\[ \begin{split} 
\frac{4\pi \mathfrak{a}_N}{N} \sum_{|r| \leq N^\alpha} \big[ a_r^\dagger &a_{-r}^\dagger a_0 a_0  + \text{h.c.} , \B_3 \big] \\ =\; &\frac{8\pi \mathfrak{a}_N}{N} \sum_{p,q : |p+q| \leq N^\alpha} \tilde\vphi_p \chi_{|q| \leq N^\alpha}  a_0^\dagger a_0^\dagger a_{-p}^\dagger a_{-p-q} a_q a_0 \\ 
  &- \frac{4\pi \mathfrak{a}_N}{N} \sum_{p, |q| \leq N^\alpha} \tilde \vphi_p \chi_{|q| \leq N^\alpha} a_{p+q}^\dagger a_{-p}^\dagger a_{-q}^\dagger a_0 a_0 a_0 \\   &- \frac{4\pi \mathfrak{a}_N}{N} \sum_{p, q, |r| \leq N^\alpha} \tilde \vphi_p \chi_{|q| \leq N^\alpha} a_{p+q}^\dagger a_{-p}^\dagger a_{0}^\dagger a_q a_r a_{-r}   =: (\textrm{I}) + (\textrm{II}) + (\textrm{III}). 
  \end{split} \]
With Cauchy-Schwarz and using the bounds from Lemma \ref{Lemmavphi}, 
we can bound
\[ \begin{split} 
\pm (\textrm{I}) &\lesssim N^{-1/2-\alpha/2}  (\cN_+ + 1)^{3/2} , \\  \pm (\textrm{II}) &\lesssim N^{-1/2+\alpha} (\cN_+ + 1)^{3/2} , \\ \pm (\textrm{III}) &\lesssim N^{-3/2+\alpha}  (\cN_+ + 1)^{5/2}. \end{split} \]
 The claim now follows with Lemma \ref{lemma:gronwall_B3}. 
  \end{proof} 

We are now ready to conclude the proof of Prop. \ref{propB3}. 
\begin{proof}[Proof of Prop. \ref{propB3}] 
Recall from (\ref{BHm3}) that 
 \[ \begin{split} 
e^{-\B_3} e^{-\B_2} H_N e^{\B_2} e^{\B_3} = \; & 4 \pi \aN (N-1) + \sum_{|p|\leq N^{\alpha}} \frac{(4\pi \aN)^2}{p^2} +  H_1 + Q_4  + e^{-\B_3}(\tilde H_2 + \tilde Q_2 + \mathcal E_{\B_2})e^{\B_3} \\ &+ \int_0^1 e^{-t\B_3}  \Gamma_3 e^{t\B_3} \dt + \int_0^1 \int_s^1 e^{-t{\B_3}}[Q_3,{\B_3}]e^{t{\B_3}} \dt \ds \end{split} \] 
where
\[ \pm \mathcal{E}_{\B_2} \leq N^{-\alpha/2} Q_4 + \big[ N^{-\alpha/2} + N^{-1+5\alpha/2} \big]  (\cN_+ + 1)  + N^{-1+\alpha} \cN_+^2 + N^{-2} H_1. \]
With Lemma \ref{lemma:gronwall_B3} and Lemma \ref{lm:a-priB3}, this also implies that 
\[ \begin{split}  \pm e^{-\B_3} \mathcal{E}_{\B_2} e^{\B_3} \lesssim \; &N^{-\alpha/2} Q_4 + N^{-2} H_1 + \big[ N^{-\alpha/2} + N^{-1+5\alpha/2} \big] (\cN_+ + 1) \\ &+ N^{-1+2\alpha} (\cN_+ + 1)^2.
\end{split}  \]
Applying the first bound in Lemma \ref{lemma:gronwall_B3} to the operator $\tilde H_2 = (2\hat{V} (0) - 8 \pi \mathfrak{a}_N) \cN_+$, defined in (\ref{tH2}), we obtain 
\[ \pm \Big[ e^{-\B_3} \tilde H_2 e^{\B_3} -  \tilde H_2 \Big] \lesssim N^{-\alpha/2} (\cN_+ + 1). \]
Combining Lemma \ref{lm:Xi3-bd} with Lemma \ref{lm:a-priB3} we obtain 
\[ \begin{split}  \int_0^1 e^{-t \B_3} &\Gamma_3 e^{t \B_3} \dt \\ \lesssim \; & N^{-3\alpha/2} H_1 + N^{-\alpha/2} Q_4 + N^{-\alpha/2} (\cN_+ + 1)  \\ &+ N^{3\alpha/2-1/2} (\cN_+ + 1)^{3/2} + N^{-1+5\alpha/2}  (\cN_+ + 1)^2.
 \end{split} \]
Together with the bounds in Lemmas \ref{lemma:Q3B3} and \ref{lemma:Q2_B3} and with the observation that 
\[ 8\pi \mathfrak{a}_N \sum_{|p| > N^\alpha} a_p^\dagger a_p \leq N^{-2\alpha} H_1 \]
 we conclude the proof of Prop. \ref{propB3}. 
\end{proof}

\section{Diagonalization of quadratic Hamiltonian}
\label{sec:B4}

From Prop. \ref{propB3}, we find 
\begin{equation}\label{eq:diag0}
\begin{split} 
e^{-\B_3} &e^{-\B_2} H_N  e^{\B_2} e^{\B_3}
	\\ &= 4 \pi \aN (N-1) + \frac{1}{4}\sum_{|p|\leq N^{\alpha}} \frac{(8\pi \aN)^2}{p^2} + \sum_{|p| > N^\alpha} p^2 a_p^\dagger a_p + Q_4 \\
	&\quad +  \sum_{|p| \leq N^\alpha} (p^2 + 8\pi \aN ) a^\dagger_p \frac{a_0 a_0^\dagger}{N} a_p + \frac{1}{2}\sum_{|p| \leq N^\alpha} 8 \pi \aN  [a^\dagger_p a^\dagger_{-p} \frac{a_0 a_0 }N+\text{h.c.}]  + \mathcal{E},
\end{split} 
\end{equation}
with an error $\mathcal{E}$ satisfying the bound (\ref{eq:EB3}). Here we used the observation that, on the sector $\{ \cN = N \}$, we can write 
\[\begin{split}  \sum_{|p| \leq N^\alpha} (p^2 + 8\pi \aN) a_p^\dagger a_p &= \sum_{|p| \leq N^\alpha} (p^2 + 8\pi \aN) a_p^\dagger \frac{a_0^\dagger a_0 + \cN_+ + 1}{N} a_p \\ &= \sum_{|p| \leq N^\alpha} (p^2 + 8\pi \aN) a_p^\dagger \frac{a_0 a_0^\dagger}{N} a_p + \frac{1}{N} \sum_{|p| \leq N^\alpha} (p^2 + 8\pi \aN) a_p^\dagger \, \cN_+ a_p \end{split} \] 
where the term
\[ \frac{1}{N} \sum_{|p| \leq N^\alpha} (p^2 + 8\pi \aN) a_p^\dagger \, \cN_+ a_p \lesssim N^{2\alpha-1} (\cN_+ + 1)^2 \]
can be absorbed on the r.h.s. of (\ref{eq:EB3}). 

In this section, we are going to diagonalize the operator on the last line of (\ref{eq:diag0}). Inspired by Bogoliubov theory (on states with $a_0, a_0^\dagger \sim \sqrt{N}$, this operator is approximately quadratic), we define, for $|p| \leq N^\alpha$, the coefficients \[ \tau_p = - \frac{1}{4} \log \left[ 1 + \frac{16\pi \aN}{p^2} \right] \] 
so that 
\[ \tanh (2 \tau_p) = - \frac{8\pi\aN}{p^2 + 8\pi \aN} \, . \] 
We also introduce the notation $\gamma_p = \cosh \tau_p$ and $\nu_p = \sinh \tau_p$. 
\begin{lemma}\label{lm:tau} 
We have the pointwise bound $\gamma_p \lesssim 1$ and $\tau_p, \nu_p \lesssim \chi_{|p| \leq N^\alpha} /p^2$. Moreover, 
\[ \begin{split} \| \tau \|_\infty &\leq \| \tau \|_2 \lesssim 1 ,  \qquad \| \nu \|_\infty \leq \| \nu \|_2 \lesssim 1, \qquad  \| \gamma - 1 \|_\infty \leq \| \gamma - 1 \|_2 \lesssim 1 \end{split} \]
and 
\[ \begin{split} \| \check{\tau} \|_\infty  &\leq \| \tau \|_1 \lesssim N^\alpha  , \qquad  \| \check{\nu} \|_\infty \leq \| \nu \|_1 \lesssim N^\alpha. \end{split} \]
\end{lemma}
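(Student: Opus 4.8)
The plan is to reduce everything to two inputs: the uniform boundedness of the box scattering length, which follows from the estimate $|\aN - \mathfrak{a}| \lesssim N^{-1}$ (see~\cite{Hainzl}) and provides constants $0 < c \le C$ with $c \le 16\pi\aN \le C$ for all large $N$; and the elementary lower bound $p^2 \ge 4\pi^2$, valid for every $p \in 2\pi\mathbb{Z}^3 \setminus \{0\}$. Together these confine the argument $1 + 16\pi\aN/p^2$ of the logarithm defining $\tau_p$ to a fixed compact subset of $(1,\infty)$, uniformly in $N$ and in $|p| \le N^\alpha$, so that all the estimates below carry $N$-independent constants.

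First I would establish the pointwise bounds. From $\log(1+x) \le x$ for $x \ge 0$ one gets
\[ |\tau_p| = \tfrac14 \log\!\Big(1 + \frac{16\pi\aN}{p^2}\Big) \le \frac{4\pi\aN}{p^2} \lesssim \frac{1}{p^2}, \]
which, recalling that $\tau_p$ is set to $0$ for $|p| > N^\alpha$, is exactly the asserted bound $\tau_p \lesssim \chi_{|p| \le N^\alpha}/p^2$; in particular $\|\tau\|_\infty \lesssim 1$. Using then the elementary inequalities $|\sinh t| \le |t|\cosh t$, $\cosh t \le e^{|t|}$ and $0 \le \cosh t - 1 \le t^2 \cosh t$ (all valid for real $t$) together with $\|\tau\|_\infty \lesssim 1$, I would deduce $\nu_p = \sinh\tau_p \lesssim |\tau_p| \lesssim \chi_{|p| \le N^\alpha}/p^2$, $\gamma_p = \cosh\tau_p \lesssim 1$, and $|\gamma_p - 1| \lesssim \tau_p^2 \lesssim p^{-4}$.

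The remaining norm bounds reduce to standard lattice sums over $2\pi\mathbb{Z}^3 \setminus \{0\}$. For the $\ell^2$ bounds I would use $\|\tau\|_2^2 \lesssim \sum_{p \neq 0} |p|^{-4} \lesssim 1$ (the sum converging because $4 > 3$), the same sum controlling $\|\nu\|_2$, and $\|\gamma-1\|_2^2 \lesssim \sum_{p \neq 0} |p|^{-8} \lesssim 1$; the $\ell^\infty$ bounds then follow immediately from $\|f\|_\infty \le \|f\|_2$ for square-summable sequences. For the $\ell^1$ bounds I would use that the number of $p \in 2\pi\mathbb{Z}^3$ with $|p| \le R$ is $O(R^3)$, so that $\sum_{0 < |p| \le N^\alpha} |p|^{-2} \lesssim N^\alpha$, whence $\|\tau\|_1 \lesssim N^\alpha$ and likewise $\|\nu\|_1 \lesssim N^\alpha$. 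Finally $\|\check\tau\|_\infty = \sup_x \big|\sum_p \tau_p e^{i p\cdot x}\big| \le \sum_p |\tau_p| = \|\tau\|_1$, and similarly for $\check\nu$. I do not expect any genuine obstacle in this lemma: it is purely a matter of elementary inequalities and lattice-sum estimates, and the only point requiring mild care is to invoke the two-sided bound on $16\pi\aN$ so that the hyperbolic-function estimates hold with constants independent of $N$.
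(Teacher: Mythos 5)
Your proof is correct, and since the paper states Lemma \ref{lm:tau} without proof (treating it as elementary), your argument — $\log(1+x)\le x$ plus the upper bound on $\aN$ for the pointwise estimates, then standard lattice-sum bounds $\sum_{p\neq 0}|p|^{-4}\lesssim 1$ and $\sum_{0<|p|\le N^\alpha}|p|^{-2}\lesssim N^\alpha$, together with $\|f\|_\infty\le\|f\|_2$ and $\|\check f\|_\infty\le\|f\|_1$ — is exactly the intended one. The only cosmetic remark is that for these bounds you only need $0<\aN\lesssim 1$ (which follows from $|\aN-\mathfrak a|\lesssim N^{-1}$), not the two-sided bound.
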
 

With these coefficients, we can write 
\[ \begin{split} 
\sum_{|p| \leq N^\alpha} (p^2 &+ 8\pi \aN ) a^\dagger_p \frac{a_0 a_0^\dagger}{N} a_p + \frac{1}{2}\sum_{|p| \leq N^\alpha} 8 \pi \aN  [a^\dagger_p a^\dagger_{-p} \frac{a_0 a_0 }N+\text{h.c.}]  \\ = \; & \sum_{|p| \leq N^\alpha} \sqrt{|p|^4 + 16 \pi \aN p^2} \, \big(\gamma_p a_p^\dagger \frac{a_0}{\sqrt{N}} + \nu_p \frac{a_0^\dagger}{\sqrt{N}} a_{-p} \big) \big( \gamma_p \frac{a_0^\dagger}{\sqrt{N}} a_p + \nu_p a_{-p}^\dagger \frac{a_0}{\sqrt{N}} \big) \\ &-\frac{1}{2} \sum_{|p| \leq N^\alpha} \big[ p^2 + 8\pi \aN - \sqrt{|p|^4 + 16\pi \aN p^2} \big] + \delta \end{split} \]
with an error $\delta$ satisfying 
\[ \pm \delta \lesssim N^{\alpha-1} (\cN_+ + 1) \, . \]
Here, we used the relations 
\[ \begin{split} 
\gamma^2_p + \nu_p^2 &= \cosh (2\tau_p) = \frac{1}{\sqrt{1- \tanh^2 (2\tau_p)}} = \frac{p^2 + 8\pi \aN}{\sqrt{|p|^4 + 16\pi \aN p^2}}\, , \\
2\gamma_p \nu_p &= \sinh (2\tau_p) = \frac{\tanh (2\tau_p)}{\sqrt{1- \tanh^2 (2\tau_p)}} = \frac{8\pi \aN}{\sqrt{|p|^4 + 16\pi \aN p^2}} \, , \\
\nu^2_p &= \frac{1}{2} \big[ \cosh (2\tau_p) -1 \big] = \frac{1}{2} \frac{p^2 + 8\pi \aN - \sqrt{|p|^4 + 16 \pi \aN p^2}}{\sqrt{|p|^4 + 16 \pi \aN p^2}} \, , \end{split} \]
and the commutator
\[ \big[ \frac{a_0^\dagger}{\sqrt{N}} a_{-p} , a_{-p}^\dagger \frac{a_0}{\sqrt{N}} \big] = \frac{1}{N} a_0^\dagger a_0 - \frac{1}{N} a_{-p}^\dagger a_{-p} = 1 - \frac{1}{N} (\cN_+ + a_{-p}^\dagger a_{-p}) \, . \]
The contribution proportional to $N^{-1}$ on the r.h.s. of the last equation produces (using Lemma \ref{lm:tau}) the error $\delta$. 
Inserting in (\ref{eq:diag0}), we conclude that 
\begin{equation}\label{eq:diag1}
\begin{split} 
&e^{-\B_3} e^{-\B_2} H_N  e^{\B_2} e^{\B_3}
	\\ &= 4 \pi \aN (N-1) - \frac{1}{2} \sum_{|p| \leq N^\alpha} \Big[ p^2 + 8\pi \aN - \sqrt{|p|^4 + 16\pi \aN p^2} - \frac{(8 \pi \aN)^2}{2p^2} \Big] 	\\ &\quad +  \sum_{|p| \leq N^\alpha} \sqrt{|p|^4 + 16 \pi \aN p^2} \, \big(\gamma_p b_p^\dagger + \nu_p b_{-p} \big) \big( \gamma_p b_p + \nu_p b_{-p}^\dagger \big) + \sum_{|p| > N^\alpha} p^2 a_p^\dagger a_p + Q_4 
 + \mathcal{E} ,
\end{split} 
\end{equation}
where $\mathcal{E}$ still satisfies the estimate (\ref{eq:EB3}) and where we introduced the modified creation and annihilation operators  
\begin{equation}\label{eq:def-b} b_p = \frac{a_0^\dagger}{\sqrt{N}} a_p, \quad b_p^\dagger = a_p^\dagger \frac{a_0}{\sqrt{N}} \end{equation} 
satisfying the commutation relations 
\begin{equation}\label{eq:b-comm} \big[ b_p, b_q\big] = \big[ b_p^\dagger , b_q^\dagger \big] = 0 , \qquad \big[ b_p , b_q^\dagger \big] = \delta_{p,q} \big( 1- \frac{\cN_+}{N} \big) - \frac{1}{N} a_q^\dagger a_p \end{equation} 
and $[ a_p^\dagger a_r , b^\dagger_q ] = \delta_{r,q} b_p^\dagger$, $[a_p^\dagger a_r, b_q ] = - \delta_{p,q} b_r$. On states with few excitations $a_0, a_0^\dagger \simeq \sqrt{N}$ we have $b_p^\dagger \simeq a_p^\dagger$, $b_p \simeq a_p$. According to Bogoliubov theory, we can therefore expect that the operators $(\gamma_p b_p^\dagger. + \nu_p b_{-p})$ and $(\gamma_p b_p + \nu_p b_{-p}^\dagger)$ can be rotated back to $b_p^\dagger$ and, respectively, $b_p$, through conjugation of the Hamiltonian with the unitary transformation generated by the antisymmetric operator  
\[ \B_4 =\frac{1}{2} \sum_{|p| \leq N^\alpha} \tau_p \big(  b_p^\dagger b_{-p}^\dagger - b_p b_{-p} \big) = \frac{1}{2} \sum_{|p| \leq N^\alpha} \tau_p \big( a_p^\dagger a_{-p}^\dagger \frac{a_0 a_0}{N} - \text{h.c.} \big)  .   \]
Notice that $\B_4$ has the same form as the operator $\B_2$ defined in (\ref{defB}) (with a different choice of the coefficients, of course; here it is more convenient to keep the factor $N^{-1}$ out of $\tau_p$). To control the action of $\B_4$, we will need rough a-priori bounds on the growth of the number and the energy of the excitations.
\begin{lemma} \label{lm:gronB4} 
For every $k \in \mathbb{N}$, we have \footnote{The estimate for $Q_4$ will only be used in the next section, to show upper bounds on the eigenvalues of $H_N$; for the lower bounds, we will only use the fact that $Q_4 \geq 0$.} 
\begin{align}
e^{-\B_4} (\N_++1)^k e^{\B_4} &\lesssim   (\N_++1)^k, \label{eq:B4_N}
\end{align} 
Moreover
\begin{align} 
e^{-\B_4} H_1 e^{\B_4} &\lesssim H_1 + N^{\alpha}, \label{eq:B4_H1} \\
e^{-\B_4} Q_4 e^{\B_4} &\lesssim Q_4 + N^{-1+2\alpha} + N^{-1} (\cN_+ + 1)^2.  \label{eq:B4_Q4}
\end{align}
\end{lemma}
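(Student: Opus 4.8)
The plan is to prove all three bounds by Grönwall's inequality, following the pattern of Lemmas~\ref{Gron}, \ref{lemma:gronwall_B3} and \ref{lm:a-priB3}. Fix $\xi$ in the sector $\{\cN=N\}$, which is left invariant by $e^{s\B_4}$, and for $\mathcal{O}\in\{(\cN_++1)^k,H_1,Q_4\}$ set $f(s)=\langle\xi,e^{-s\B_4}\,\mathcal{O}\,e^{s\B_4}\xi\rangle$, so that $f'(s)=\langle\xi,e^{-s\B_4}[\mathcal{O},\B_4]e^{s\B_4}\xi\rangle$; it then suffices to control each commutator in terms of $\mathcal{O}$ plus the error operators appearing on the right-hand side of the claim and to invoke Grönwall together with the already proved estimates. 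Since $\B_4$ has the algebraic form of $\B_2$ with $\tilde\vphi_p$ replaced by $\tau_p\chi_{|p|\le N^\alpha}/N$, the first two commutators are $[\cN_+,\B_4]=\sum_{|p|\le N^\alpha}\tau_p(b_p^\dagger b_{-p}^\dagger+b_pb_{-p})$ and $[H_1,\B_4]=\sum_{|p|\le N^\alpha}p^2\tau_p(b_p^\dagger b_{-p}^\dagger+b_pb_{-p})$, while $[Q_4,\B_4]$ is obtained from the right-hand side of \eqref{eq:Q4B2} under the same substitution.

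The central estimate for the first two is the following: for any real sequences $(\lambda_p),(\mu_p)$ with $\lambda_p\mu_p=\tau_p$, expanding the nonnegative operator $(\lambda_pb_p^\dagger\pm\mu_pb_{-p})(\lambda_pb_p\pm\mu_pb_{-p}^\dagger)$ and using $[b_{-p},b_{-p}^\dagger]\le1$ from \eqref{eq:b-comm} gives $\pm\tau_p(b_p^\dagger b_{-p}^\dagger+b_pb_{-p})\le\lambda_p^2\,b_p^\dagger b_p+\mu_p^2\,(b_{-p}^\dagger b_{-p}+1)$. For $(\cN_++1)^k$ we take $\lambda_p=1$, $\mu_p=\tau_p$; summing over $|p|\le N^\alpha$, using $b_p^\dagger b_p\lesssim a_p^\dagger a_p$ on $\{\cN=N\}$ and $\|\tau\|_2\lesssim1$ from Lemma~\ref{lm:tau}, we obtain $\pm[\cN_+,\B_4]\lesssim\cN_++1$, which gives \eqref{eq:B4_N} for $k=1$ by Grönwall; the case $k>1$ follows by commuting $(\cN_++1)$ through $\B_4$ a total of $k$ times, each step producing a factor controlled by $\cN_++1$. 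For $H_1$ we instead take $\lambda_p=|p|$, $\mu_p=|p|\tau_p$: then $\sum_{|p|\le N^\alpha}\lambda_p^2\,b_p^\dagger b_p\lesssim H_1$, while $\mu_p^2=p^2\tau_p^2\lesssim|p|^{-2}$ by Lemma~\ref{lm:tau}, so $\sum_{|p|\le N^\alpha}\mu_p^2\,b_{-p}^\dagger b_{-p}\lesssim\cN_+\lesssim H_1$ and $\sum_{|p|\le N^\alpha}\mu_p^2\lesssim\sum_{|p|\le N^\alpha}|p|^{-2}\lesssim N^\alpha$. Hence $\pm[H_1,\B_4]\lesssim H_1+N^\alpha$, and Grönwall (noting $e^{-s\B_4}N^\alpha e^{s\B_4}=N^\alpha$) yields \eqref{eq:B4_H1}.

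For \eqref{eq:B4_Q4} we estimate $[Q_4,\B_4]$ exactly as the analogous operators were treated in Lemmas~\ref{lemma:Gamma2} and \ref{F2}. The purely quadratic piece $\tfrac1{2N}\sum_{p,q}\hat V_N(p-q)\tau_q\chi_{|q|\le N^\alpha}\,a_p^\dagger a_{-p}^\dagger a_0a_0+\text{h.c.}$ becomes in position space $\tfrac1{2N}\int_{\Lambda^2}V_N(x-y)\check\tau(x-y)\,\ca_x^\dagger\ca_y^\dagger a_0a_0\,\dx\,\dy+\text{h.c.}$ and is bounded, using $\cN_0\le N$, by $\delta Q_4+\delta^{-1}\|V_N^{1/2}\check\tau\|_2^2$, where $\|V_N^{1/2}\check\tau\|_2^2=\sum_{p,q}\hat V_N(p-q)\tau_p\tau_q\le\|\hat V_N\|_\infty\|\tau\|_1^2\lesssim N^{-1+2\alpha}$ by Lemma~\ref{lm:tau}; the cubic-in-creation piece $\tfrac1N\sum_{r,p,q}\hat V_N(r)\tau_p\chi_{|p|\le N^\alpha}\,a_{p+r}^\dagger a_q^\dagger a_{-p}^\dagger a_{q+r}a_0a_0+\text{h.c.}$ is bounded, again in position space, by $\delta Q_4+\delta^{-1}N^{-1}(\cN_++1)^2$. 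Taking $\delta$ of order one gives $\pm[Q_4,\B_4]\lesssim Q_4+N^{-1+2\alpha}+N^{-1}(\cN_++1)^2$, and Grönwall combined with \eqref{eq:B4_N} for $k=2$ yields \eqref{eq:B4_Q4}.

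The step I expect to require the most care is obtaining the sharp power $N^\alpha$, rather than the $N^{3\alpha}$ that a crude bound $|p^2\tau_p|\lesssim1$ together with $\sum_{|p|\le N^\alpha}1\sim N^{3\alpha}$ would produce, in \eqref{eq:B4_H1}: this $N^\alpha$ is precisely the Bogoliubov vacuum energy $\sum_{|p|\le N^\alpha}p^2\nu_p^2$, and capturing it forces the asymmetric completion of square above, in which the kinetic weight $p^2$ is assigned to the number operator $b_p^\dagger b_p$ (absorbed into $H_1$) and only the small weight $p^2\tau_p^2\lesssim|p|^{-2}$ is left on the scalar term. Everything else reduces to the position-space manipulations and Grönwall arguments already used repeatedly in the preceding sections, together with the pointwise and norm bounds on $\tau$ from Lemma~\ref{lm:tau}.
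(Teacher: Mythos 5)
Your proposal is correct and follows essentially the same route as the paper: Grönwall arguments for all three bounds, with $[\cN_+,\B_4]$ and $[H_1,\B_4]=\sum_{|p|\le N^\alpha}p^2\tau_p(b_p^\dagger b_{-p}^\dagger+\mathrm{h.c.})$ controlled by a weighted Cauchy--Schwarz using $\tau_p\lesssim |p|^{-2}$ (yielding the same $\sum_{|p|\le N^\alpha}p^2\tau_p^2\lesssim N^\alpha$ scalar term), and $[Q_4,\B_4]$ computed as in \eqref{eq:Q4B2} with $\tilde\vphi_p$ replaced by $\tau_p/N$ and bounded in position space by $Q_4+N^{-1+2\alpha}+N^{-1}(\cN_++1)^2$, exactly as in the paper's proof. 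Your explicit completion of the square and the brief treatment of $k>1$ in \eqref{eq:B4_N} just make explicit what the paper labels ``standard.''
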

\begin{proof}  
The proof of (\ref{eq:B4_N}) is standard (based on Gronwall's lemma and on the bounds in Lemma \ref{lm:tau}). To prove \eqref{eq:B4_H1}, we define $g(s) =  e^{-s\B_4} H_1 e^{s\B_4}$ and we compute (using the commutation relations after (\ref{eq:b-comm})) 
\begin{align*}
g'(s) &= e^{-s\B_4} [H_1,\B_4] e^{s\B_4} = \sum_{|p| \leq N^\alpha} p^2 \tau_p e^{-s\B_4}  \big[ b^\dagger_p b^\dagger_{-p} + \text{h.c.} \big] e^{s\B_4} \lesssim e^{-s \B_4} H_1 e^{s \B_4} + \sum_{|p| \leq N^\alpha}  p^2 \tau_p^2. 
\end{align*}
From Lemma \ref{lm:tau}, we have $\sum_p p^2 \tau_p^2 \lesssim N^\alpha$; with Gronwall's lemma, we obtain (\ref{eq:B4_H1}). 

In order to show (\ref{eq:B4_Q4}), we set $h(s) = e^{-s\B_4} Q_4 e^{s\B_4}$, then 
\begin{equation}\label{eq:hder} h' (s) = e^{-s\B_4} [Q_4, \B_4] e^{s\B_4}. \end{equation} 
Proceeding as in (\ref{eq:Q4B2}), we find (we use here the convention that $\tau_q = 0$, for $|q| > N^\alpha$) 
\[ \begin{split} [Q_4, \B_4 ] = \; & \frac{1}{2} \sum_{q} (\hat{V}_N * \tau) (q)  b_q^\dagger b_{-q}^\dagger + \sum_{p,q,s} \hat{V}_N (q-p) \tau_s b^\dagger_{p+s-q} b^\dagger_q a^\dagger_{-s} a_p +\text{h.c.} \end{split} \]
Switching to position space, we write  
\[  [Q_4, \B_4 ] = \frac{1}{2} \int_{\Lambda^2} \dx  \dy  V_N (x-y) \check{\tau} (x-y) b_x^\dagger b_y^\dagger + \int_{\Lambda^3} \dx  \dy  \dz  V_N (x-y) \check{\tau} (x-z) b_x^\dagger b_y^\dagger a_z^\dagger a_y + \text{h.c.} \]
With the bounds from Lemma \ref{lm:tau}, we conclude 
\[ \pm [Q_4, \B_4] \lesssim Q_4 + N^{-1+2\alpha} + N^{-1} (\cN_+ + 1)^2. \]
Inserting this in (\ref{eq:hder}), applying (\ref{eq:B4_N}) and then Gronwall's lemma, we obtain (\ref{eq:B4_Q4}). 
\end{proof} 

We are now ready to state the main result of this section, which shows that conjugation with $e^{\cB_4}$ diagonalize the quadratic part of the Hamilton operator. 
\begin{proposition}\label{prop:B4}
We have
\begin{equation}\label{eq:propB4-cl}
\begin{split}
e^{-\B_4} e^{-\B_3} &e^{-\B_2} H_N e^{\B_2} e^{\B_3}e^{\B_4} \\ = &\; 4 \pi \aN (N-1)  + \frac{1}{2}\sum_{p} \left[ \sqrt{p^4 + 16 \pi \aN p^2} - p^2 - 8 \pi \aN + \frac{(8\pi \aN)^2}{2p^2}\right] \\ &+ \sum_{p} \sqrt{p^4 + 16 \pi \aN p^2} \, a^\dagger_p a_p  + e^{-\B_4}Q_4e^{\B_4} + \mathcal E_{\B_4}
\end{split}
\end{equation} 
where 
\begin{equation}\label{eq:cE4}\begin{split} \pm \mathcal{E}_{\B_4}  \lesssim  \; &N^{-3\alpha/2} (H_1 +N^\alpha) + N^{-\alpha/2} e^{-\B_4} Q_4 e^{\B_4} + N^{-\alpha/2} (\cN_+ + 1)  \\ &+ N^{(3\alpha-1)/2} (\cN_+ + 1)^{3/2} + N^{-1+5\alpha/2} (\cN_+ + 1)^2.   \end{split} \end{equation} 
\end{proposition}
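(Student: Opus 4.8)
The plan is to show that conjugation with $e^{\B_4}$ rotates the factorised quadratic expression on the last line of (\ref{eq:diag1}) into the diagonal operator $\sum_p\sqrt{p^4+16\pi\aN p^2}\,a_p^\dagger a_p$, up to a constant and an error of the form (\ref{eq:cE4}). Write $h_{\mathrm q}:=\sum_{|p|\le N^\alpha}E_p(\gamma_p b_p^\dagger+\nu_p b_{-p})(\gamma_p b_p+\nu_p b_{-p}^\dagger)$ with $E_p=\sqrt{p^4+16\pi\aN p^2}$, so that (\ref{eq:diag1}) reads $e^{-\B_3}e^{-\B_2}H_Ne^{\B_2}e^{\B_3}=4\pi\aN(N-1)+\mathrm{const}+h_{\mathrm q}+\sum_{|p|>N^\alpha}p^2a_p^\dagger a_p+Q_4+\mathcal E$, with $\mathcal E$ obeying (\ref{eq:EB3}). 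The starting point is the observation that the coefficients $\gamma_p=\cosh\tau_p$, $\nu_p=\sinh\tau_p$ are tailored so that the \emph{standard} Bogoliubov transformation $e^{\B_4^\flat}$ generated by $\B_4^\flat=\frac12\sum_{|p|\le N^\alpha}\tau_p(a_p^\dagger a_{-p}^\dagger-a_pa_{-p})$ — the same expression as $\B_4$ but with the ordinary operators $a_p,a_p^\dagger$ in place of the modified ones from (\ref{eq:def-b}) — acts explicitly, mapping $a_p\mapsto\gamma_pa_p+\nu_pa_{-p}^\dagger$ and $a_p^\dagger\mapsto\gamma_pa_p^\dagger+\nu_pa_{-p}$. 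Using only $\gamma_p^2-\nu_p^2=1$, conjugation of $h_{\mathrm q}^\flat$ (the $a$-analogue of $h_{\mathrm q}$) by $e^{\pm\B_4^\flat}$ in the appropriate order produces \emph{exactly} $\sum_{|p|\le N^\alpha}E_pa_p^\dagger a_p$, with no error at all. Since $b_p=a_0^\dagger a_p/\sqrt N\approx a_p$ on states with $\cN_+\ll N$, one expects $e^{\B_4}$ to do the same up to terms carrying powers of $\cN_+/N$.

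The heart of the argument is to make this comparison rigorous by interpolation. I would prove that
\[ e^{-\B_4}(\gamma_pb_p^\dagger+\nu_pb_{-p})e^{\B_4}=b_p^\dagger+\mathcal R_p,\qquad e^{-\B_4}(\gamma_pb_p+\nu_pb_{-p}^\dagger)e^{\B_4}=b_p+\mathcal R_p', \]
with remainders $\mathcal R_p,\mathcal R_p'$ small in $\cN_+/N$. For the first identity, set $g_p(t)=e^{-t\B_4}\big(c_1(t)b_p^\dagger+c_2(t)b_{-p}\big)e^{t\B_4}$ for $t\in[0,1]$, where $(c_1,c_2)$ solves the linear ODE with boundary data $c_1(0)=1$, $c_2(0)=0$, $c_1(1)=\gamma_p$, $c_2(1)=\nu_p$ chosen so that the leading part of $[c_1b_p^\dagger+c_2b_{-p},\B_4]$ — which one reads off from the commutation relations following (\ref{eq:b-comm}), and which is $\tau_pb_{-p}$, resp.\ $\tau_pb_p^\dagger$, up to corrections carrying a factor $\cN_+/N$ or $a_q^\dagger a_r/N$ — is cancelled exactly by $c_1'b_p^\dagger+c_2'b_{-p}$. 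Then $g_p(0)=b_p^\dagger$, $g_p(1)=e^{-\B_4}(\gamma_pb_p^\dagger+\nu_pb_{-p})e^{\B_4}$, and $g_p'(t)=e^{-t\B_4}(\text{order-}N^{-1}\text{ remainders})e^{t\B_4}$, so $\mathcal R_p=\int_0^1g_p'(t)\,dt$. Substituting into $e^{-\B_4}h_{\mathrm q}e^{\B_4}=\sum_{|p|\le N^\alpha}E_p(b_p^\dagger+\mathcal R_p)(b_p+\mathcal R_p')$ and expanding yields $\sum_{|p|\le N^\alpha}E_pb_p^\dagger b_p$ plus error terms in which the remainders are paired with $E_p\lesssim N^{2\alpha}$ and summed over $|p|\le N^\alpha$. \emph{This bookkeeping is the main obstacle:} these errors have to be bounded with the pointwise and $\ell^p$-estimates on $\tau,\nu,\check\tau,\check\nu$ from Lemma~\ref{lm:tau} together with the a-priori control on the growth of $\cN_+$, $H_1$ and $Q_4$ under $e^{t\B_4}$ from Lemma~\ref{lm:gronB4}; in particular $Q_4$ reappears on the right-hand side in the conjugated form $e^{-\B_4}Q_4e^{\B_4}$, which is the origin of the term $N^{-\alpha/2}e^{-\B_4}Q_4e^{\B_4}$ in (\ref{eq:cE4}). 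The quartic term $Q_4$ of (\ref{eq:diag1}) is otherwise simply carried along as $e^{-\B_4}Q_4e^{\B_4}$ (only $Q_4\ge0$ and, for upper bounds, the bound (\ref{eq:B4_Q4}) are used).

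It then remains to assemble the pieces. On $\{\cN=N\}$ one has $b_p^\dagger b_p=a_p^\dagger a_p\,\frac{a_0a_0^\dagger}{N}=a_p^\dagger a_p-\frac1Na_p^\dagger\cN_+a_p$, and $\frac1N\sum_{|p|\le N^\alpha}E_pa_p^\dagger\cN_+a_p\lesssim N^{2\alpha-1}(\cN_++1)^2$, which fits into (\ref{eq:cE4}). Next I would merge $\sum_{|p|\le N^\alpha}E_pa_p^\dagger a_p$ with $\sum_{|p|>N^\alpha}p^2a_p^\dagger a_p$: since $|E_p-p^2|\lesssim\aN$ for $|p|>N^\alpha$, replacing $p^2$ by $E_p$ there costs only $\lesssim\aN\sum_{|p|>N^\alpha}a_p^\dagger a_p\lesssim N^{-2\alpha}H_1$, which is absorbed into the error and produces $\sum_pE_pa_p^\dagger a_p$. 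Similarly, the constant $\frac14\sum_{|p|\le N^\alpha}(8\pi\aN)^2/p^2$ from (\ref{eq:diag1}) together with the constant $-\frac12\sum_{|p|\le N^\alpha}[p^2+8\pi\aN-E_p]$ generated by the diagonalisation combine to $\frac12\sum_{|p|\le N^\alpha}[E_p-p^2-8\pi\aN+(8\pi\aN)^2/(2p^2)]$, and extending this sum to all $p$ costs only $O(N^{-\alpha})$ because the summand decays like $\aN^3/p^4$ for large $|p|$.

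Finally, the pre-existing error $\mathcal E$ of (\ref{eq:EB3}), conjugated by $e^{\B_4}$ and estimated via Lemma~\ref{lm:gronB4} — using $e^{-\B_4}H_1e^{\B_4}\lesssim H_1+N^\alpha$, which accounts for the term $N^{-3\alpha/2}(H_1+N^\alpha)$ in (\ref{eq:cE4}), and $e^{-\B_4}\cN_+^ke^{\B_4}\lesssim\cN_+^k$, which preserves the remaining terms — together with the interpolation errors of the second paragraph and the small contributions of the third, yields an overall error bounded as in (\ref{eq:cE4}). Collecting all the pieces gives (\ref{eq:propB4-cl}).
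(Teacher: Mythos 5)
Your proposal is correct and takes essentially the same route as the paper: the paper also proves Prop.~\ref{prop:B4} by interpolating between the generalized Bogoliubov transformation and the explicit standard one, using the coefficients $\gamma_p^s=\cosh(s\tau_p)$, $\nu_p^s=\sinh(s\tau_p)$ and showing that the $s$-derivative of the conjugated quadratic expression is $\lesssim N^{2\alpha-1}(\cN_++1)^2$ because $[\B_4,\cdot]$ cancels $\partial_s$ up to terms carrying $\cN_+/N$ or $a^\dagger a/N$ from (\ref{eq:b-comm}), and then assembles the constants, the replacement of $b_p^\dagger b_p$ by $a_p^\dagger a_p$, the extension of the sums, and the conjugation of $\mathcal{E}_{\B_3}$ via Lemma~\ref{lm:gronB4} exactly as you describe. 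The only (inessential) difference is that the paper interpolates the whole quadratic operator $E(s)$ at once, so the cancellation occurs inside a single form bound and no products of remainders $\mathcal{R}_p\mathcal{R}_p'$ summed against $\sqrt{p^4+16\pi\aN p^2}$ ever have to be controlled, which is slightly cleaner than your factor-by-factor version but requires the same input (Lemmas~\ref{lm:tau} and \ref{lm:gronB4}) and yields the same error (\ref{eq:cE4}).
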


\begin{proof}
For $s \in [0;1]$, we define
\[ E(s) = \sum_{|p| \leq N^\alpha} \sqrt{|p|^4 + 16 \pi \aN p^2} \,  \big(\gamma^{s}_p b_p^\dagger  + \nu_p^{s} b_{-p} \big) \big( \gamma_p^{s}  b_p + \nu^{s}_p b_{-p}^\dagger  \big) \]
with the operators $b_p, b^\dagger_p$ defined in (\ref{eq:def-b}) and with the notation $\gamma_p^{s} = \cosh (s \tau_p)$ and $\nu_p^{s} = \sinh (s \tau_p)$. In particular, for $s=1$, this is exactly the operator appearing in (\ref{eq:diag1}), on the third line of the equation.  For $\psi$ in the sector $\{ \cN = N \}$, we define $f_\psi :[0;1] \to \bR$ by 
\[ f_\psi (s) = \langle \psi , e^{-s\B_4} E(s) e^{s\B_4} \psi \rangle \, . \]
The idea is that the generalized Bogoliubov transformation $e^{s\B_4}$ approximately cancels (on states with few excitations) the symplectic rotations determined by the coefficients $\gamma_p^{s}$, $\nu_p^{s}$ (it would precisely cancel it, if the operators $b_p^\dagger, b_p$ satisfied canonical commutation relations); hence, on states with few excitations, we expect $f_\psi$ to be approximately constant in $s$. More precisely, we claim that 
\begin{equation}\label{eq:claim-diag} 
\big| f'_\psi (s) \big| \lesssim N^{2\alpha-1} \langle \psi, (\cN_+ + 1)^2 \psi \rangle \, .
\end{equation} 
Assuming for a moment (\ref{eq:claim-diag}) to hold true, we could conclude, integrating over $s \in [0;1]$, that 
\[ e^{-\B_4} E(1) e^{\B_4} = E(0) + \delta \]
with 
\[ \pm \delta \lesssim N^{2\alpha-1} (\cN_+ + 1)^2 \, . \]
With the bounds from Lemma \ref{lm:gronB4} (and noticing that the action of $\B_4$ on the high-momenta part of the kinetic energy is trivial), this would imply that   
\begin{equation}\label{eq:diag2} \begin{split} e^{-\B_4} &e^{-\B_3} e^{-\B_2} H_N e^{\B_2} e^{\B_3} e^{\B_4} \\ &= 4 \pi \aN (N-1) - \frac{1}{2} \sum_{|p| \leq N^\alpha} \Big[ p^2 + 8\pi \aN - \sqrt{|p|^4 + 16\pi \aN p^2} - \frac{(8 \pi \aN)^2}{2p^2} \Big] 	\\ &\quad + \sum_{|p| \leq N^\alpha} \sqrt{|p|^4 + 16 \pi \aN p^2} \, a_p^\dagger \frac{a_0 a_0^\dagger}{N} a_p + \sum_{|p| > N^\alpha} p^2 a_p^\dagger a_p + e^{-\B_4} Q_4 e^{\B_4} + \mathcal{E} ,
\end{split} \end{equation} 
where $\mathcal{E}$ satisfies (\ref{eq:cE4}). Writing $a_0 a_0^\dagger = a_0^\dagger a_0 + 1 = N - \cN_+ + 1$, we could then replace 
\[  \sum_{|p| \leq N^\alpha} \sqrt{|p|^4 + 16 \pi \aN p^2} \, a_p^\dagger \frac{a_0 a_0^\dagger}{N} a_p =  \sum_{|p| \leq N^\alpha} \sqrt{|p|^4 + 16 \pi \aN p^2} \, a_p^\dagger a_p + \delta \]
with 
\[ \pm \delta \leq \frac{1}{N} \sum_{|p| \leq N^\alpha} \sqrt{|p|^4 + 16 \pi \aN p^2}  a_p^\dagger (\cN_+ + 1) a_p \lesssim N^{-1+2\alpha} (\cN_++1)^2\, . \]
Furthermore, since 
 \begin{align*}
\Big| \, p^2 + 8 \pi \aN - \sqrt{|p|^4 + 16 \pi \aN p^2} -  \frac{(8\pi \aN)^2}{2p^2} \Big| \lesssim (1+p^2)^{-2}
 \end{align*}
we could write 
\begin{multline*}
\sum_{|p|\leq N^{\alpha}} \left[ p^2 + 8\pi \mathfrak{a}_N - \sqrt{|p|^4 + 16 \pi \aN p^2} -  \frac{(8\pi \aN)^2}{2p^2}\right] \\ = \sum_{p} \left[p^2 + 8\pi \mathfrak{a}_N - \sqrt{|p|^4 + 16 \pi \aN p^2} -  \frac{(8\pi \aN)^2}{2p^2}\right] + \mathcal O(N^{-\alpha}).
\end{multline*}
Similarly, from $\big| p^2 - \sqrt{|p|^4 + 16 \pi \aN p^2} \big| \lesssim 1$, we could bound 
\[ \pm \sum_{|p| > N^\alpha} \big[ p^2 - \sqrt{|p|^4 + 16 \pi \aN p^2} \big] a_p^\dagger a_p \lesssim N^{-2\alpha} H_1\, . \]
Inserting all these estimates in (\ref{eq:diag2}), we would end up with (\ref{eq:propB4-cl}) and (\ref{eq:cE4}). 

It remains to show the bound (\ref{eq:claim-diag}). To this end, we observe that 
\begin{equation}\label{eq:f's} f'_\psi (s) = \frac{d}{ds} \big\langle \psi, e^{s \B_4} E(s) e^{-s\B_4} \psi \big\rangle = \big\langle \psi, e^{s \B_4} \big\{ \big[ \B_4, E(s) \big] + \frac{\partial E(s)}{\partial s} \big\} e^{-s\B_4} \psi \big\rangle  \end{equation} 
We have, denoting $\eps_p  = \sqrt{|p|^4 + 16 \pi \aN p^2}$, 
\[ \begin{split}  \big[ \B_4, E(s) \big] = \frac{1}{2} \sum_{|p| \leq N^\alpha} \eps_p \sum_{|q| \leq N^\alpha} \tau_q 
\big\{ &\big[ b_q^\dagger b_{-q}^\dagger, (\gamma_p^s b_p^\dagger + \nu_p^s b_{-p}) \big] (\gamma_p^s b_p + \nu_p^s b_{-p}^\dagger) \\ &+ (\gamma_p^s b_p^\dagger + \nu_p^s b_{-p}) \big[ b_q^\dagger b_{-q}^\dagger, (\gamma_p^s b_p + \nu_p^s b_{-p}^\dagger) \big] \big\} + \text{h.c.} \end{split} \]
A long but straightforward computation, based on the commutation relations (\ref{eq:b-comm}), leads to 
\begin{equation}\label{eq:der-E} \begin{split} \big[ &\B_4, E(s) \big] \\ = &\, -\frac{1}{2} \sum_{|p| \leq N^\alpha} \eps_p \tau_p \nu_p^s b_p^\dagger \big(1-\frac{\cN_+}{N}\big) (\gamma_p^s b_p + \nu_p^s b_{-p}^\dagger ) - \frac{1}{2} \sum_{|p| \leq N^\alpha} \eps_p \tau_p \nu_p^s \big(1-\frac{\cN_+}{N} \big) b_p^\dagger  (\gamma_p^s b_p + \nu_p^s b_{-p}^\dagger ) \\  
 &-\frac{1}{2} \sum_{|p| \leq N^\alpha} \eps_p \tau_p \gamma_p^s (\gamma_p^s b^\dagger_p + \nu_p^s b_{-p} )
 b_{-p}^\dagger \big(1-\frac{\cN_+}{N}\big)  -\frac{1}{2} \sum_{|p| \leq N^\alpha} \eps_p \tau_p \gamma_p^s (\gamma_p^s b^\dagger_p + \nu_p^s b_{-p} )\big(1-\frac{\cN_+}{N}\big)  b_{-p}^\dagger  \\
 &+ \frac{1}{2N}  \sum_{|p|, |q| \leq N^\alpha} \eps_p \tau_q \nu_p^s \big[ b_q^\dagger a_{-q}^\dagger a_{-p} + a_q^\dagger a_{-p} b_{-q}^\dagger \big] (\gamma_p^s b_p + \nu_p^s b_{-p}^\dagger ) \\ 
  &+ \frac{1}{2N}  \sum_{|p|, |q| \leq N^\alpha} \eps_p \tau_q \gamma_p^s  (\gamma_p^s b^\dagger_p + \nu_p^s b_{-p})  \big[ b_q^\dagger a_{-q}^\dagger a_{p} + a_q^\dagger a_{p} b_{-q}^\dagger \big]
+ \text{h.c.} 
\end{split} \end{equation} 
To compute the explicit time derivative of the observable $E(s)$, on the other hand, we notice that $ d\gamma_p^s /ds = \tau_p \nu_p^s$ and that $d\nu_p^s / ds = \tau_p \gamma_p^s$. Thus, we obtain 
\[ \begin{split} 
\frac{\partial E(s)}{\partial s} = &\sum_{|p| \leq N^\alpha} \eps_p \tau_p \big(\nu_p^s b_p^\dagger + \gamma_p^s b_{-p} \big) \big( \gamma_p^s b_p + \nu_p^s b_{-p}^\dagger \big) + \sum_{|p| \leq N^\alpha} \eps_p \tau_p (\gamma_p^s b_p^\dagger + \nu_p^s b_{-p}) (\nu_p^s b_p + \gamma_p^s b_{-p}^\dagger)\, . \end{split} \]
Combining the last equation with (\ref{eq:der-E}), we observe that (as expected) all large contributions cancel. We find 
\[  \begin{split} \big[ \B_4, E(s) \big] + \frac{\partial E(s)}{\partial s} = \; &\frac{1}{N} \sum_{|p| \leq N^\alpha} \eps_p \tau_p \nu_p^s  \, \cN_+ b_p^\dagger (\gamma_p^s b_p + \nu_p^s b_{-p}^\dagger) \\ 
&+\frac{1}{N} \sum_{|p| \leq N^\alpha} \eps_p \tau_p \gamma_p^s \, (\gamma_p^s b^\dagger_p + \nu_p^s b_{-p}) \cN_+  b_{-p}^\dagger \\
&+\frac{1}{N} \sum_{|p|, |q| \leq N^\alpha} \eps_p \tau_q \nu_p^s \, b_q^\dagger a_{-q}^\dagger a_{-p} (\gamma_p^s b_p + \nu_p^s b_{-p}^\dagger) \\
&+ \frac{1}{N} \sum_{|p|, |q| \leq N^\alpha} \eps_p \tau_q \gamma_p^s \, (\gamma_p^s b_p^\dagger + \nu_p^s b_{-p}) b_q^\dagger a_{-q}^\dagger a_p + \text{h.c.} \end{split} \]
%
Using the bounds in Lemma \ref{lm:tau}, with the estimate $\eps_p \lesssim p^2$ and the restrictions $|p|, |q| \leq N^\alpha$, we arrive at 
\[ \pm \left\{  \big[ \B_4, E(s) \big] + \frac{\partial E(s)}{\partial s}  \right\} \lesssim N^{2\alpha-1} (\cN_+ + 1)^2 \,. \]
From (\ref{eq:f's}), this implies that 
\[ |f'_\psi (s)| \lesssim N^{2\alpha-1} \langle \psi, e^{-s\B_4} (\cN_+ + 1)^2 e^{s\B_4} \psi \rangle \, . \] 
Applying Lemma \ref{lm:gronB4}, we obtain (\ref{eq:claim-diag}). 
\end{proof}

\section{Optimal BEC and Proof of Theorem \ref{thm:main}}

Let us denote
\begin{equation}\label{eq:wtHN}
\widetilde H_N = H_N -  4 \pi \aN (N-1) + \frac{1}{2}\sum_{p} \left[ \sqrt{|p|^4 + 16 \pi \aN p^2} - p^2 - 8 \pi \aN +  \frac{(8\pi \aN)^2}{2p^2}\right],
\end{equation}
and 
\begin{equation}\label{eq:Einf}
E_{\infty} = \sum_{p} \sqrt{|p|^4 + p^2 16 \pi \aN } \, a_p^\dagger a_p \,. 
\end{equation}
Moreover, let $\mathcal{U} = e^{\B_2} e^{\B_3} e^{\B_4}$. Observe that $\mathcal{U}$ is a unitary operator. From Prop. \ref{prop:B4}, we have 
\begin{equation}\label{eq:proB42}
\mathcal U^\dagger \widetilde H_N  \mathcal U = E_\infty  + e^{-\B_4}Q_4e^{\B_4} +  \mathcal E_{\B_4} , 
\end{equation}
where $\mathcal{E}_{\B_4}$ satisfies the bound (\ref{eq:cE4}). To prove that the error term $\mathcal{E}_{\B_4}$ is small, we show first that low-energy states exhibit complete Bose-Einstein condensation. 
\begin{proposition}[Optimal BEC]
	\label{prop:BEC}
On $\{\N = N\}$, we have
\begin{equation}\label{eq:BEC-op}
H_N \geq 4\pi \aN N + C^{-1} \N_+ - C,
\end{equation} 
for some constant $C>0$ independent of $N$.
\end{proposition}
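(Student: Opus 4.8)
The starting point is the diagonalized identity \eqref{eq:proB42}. Writing $\mathcal U = e^{\B_2}e^{\B_3}e^{\B_4}$ and recalling \eqref{eq:wtHN}, \eqref{eq:Einf}, we have $\mathcal U^\dagger \widetilde H_N \mathcal U = E_\infty + e^{-\B_4}Q_4 e^{\B_4} + \mathcal E_{\B_4}$, with $\mathcal E_{\B_4}$ controlled by \eqref{eq:cE4}. Since $\sqrt{|p|^4 + 16\pi\aN p^2} \ge \max\{|p|^2,(2\pi)^2\}$ for every $p \in \Lambda^\ast_+$, we get $E_\infty \ge \tfrac12 H_1 + c\,\N_+$ for a fixed $c>0$, while $e^{-\B_4}Q_4 e^{\B_4} \ge 0$. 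Fixing $\alpha>0$ small, using \eqref{eq:cE4} and, for $N$ large, absorbing the terms $N^{-3\alpha/2}H_1$, $N^{-\alpha/2}e^{-\B_4}Q_4 e^{\B_4}$ and $N^{-\alpha/2}(\N_++1)$ into $\tfrac14 H_1$, $\tfrac12 e^{-\B_4}Q_4 e^{\B_4}$ and $\tfrac{c}{2}\N_+$ respectively, and the superlinear term $N^{(3\alpha-1)/2}(\N_++1)^{3/2}$ into $\tfrac{c}{4}\N_+$ plus a remainder of the form $N^{-\kappa}(\N_++1)^2$ (Young's inequality), we obtain on $\{\N=N\}$
\[ \mathcal U^\dagger \widetilde H_N \mathcal U \ \ge\ c'\,\N_+\ -\ C\,N^{-\kappa}(\N_++1)^2\ -\ C \]
for some $c',C,\kappa>0$. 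By the a priori bounds of Lemmas \ref{Gron}, \ref{lemma:gronwall_B3} and \ref{lm:gronB4}, which propagate powers of $\N_++1$ through $\mathcal U$, this lower bound transfers to $\widetilde H_N$ itself: on $\{\N=N\}$,
\[ \widetilde H_N \ \ge\ c''\,\N_+\ -\ C'\,N^{-\kappa}(\N_++1)^2\ -\ C' . \]

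Since $\widetilde H_N$ and $H_N - 4\pi\aN N$ differ by the bounded constant appearing in \eqref{eq:wtHN}, the only obstruction to \eqref{eq:BEC-op} is the quadratic remainder $N^{-\kappa}(\N_++1)^2$; I would remove it by localizing in the number of excitations, as in \cite{BBCS4}. Fix a parameter $M \in [1,N]$ and a smooth partition of unity $f_M^2 + g_M^2 = 1$, $f_M = f(\N_+/M)$ with $f\equiv1$ on $(-\infty,\tfrac12]$ and $f\equiv0$ on $[1,\infty)$; both $f_M,g_M$ commute with $\N$ and with $\N_+$. From the IMS-type identity $\widetilde H_N = f_M\widetilde H_N f_M + g_M\widetilde H_N g_M + \mathcal E_{\mathrm{loc}}$, with $\mathcal E_{\mathrm{loc}} = \tfrac12([f_M,[f_M,\widetilde H_N]] + [g_M,[g_M,\widetilde H_N]])$ built only from the double commutators of $f_M,g_M$ with the non-number-preserving parts $Q_2,Q_3$ of $H_N$, one has $\pm\mathcal E_{\mathrm{loc}} \lesssim M^{-2}(Q_4 + H_1 + N)$, which is negligible for a suitable (small power) choice of $M$. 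On the range of $f_M$ one has $\N_+ \le M$, hence $(\N_++1)^2 \le (M+1)(\N_++1)$; if $M \lesssim N^\kappa$ the quadratic remainder is absorbed and
\[ \langle f_M\psi, \widetilde H_N f_M\psi\rangle \ \ge\ \tfrac{c''}{2}\,\langle f_M\psi, \N_+ f_M\psi\rangle\ -\ C\,\|f_M\psi\|^2 . \]

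The main difficulty is the high piece $g_M\psi$, where $\N_+ \ge M/2$ is large and the bound above is useless; here I would use a weak Bose--Einstein condensation estimate as an a priori input and proceed by a bootstrap. Concretely, one feeds in the known fact (see \cite{LieSei-02,LS2,NamRouSei-15,BBCS0}) that a normalized $\psi \in L^2_s(\Lambda^N)$ with $\langle\psi,H_N\psi\rangle \le 4\pi\aN N + \zeta$ satisfies $\langle\psi,\N_+\psi\rangle \lesssim \zeta + N^{1-\varepsilon_0}$ for some fixed $\varepsilon_0>0$; applying it to the normalized vector $g_M\psi/\|g_M\psi\|$ (which again lies in $\{\N=N\}$) gives $\langle g_M\psi, \widetilde H_N g_M\psi\rangle \ge C_0^{-1}\langle g_M\psi, \N_+ g_M\psi\rangle - CN^{1-\varepsilon_0}\|g_M\psi\|^2$. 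Combining this with the estimate on the low piece, the bound on $\langle\mathcal E_{\mathrm{loc}}\rangle_\psi$, the identity $\langle\psi,\N_+\psi\rangle = \langle f_M\psi,\N_+ f_M\psi\rangle + \langle g_M\psi,\N_+ g_M\psi\rangle$, and Markov's inequality $\|g_M\psi\|^2 \le 2\langle\psi,\N_+\psi\rangle/M$, and choosing $M$ in a window where both the low-piece constraint ($M \lesssim N^\kappa$) and the high-piece smallness ($M$ large enough that $N^{1-\varepsilon_0}/M$ is small) can be met, one reorganizes the inequality into $\langle\psi,\widetilde H_N\psi\rangle \ge \bar c\,\langle\psi,\N_+\psi\rangle - \bar C$, which is \eqref{eq:BEC-op}. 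If the available rate $\varepsilon_0$ is too small for these two constraints to be simultaneously satisfied, one iterates the scheme finitely many times, each iteration producing a better error exponent, until the remainder is $O(1)$. The delicate points, which are exactly what force $\alpha$ (hence $\kappa$) to be small, are: (i) keeping $\mathcal E_{\mathrm{loc}}$ and the weighted second moment $\langle(\N_++1)^2\rangle$ under uniform control along the iteration; and (ii) arranging the exponents so each bootstrap step strictly improves the estimate.
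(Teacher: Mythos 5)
Your overall architecture --- start from Prop.~\ref{prop:B4} (i.e.\ \eqref{eq:proB42}), use $E_\infty \gtrsim H_1 \gtrsim \N_+$ together with $Q_4\ge 0$, localize in $\N_+$ in the spirit of \cite{LewNamSerSol-13,BBCS4}, and invoke previously known condensation results on the high-$\N_+$ sector --- is the same as the paper's. But there is a genuine gap in how you set the two scales, and this is precisely the point of the paper's argument. You keep $\alpha>0$ fixed, which forces the localization threshold $M$ to be sub-macroscopic: absorbing $N^{(3\alpha-1)/2}(\N_++1)^{3/2}$ and $N^{-1+5\alpha/2}(\N_++1)^2$ into $c\,\N_+$ on $\{\N_+\le M\}$ requires $M\lesssim N^{1-3\alpha}\ll N$ (your Young-inequality reformulation with $N^{-\kappa}(\N_++1)^2$ is even more restrictive), and moreover the localization error $\mathcal{E}_{\mathrm{loc}}\lesssim M^{-2}(Q_4+N)$ is negligible on $O(N)$-energy states only if $M\gg N^{1/2}$, not for a ``small power'' choice of $M$. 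On the complementary sector you then only know $\N_+\gtrsim M=o(N)$, and the admissible inputs \cite{LieSei-02,LS2,NamRouSei-15} are purely qualitative, namely \eqref{eq:BEC_input}: they say nothing about states whose excitation number is merely $\gtrsim N^{1-3\alpha}$. The quantitative bound $\langle\N_+\rangle\lesssim \zeta+N^{1-\eps_0}$ you posit is not among these inputs (it is essentially the content of \cite{BBCS0,BBCS4}, i.e.\ of the statement being re-proved), and your bootstrap cannot generate it: at step zero the required window $\max\{N^{1/2},N^{1-\eps_0}\}\ll M\lesssim N^{1-3\alpha}$ is empty, so the iteration never gets started, nor do you exhibit a mechanism by which one pass would strictly improve the exponent.

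The paper's fix is the twist you miss: for this proposition $\alpha$ is taken $N$-dependent, $\alpha=-\log\ell/\log N$, so that $N^\alpha=\ell^{-1}$ is an $N$-independent constant. Then the error \eqref{eq:cE4} has coefficients $\ell^{3/2}$, $\ell^{1/2}$, $\ell^{-3/2}N^{-1/2}$ and $\ell^{-5/2}N^{-1}$, and with the \emph{macroscopic} choice $M_0=\eps N$ all error terms are absorbed, on the sector $\{\N_+\le 2\eps N\}$, into a small multiple of $H_1+1$ (cf.\ \eqref{eq:BEC_f}) by choosing first $\ell$ and then $\eps$ small; the factors $N^{-\alpha/2}=\ell^{1/2}$ are no longer vanishing in $N$, but for \eqref{eq:BEC-op} only constants are needed. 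With $M_0=\eps N$ the localization error is $O(\eps^{-2}N^{-1})$ on states with $\langle H_N\rangle\le CN$ (and \eqref{eq:BEC-op} is trivial otherwise), and on the high sector $\N_+\ge M_0/3$ is a macroscopic fraction of $N$, so the purely qualitative statement \eqref{eq:BEC_input} suffices via the contradiction argument of \cite[Prop.~6.1]{BBCS4}. To repair your proof, replace the fixed $\alpha$ by the choice $N^\alpha=\ell^{-1}$ and take $M=\eps N$; the bootstrap and the quantitative a priori input then become unnecessary.
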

\begin{proof} 
To take care of the terms on the second line of (\ref{eq:cE4}), we use localization in the number of particles, a tool developed in \cite{LewNamSerSol-13} and, in the present setting, in \cite{BBCS4}. We make use, here, of the results of \cite{LieSei-02,LieSei-06,NamRouSei-15}, which imply that, if $\psi_N \in L^2_s (\Lambda^N)$ is a normalized sequence of approximate ground states of the Hamilton operator $H_N$, satisfying  
\[ \Big| \frac{1}{N} \langle \psi_N, H_N \psi_N \rangle - 4\pi \mathfrak{a}_N \Big| \to 0 \]
as $N \to \infty$, then $\psi_N$ exhibit condensation, in the sense that 
\begin{equation}\label{eq:BEC_input}  \lim_{N \to \infty}  \frac{1}{N} \langle \psi_N , \cN_+ \psi_N \rangle = 0. \end{equation} 
Let now $f,g : \mathbb{R} \to [0,1]$ be smooth functions, such that $f(s)^2 + g(s)^2 = 1$ for all $s \in \mathbb{R}$, and $f(s) = 1$ for $s \leq 1/2$, $f(s) =0$ for $s\geq 1$. For $M_0 \geq 1$, we define $f_{M_0}(\N_+) = f(\N_+ / M_0)$ and $g_{M_0}(\N_+) = g(\N_+/M_0)$. Then, we have
\begin{align}
H_N &= f_{M_0} H_N f_{M_0} + g_{M_0} H_N g_{M_0} + \mathcal E_{M_0}, 
\label{eq:loc_identity}
\end{align} 
with 
\begin{equation} 
\mathcal E_{M_0} = \frac{1}{2} \left( [f_{M_0},[f_{M_0},H_N]] +  [g_{M_0},[g_{M_0},H_N]]\right).\label{eq:error_loc}
\end{equation}
In view of \eqref{eq:decompo_HN}, we can write (with $h =f,g$), 
\begin{align*}
[h_{M_0} , [h_{M_0},H_N]] 
	&=  \big[ h (\N_+ / M_0) - h ((\N_+ -2)/M_0)\big]^2 \sum_{p \neq 0} \hat V_N(p) a_p^\dagger a_{-p}^\dagger a_0 a_0 + \text{h.c.}, \\
	&\quad +  \big[ h (\N_+ / M_0) - h ((\N_+ -1)/M_0)\big]^2  \sum_{q,r,q+r\neq 0} \hat V_N(r) a_{q+r}^\dagger a_{-r}^\dagger a_{q}a_0 + \text{h.c.},
\end{align*}
This easily implies that  
\begin{align}
	\label{eq:local_error}
\pm \mathcal E_{M_0} \lesssim M_0^{-2} (Q_4 + N) \1^{\{M_0/3 \leq \N_+ \leq 2M_0\}}.
\end{align}

We choose $M_0 = \varepsilon N$, for some $\varepsilon >0$, independent of $N$, to be fixed later on. We introduce the notation $\N_{+}^{\, \mathcal U} = \mathcal U^\dagger \N_+ \mathcal U$. We use (\ref{eq:proB42}) with the bound (\ref{eq:cE4}) for the error term $\mathcal{E}_4$; we pick $\alpha = - \log \ell / \log N$, so that $N^\alpha = 1/\ell$, for some $\ell > 0$, independent of $N$, to be specified below. For $N$ large enough, we obtain from Prop. \ref{prop:B4},
\begin{align}
f_{M_0}(\N_+) H_N  &f_{M_0}(\N_+) 
	=   \mathcal U f_{M_0}(\N_{+}^{\, \mathcal U}) \mathcal U^\dagger H_N \mathcal U \, f_{M_0}(\N_{+}^{\, \mathcal U}) \mathcal U^\dagger \nonumber \\
	&\geq \mathcal U f_{M_0}(\N_{+}^{\, \mathcal U}) \left( 4\pi \aN N + H_1 -C  + e^{-\B_4}Q_4e^{\B_4} + \mathcal E_4  \right)f_{M_0}(\N_{+}^{\, \mathcal U}) \mathcal U^\dagger\nonumber  \\
	&\geq \mathcal U f_{M_0}(\N_{+}^{\, \mathcal U}) \big( 4\pi \aN N + (1 - C \ell^{1/2} - C\ell^{-5/2} N^{-1}- C \ell^{-7/2}\eps) (H_1+1) \nonumber \\
	&\qquad \qquad\qquad\qquad\qquad\qquad\qquad  + (1-C \ell^{1/2})  e^{-\B_4}Q_4e^{\B_4}\big)f_{M_0}(\N_{+}^{\, \mathcal U}) \mathcal U^\dagger \nonumber \\
	&\geq f_{M_0}(\N_{+})^2 (4\pi \aN N - C + C^{-1} \N_+), \label{eq:BEC_f}
\end{align}
choosing first $\ell > 0$ small enough and then $\eps > 0$ sufficiently small. Here, we used $(\N_++1)^j \lesssim (\N_+^{\, \mathcal U}+1)^j$ (as follows from Lemma \ref{Gron}, Lemma \ref{lemma:gronwall_B3} and Lemma \ref{lm:gronB4}), to estimate the error terms on the second line of (\ref{eq:cE4}). Moreover, we bounded $\cN_+ , \cN^{\mathcal U}_+ \lesssim H_1$. 

On the other hand, following an argument from \cite[Prop. 6.1]{BBCS4}, we find 
\begin{align} \label{eq:BEC_g}
g_{M_0}(\N_+) \left(H_N -4\pi \aN N\right)  &g_{M_0}(\N_+) \geq C^{-1} \N_+ g_{M_0}(\N_+)^2.
\end{align}
Indeed, otherwise we could find a normalized sequence $\Psi_N$, supported on $\{\N_+ > \varepsilon N\}$, satisfying  
\[\Big|  \frac{1}{N} \langle \psi_N , H_N \psi_N \rangle_{\Psi_N} - 4\pi \aN \Big| \to 0  \]
as $N \to \infty$, in contradiction with \eqref{eq:BEC_input}.

Finally, we deal with the error term $\mathcal E_{M_0=\varepsilon N}$. For $\psi_N$ with $\langle \psi_N, H_N \psi_N \rangle \leq C N$, we immediately find, from \eqref{eq:local_error}, that 
\[ \langle \psi_N, \mathcal E_{M_0 = \varepsilon N} \psi_N \rangle \lesssim \varepsilon^{-2} N^{-1}. \]
Since (\ref{eq:BEC-op}) holds trivially, on states with $\langle \psi_N , H_N \psi_N \rangle \geq CN$, this, together with (\ref{eq:BEC_f}) and (\ref{eq:BEC_g}) concludes the proof of  Prop. \ref{prop:BEC}.
\end{proof}

With Prop. \ref{prop:B4} and Prop. \ref{prop:BEC}, we are now ready to show Theorem \ref{thm:main}, determining the low-energy spectrum of the operator Hamilton operator $H_N$.

\begin{proof}[Proof of Theorem \ref{thm:main}]
We continue to use the notation $\widetilde{H}_N$ and $E_\infty$ introduced in (\ref{eq:wtHN}), (\ref{eq:Einf}). Moreover, we denote by $\lambda_1 (\widetilde{H}_N) \leq \lambda_2 (\widetilde{H}_N) \leq \dots$ and by $\lambda_1 (E_\infty) \leq \lambda_2 (E_\infty) \leq \dots$ the ordered eigenvalues of $\widetilde{H}_N$ and, respectively, of $E_\infty$. We choose now $L \in \mathbb{N}$, with $\lambda_L (\widetilde H_N) \leq \Theta$ for some $1 \leq \Theta  \leq N^{1/17}$. Then, we claim that 
\begin{equation}\label{eq:claim} \lambda_L (\widetilde H_N) = \lambda_L(E_\infty) + \mathcal O (\Theta N^{-1/17}). \end{equation} 

Since $\lambda_0 (E_\infty) = 0$, (\ref{eq:claim}) shows that the ground state energy $E_N$ of $H_N$ satisfies (\ref{eq:ENbd}). 
It is then easy to check, using (\ref{eq:claim}) that the excitations of $H_N - E_N$ satisfy the claim (\ref{eq:excit}). 

To prove (\ref{eq:claim}), we first show a lower bound and then a matching upper bound. 
We use again Prop. \ref{prop:B4}, but this time we choose the exponents $\alpha = 2/17$.

{\it Lower bound on $\lambda_L (\widetilde{H}_N)$.} We use again the localization identity \eqref{eq:loc_identity} but this time we take $M_0 = N^{1/2 + 1/34}$. Let $Y$ denote the subspace generated by the first $L$ eigenfunctions of $\widetilde H_N$ and let us denote $Z = \mathcal U^\dagger Y$ which is of dimension $L$. From of \eqref{eq:loc_identity}, we have
\begin{equation}\label{eq:low1}
\lambda_{L}(\widetilde H_N) 
	\geq P_Y \left(f_{M_0}(\N_+) \widetilde H_N f_{M_0}(\N_+) + g_{M_0}(\N_+)\widetilde H_N g_{M_0}(\N_+) + \mathcal E_{M_0}\right) P_Y.
\end{equation}
From Prop. \ref{prop:BEC}, we have
\begin{align*}
 g_{M_0}(\N_+) \widetilde H_N g_{M_0}(\N_+) \geq C g^2_{M_0}(\N_+) (C^{-1} M_0 - C) \geq 0
\end{align*}
for $N$ large enough (recall the choice $M_0 = N^{1/2+1/34}$). Here, we used that $g_{M_0}$ is supported on $\cN_+ > M_0/3$. Moreover, with (\ref{eq:local_error}), we find 
\begin{align*}
P_Y \mathcal E_{M_0} P_Y &\geq - C M_0^{-2} P_Y (Q_4 + N) P_Y \geq  - C M_0^{-2} P_Y (H_N + N) P_Y  \geq - C M_0^2 N \geq - C N^{-1/17} 
\end{align*}
because (from the upper bound) we know that $H_N \leq C N$ on $Y$. From (\ref{eq:low1}), we obtain 
\begin{align*}
\lambda_{L}(\widetilde H_N) \geq P_Y f_{M_0}(\N_+) \widetilde H_N f_{M_0}(\N_+) P_Y - CN^{-1/17}.
\end{align*}
We now use Prop. \ref{prop:B4} to estimate 
\begin{align*}
P_Y f_{M_0}(\N_+)\, &\widetilde H_N \,f_{M_0}(\N_+) P_Y \\
	& \geq \mathcal U P_Z f_{M_0}(\N_+^{\mathcal U}) \,\mathds{1}^{\{\mathcal N_+ \leq N\}} \big( e^{\B_4}  (E_\infty + \mathcal E_4) e^{-\B_4} + Q_4 \big)\mathds{1}^{\{\mathcal N_+ \leq N\}}  \, f_{M_0}(\N_+^{\mathcal U}) P_Z \mathcal U^\dagger.
\end{align*}
Using that $\N_+ \lesssim H_1 \leq E_\infty$ and the choices $M_0 = N^{1/2 + 1/34}$, $\alpha = 2/17$, we find 
\begin{align*}
P_Y f_{M_0}(\N_+) \,&\widetilde H_N \,f_{M_0}(\N_+) P_Y \\
	&\geq \left(1- C N^{-1/17}\right) \mathcal U P_Z f_{M_0}(\N_+^{\mathcal U}) e^{\B_4} \, E_\infty \, e^{-\B_4} f_{M_0}(\N_+^{\mathcal U})P_Z \mathcal U^\dagger.
\end{align*}
Now, it turns out that for $N$ large enough $\dim  f_{M_0}(\N_+^{\mathcal U})P_Z = L$ because 
\begin{align*}
\max_{\xi \in P_Z} \frac{\bigg\|\sqrt{1-f_{M_0}(\N_+^{\mathcal U})^2}\xi\bigg\|^2}{\|\xi\|^2} \leq C M_0^{-1} \max_{\xi \in P_Y} \frac{\|\N_+^{1/2}\xi\|^2}{\|\xi\|^2} \leq C \lambda_{L}(\widetilde H_N) M_0^{-1} \underset{N \to \infty}{\longrightarrow} 0,
\end{align*}
see for instance \cite[Prop. 6.1 ii)]{LewNamSerSol-13}. Thus 
\begin{align*}
\lambda_{L}(\widetilde H_N) 
	&\geq \max_{\xi \in Y} \frac{\langle e^{-\B_4} f_{M_0}(\N_+^{\mathcal U}) P_Z \mathcal{U}^\dagger \xi , E_\infty e^{-\B_4} f_{M_0} (\N_+^{\mathcal U}) P_Z \mathcal U^\dagger\xi \rangle}{\|\xi\|^2} - C \Theta N^{-1/17} \\
	&\geq \max_{\xi \in e^{-\B_4} f_{M_0}(\N_+^{\mathcal U}) Z} \frac{\langle \xi, E_\infty \xi \rangle}{\|\xi\|^2}(1- C \lambda_{L}(\widetilde H_N) M_0^{-1}) - C \Theta N^{-1/17} \\
	&\geq \min_{\dim X = L} \max_{\xi \in X} \frac{\langle \xi , E_\infty \xi \rangle}{\|\xi\|^2}(1- C \lambda_{L}(\widetilde H_N) M_0^{-1})- C\Theta N^{-1/17} \\
	&\geq \lambda_L (E_\infty) (1 -C\Theta N^{-1/2}) -C \Theta N^{-1/17}.
\end{align*}
which implies $\lambda_L(\widetilde H_N) \geq \lambda_{L}(E_\infty) + \mathcal O(\Theta N^{-1/17})$.

\textit{Upper bound on $\lambda_L (\widetilde{H}_N)$}. Let $Z$ denote the subspace generated by the first $L$ eigenfunctions of $E_\infty$ and $P_Z$ be the orthogonal projection onto $Z$. The normalized eigenfunctions of $E_\infty$ have the form 
\begin{align}\label{eq:form_EF}
\xi = \prod_{j=1}^k \frac{a^\dagger(p_j)^{n_j}}{\sqrt{n_j!}} \Omega,
\end{align}
for some $k\geq 1$, $p_j \in \Lambda_+^*$, $n_j \geq 1$ and where $\Omega$ is the vacuum. Note that 
\begin{align*}
P_Z \N_+ P_Z  \lesssim P_Z H_1 P_Z  \leq P_Z E_\infty P_Z  \leq \lambda_{L}(E_\infty) \leq C \Theta \leq C N^{1/17},
\end{align*} 
where we used the lower bound we proved above. Since $[P_Z, \cN_+]=0$, this bound can be also applied to powers of $\cN_+$. Note also that $e^{\B_4} E_{\infty} e^{-\B_4}$ almost commutes with $\N_+$ in the sense that
\begin{align*}
\mathds{1}^{\{\mathcal N_+ \leq N\}} e^{\B_4} E_{\infty} e^{-\B_4} \mathds{1}^{\{\mathcal N_+ \leq N\}} - e^{\B_4} E_{\infty} e^{-\B_4}
	&\leq \frac{1}{2} \sum_{p} 8\pi \mathfrak{a}_N \chi_{|p| \leq N^\alpha} \big[ \mathds{1}^{\{\N_+ > N\}}a_p^\dagger a_{-p}^\dagger + a_p a_{-p}\mathds{1}^{\{\N_+ > N\}} \big] \\
	&\leq C N^{3\alpha/2-1} (\N_++1)^2.
\end{align*}
Hence, we have
\begin{align*}
P_Z e^{-\B_4} \mathds{1}^{\{\mathcal N_+ \leq N\}} e^{\B_4} E_{\infty} e^{-\B_4} \mathds{1}^{\{\mathcal N_+ \leq N\}}e^{\B_4}P_Z 
	& \leq P_Z \bigg(E_{\infty} + C N^{3\alpha/2-1} (\N_++1)^2 \bigg)P_Z \\
	& \leq \lambda_{L}(E_\infty) + C \Theta N^{-1/17}.
\end{align*}
Together with Prop. \ref{prop:B4}, we find  
\begin{equation}\label{eq:upp1}
\lambda_{L} (E_\infty) +  C \Theta N^{-1/17} \geq P_Z e^{-\B_4} \mathds{1}^{\{\N_+ \leq N\}} \left(\mathcal U^\dagger \widetilde H_N \mathcal U - e^{\B_4} \mathcal E_4  e^{-\B_4} - Q_4 \right) \mathds{1}^{\{\N_+ \leq N\}} e^{\B_4}P_Z .
\end{equation}
Again, because 
\begin{equation*}
\max_{\xi \in e^{\B_4} P_Z} \frac{\big\| \mathds{1}^{\{\N_+ > N\}} \xi\big\|^2}{\|\xi\|^2} \leq C N^{-1}\max_{\xi \in P_Z} \frac{\big\| e^{-\B_4} \N_+^{1/2} \xi\big\|^2}{\|\xi\|^2} \leq C \Theta_0 N^{-1} \underset{N \to \infty}{\longrightarrow} 0,
\end{equation*}
we have $\dim  \mathds{1}^{\{\N_+ \leq N\}}  e^{\B_4} P_Z = L$ for $N$ large enough.
 With Lemma \ref{lm:gronB4}, we obtain 
\begin{align*}
 P_Z e^{-\B_4} \mathds{1}^{\{\N_+ \leq N\}} \bigg( e^{\B_4} \mathcal E_4  e^{-\B_4} + Q_4 \bigg) \mathds{1}^{\{\N_+ \leq N\}} e^{\B_4}P_Z
 	  &\lesssim N^{-1/17} \Theta +  P_Z Q_4 P_Z. 
\end{align*}
To estimate $P_Z Q_4 P_Z$, we use an argument from \cite[Lemma 6.1]{BBCS}: from $P_Z E_\infty P_Z \leq \Theta$, we must have $a_p \xi = 0$ for all $|p| > \Theta^{1/2}$ and $\xi \in Z$. This implies that 
\begin{align*}
\langle \xi, Q_4 \xi \rangle 
	&\leq \sum_{p,q,r} \hat V_N(r) \chi_{|r|\leq \Theta^{1/2}} \|a_{p+r} a_q \xi \| \|a_{p} a_{q+r} \xi \| \\
	& \leq C \Theta^{3/2} N^{-1} \|(\N_++1)\xi\|^2 \leq C \Theta^{7/2} N^{-1} \|\xi\|^2 \leq C N^{-1/17} \|\xi\|^2,
\end{align*}
for all $\xi \in Z$. Applying the min-max principle, we conclude from (\ref{eq:upp1}) that 
\begin{align*}
\lambda_{L} (E_\infty) 
	&\geq  \max_{\xi \in \mathcal U \mathds{1}^{\{\N_+ \leq N\}}  e^{\B_4} Z} \frac{ \langle \xi, \widetilde H_N \xi\rangle}{\|\xi\|^2} - C \Theta N^{-1/17} \\
	&\geq \min_{\dim X = L} \max_{\xi \in X} \frac{ \langle \xi, \tilde  H_N \xi\rangle}{\|\xi\|^2} - C \Theta N^{-1/17} \\
	&\geq \lambda_L (\widetilde H_N) - C \Theta N^{-1/17}.
\end{align*}

\end{proof}

\subsection*{Acknowledgements} We thank P.T. Nam for insightful discussions. B.S. gratefully acknowledge support from the European Research Council through the ERC Advanced Grant CLaQS. Additionally, B. S. acknowledges partial support from the NCCR SwissMAP and from the Swiss National Science Foundation through the Grant ``Dynamical and energetic properties of Bose-Einstein condensates''.

\end{document}